\documentclass[%
 aip,
 amsmath,amssymb,
 reprint,%
]{revtex4-1}

\usepackage{graphicx}% Include figure files
\usepackage{dcolumn}% Align table columns on decimal point
\usepackage{bm}% bold math
\usepackage{amssymb}
\usepackage{amsmath}
\usepackage{amsthm}
\usepackage[utf8]{inputenc}
\usepackage[T1]{fontenc}
\usepackage{mathptmx}
\usepackage{etoolbox}
\usepackage{cleveref}
\renewcommand{\thefootnote}{\fnsymbol{footnote}}

\theoremstyle{plain}
\newtheorem{definition}{Definition}[section]
\newtheorem{lemma}[definition]{Lemma}

\newtheorem{proposition}[definition]{Proposition}
\newtheorem{corollary}[definition]{Corollary}
\theoremstyle{definition}
\newtheorem{remark}[definition]{Remark}
\newtheorem{example}[definition]{Example}

\newcommand{\beq}{\begin{equation}}
\newcommand{\eeq}{\end{equation}}

\newcommand{\ep}{\varepsilon}

\newcommand{\mb}[1]{\mathbf{#1}}

\newcommand{\gb}[1]{\boldsymbol{#1}}
\newcommand{\mc}[1]{\mathcal{#1}}

\DeclareMathAlphabet\mathbit
    \encodingdefault\rmdefault\bfdefault\itdefault
\DeclareOldFontCommand{\bi}{\normalfont\bfseries\itshape}{\mathbit}

\newcommand{\be}{\begin{equation}}
\newcommand{\ee}{\end{equation}}

\def\fakebold#1{\relax\ifvmode\leavevmode\fi%
\ifmmode%
\setbox0=\hbox{$#1$}%
\else%
\setbox0=\hbox{#1}%
\fi%
\kern-.02em\copy0 \kern-\wd0%
\kern .04em\copy0 \kern-\wd0%
\kern-.0125em\raise.02em\box0%
}%

\renewcommand{\geq}{\geqslant}

\newcommand{\hchi}{\hat{\mb{e}}_\chi}

\makeatletter
\def\@email#1#2{%
 \endgroup
 \patchcmd{\titleblock@produce}
  {\frontmatter@RRAPformat}
  {\frontmatter@RRAPformat{\produce@RRAP{*#1\href{mailto:#2}{#2}}}\frontmatter@RRAPformat}
  {}{}
}%
\makeatother
\begin{document}

\preprint{AIP/123-QED}

\title[How stagnation lines move, persist and break up]{How stagnation lines move, persist and break up in fluid flows}
% Force line breaks with \\
\author{Prabakaran Rajamanickam}
  \email{prabakaran.rajamanickam@strath.ac.uk}
\affiliation{Department of Mathematics \& Statistics, University of Strathclyde, Glasgow, UK}%

\begin{abstract}
Stagnation curves of arbitrary shape, on which the velocity of an unsteady incompressible flow $\mb{v}(\mb x,t)$ vanishes, are non-generic yet common to stagnation-point flows; we derive their motion. Differentiating $\mb v=\mb 0$ in time gives $\mb L\mb U=-\mb a$ ($\mb U$ curve velocity, $\mb a=\partial_t\mb v$ acceleration, $\mb L=\nabla\mb v$ velocity gradient, annihilating the tangent and acting on the normal plane via a $2\times2$ block $\mb K$), so dynamics enters only via $\mb a$. The normal velocity is $-\mb K^{-1}$ times the normal part of $\mb a$; tangential solvability is one scalar condition $\Sigma=0$ per point ($\Sigma$: part of $\mb a$ outside the range of $\mb L$). In potential flow, $\mb K$ is the strain rate, the normal velocity is pressure gradient over principal strain rate $\ep$, and persistence needs uniform pressure along it. In Stokes flow, tangential vorticity enters $\mb K$ (lines can be of centre type) and $\Sigma$ involves normal vorticity; in Navier--Stokes flow, $\Sigma$ is set by the instantaneous flow through nonlocal pressure and viscosity. If $\Sigma\neq0$, the line is replaced to first order in time by isolated points at zeros of $\Sigma$ (critical points of $\varphi_t$ in potential flow; possibly none in Stokes flow, so loops can disappear). Symmetry (line an orbit of a rigid-motion group, no velocity along it) suffices for persistence but is unnecessary: flows steady in a rigidly moving frame have $\Sigma\equiv0$. Incompressibility enters the kinematics only through $\mathrm{tr}\,\mb K=0$; for inviscid, irrotational, homentropic compressible flow, the dilatation rate on a line is $\varphi_{tt}/c^2$ ($c$ the sound speed), so lines can also be of node type, persistence requires a uniform enthalpy along the line, the singular normal block arises at ordinary points of a line (compressible case), and acoustic standing waves give exact examples, including a node line that breaks into alternating sinks and saddles. Also given: the degenerate case of a singular normal block (pairs of points born at distance $O(t^{1/2})$), curvature and torsion evolution, closure for flows driven by moving singularities, helical potential stagnation-point flow, and an exact, numerically verified Navier--Stokes example where saddle and centre lines split into four points per period.
\end{abstract}

\maketitle

\section{Introduction}\label{sec:intro}

A stagnation point of a flow is a point at which the velocity vanishes in the chosen frame. In its neighbourhood the velocity is, to leading order, the linear field $\mb x\mapsto\mb L\,\mb x$ generated by the velocity-gradient tensor $\mb L=\nabla\mb v$ evaluated at the point, so the type of $\mb L$ (saddle, node, spiral) decides how fluid approaches, turns and leaves the point. This is why stagnation points control the transport of heat, solute and vorticity in many flows. 

We begin with a count of dimensions, because it explains why a \emph{stagnation line} needs a separate theory. 
In three space dimensions, $\mb v(\mb x,t)=\mb 0$ consists of three scalar equations. In a steady flow, these are three equations for the three coordinates $(x,y,z)$, and they generically have isolated solutions: the stagnation set consists of points. In an unsteady flow, the unknowns are $(x,y,z,t)$, four in all, so the solution set of $\mb v=\mb 0$ in space--time is generically a single curve, the world line of a stagnation point that moves as time passes. A stagnation \emph{line} that exists during a time interval and moves through space is a \emph{surface} in $(x,y,z,t)$-space. That is one dimension more than the generic count allows: along it the three equations cannot be independent, the Jacobian of $\mb v$ with respect to $(x,y,z,t)$ has rank two instead of three, and to first order one equation is a consequence of the other two. (A \emph{stagnation surface} is a set of dimension two in space and three in space--time. It needs two redundancies and occurs even in steady flow; see below.)

In the classical stagnation-point flows with a stagnation line (planar, radial; radial stagnation flow was first introduced by Wang~\cite{wang1974axisymmetric,cunning1998radial,rajamanickam2021steady}) and in the helical flow (discussed herein in Appendix~\ref{app:helix}, (a helically symmetric potential flow, distinct from the centre-type stagnation helix at the core of a helical vortex filament; \cref{rem:filament,rem:helixvortex})), the redundancy is supplied by a symmetry of the flow (translation, rotation, screw motion) together with the absence of velocity along the symmetry direction (\cref{cor:sym}). The shape of the line is then a consequence of the symmetry and is not a free datum. Symmetry is not the only way to obtain a persistent line: in a frame in which the flow is steady, every zero curve is frozen, and the centreline of a vortex core, for example, is stagnant in a frame moving with the core (\cref{rem:filament}).

The count fails more drastically for flows with fewer than three independent velocity components. For a unidirectional flow $\mb v=u(y,z,t)\,\mb e_x$, two of the three equations hold identically, so the zero set $u=0$ is a \emph{surface}: a cylinder over a curve in the $(y,z)$-plane, or a plane if $u=u(y,t)$. Such stagnation surfaces are unavoidable in shear flows. They are a different problem from stagnation lines (their velocity-gradient tensor is nilpotent of rank one, and potential flow admits none; they will be treated in a separate study. The shear-layer plane enters the line theory as the degenerate case $\Delta=0$ (\cref{sec:delta0}). The question we address for lines is what survives when the flow is unsteady in the chosen frame:
\begin{itemize}
\item if a stagnation curve of \emph{arbitrary} shape exists at one instant, how does it move?
\item under what condition does it persist, and what replaces it when it does not?
\end{itemize}
For an isolated stagnation \emph{point}, tracking the zero-velocity set follows a classical kinematic relation governed by an invertible velocity-gradient tensor $\mb L$, where generic bifurcations manifest as saddle--node creation or annihilation events~\cite{perry1987description,chong1990general}. In a compressible flow, the dilatation rate at the point is $\nabla\cdot\mb v=-\partial_t\ln\rho$, which vanishes in steady flow, so compressibility changes the classification of stagnation points only through unsteadiness (\cref{sec:comp}). For a continuous stagnation \emph{line}, however, this formulation breaks down because $\mb L$ inherently possesses a zero eigenvalue along the entire curve. To the author's knowledge, how a stagnation line of arbitrary shape dynamically moves, persists, or disintegrates in an unsteady, unconstrained flow has remained unexplored. The closest specialized study is that of Wilson et al.~\cite{wilson2005nested}, who showed that a localized, fixed obstacle destroys a steady stagnation line, replacing it with a static chain of alternating critical points. In contrast, this paper develops a general kinematic theory for unconstrained stagnation curves of arbitrary geometry, proving that their evolution is dictated by a distinct normal block inversion coupled to a dynamic tangential solvability constraint.

We first work out the consequences kinematically for any incompressible flow (\cref{sec:kin}), and then for two flows where the local acceleration $\mb a$ is determined by the data of the problem: irrotational flow, $\mb v=\nabla\varphi$ (\cref{sec:pot}), where the problem is cleanest because $\varphi_t$ is harmonic too and the pressure follows from Bernoulli's equation. Incompressibility enters the kinematics only through the trace of the normal block (\cref{rem:wherenc}), so \cref{sec:comp} extends the potential-flow results to inviscid, irrotational, homentropic compressible flow, where lines can also be of node type and the singular normal block arises at ordinary points. Stokes flow (\cref{sec:stokes}) follows, where vorticity enters and the comparison shows which features of the potential case are special. A harmonic potential flow is also a Stokes flow (with constant pressure), so the Stokes results contain the potential ones as the case of vanishing vorticity. \Cref{sec:closure} gives the closure for flows driven by moving point singularities. The Navier--Stokes equations are treated in \cref{sec:disc}, where the momentum equation supplies $\mb a$ at the price of making the solvability condition a property of the instantaneous solution, and an exact periodic example with a numerical check is given in \cref{sec:nsex}. The stagnation helix of a helically symmetric potential flow is described in Appendix~\ref{app:helix}. The degenerate case of a singular normal block is treated in \cref{sec:delta0}, and \cref{sec:outlook} gives limitations (including the exclusion of shocks from \cref{sec:comp}) and a note on stagnation surfaces. The main results are as follows:
\begin{itemize}
\item[(i)] For any smooth incompressible flow, the velocity $\mb U$ of a stagnation curve of arbitrary shape satisfies $\mb L\mb U=-\mb a$, so its normal velocity is $-\mb K^{-1}\mb a_n$, where $\mb K$ is the $2\times2$ normal block of $\mb L$ and $\mb a_n$ the normal part of $\mb a$ (\cref{prop:law}). Solvability of the tangential part is one condition $\Sigma=0$ per point of the curve, where $\Sigma=\mb a\cdot(\mb L\mb e_1\times\mb L\mb e_2)$ is the component of $\mb a$ out of the range of $\mb L$; it constrains the \emph{driving} of the flow, not the line. The normal motion determines the evolution of the curvature and torsion of the line (\cref{sec:geom}).
\item[(ii)] In potential flow, the normal velocity is the acceleration field divided by the principal strain rate $\ep$ (lines in weak strain move fast, and $\ep\to0$ signals a bifurcation), and the solvability condition $\Sigma=-\ep^2\partial_s\varphi_t=0$ states that $\varphi_t$, equivalently the pressure (i.e., the stagnation pressure), is uniform along the line (\cref{prop:potlaw}). Persistence on an interval imposes a hierarchy of such conditions, the second of which is given in \cref{prop:second}.
\item[(iii)] When the condition fails at some instant, then to first order in time, the line breaks into isolated stagnation points located at the zeros of $\Sigma$ (\cref{prop:break}). In potential flow, these are the critical points of $\varphi_t|_{\mathcal{C}}$; for a closed loop there are at least two, and they alternate in type (\cref{cor:potbreak}). Exact examples are given in \cref{sec:examples}.
\item[(iv)] In Stokes flow, the normal block contains the tangential vorticity $\omega_T$, so that a stagnation line is of saddle type only if $\ep>|\omega_T|/2$ and of centre type otherwise; the solvability condition involves the vorticity normal to the line; and the zeros of $\Sigma$ can be absent, so that a stagnation loop can dissolve without leaving any stagnation point (\cref{sec:stokes,ex:stokesper}).
\item[(v)] In Navier--Stokes flow, $\mb a=-\nabla p+\nu\nabla^2\mb v+\mb f$ on the line, so $\Sigma$ is a property of the instantaneous solution through the nonlocal pressure, and the first-order loss of a line can be predicted from the initial field alone (\cref{sec:NS}). Viscosity destroys a helical line through a source term proportional to the vorticity along the axis. An exact periodic example (\cref{sec:nsex}) is a sum of solenoidal Laplacian eigenmodes with different decay rates, related to the eddy solutions of Taylor and Walsh. A saddle line and a centre line exist only at $t=0$ and each splits into four points per period, with $\Sigma$ obtained from a Poisson equation solved in closed form. The first-order prediction is confirmed by a numerical solution of the full equations.
\item[(vi)] If the line is an orbit of a one-parameter group of rigid motions and if the velocity has no component along the orbit, the solvability condition holds identically (\cref{cor:sym}). In potential flow, the second requirement is automatic, so the straight, circular and helical stagnation lines persist. In Stokes flow, it is an additional restriction. The condition is sufficient but not necessary: $\Sigma\equiv0$ for any flow that is steady in a rigidly moving
frame, for example a vortex filament that translates and rotates without change of shape
(\cref{rem:filament}). The translating ring and the helix are orbits in that frame, but orbit symmetry is not required (\cref{ex:ellipse}).
\item[(vii)] The equation of motion involves $\mb a$ on the line, which is determined nonlocally by the boundary conditions and singularities of the flow. We give the closure for flows driven by moving point sources and Stokeslets, and count the third-order data of a potential-flow line that the line does fix and that it does not (\cref{sec:closure,prop:third}).
\item[(viii)] Where the normal block is singular ($\Delta=0$), $\mb K^{-1}$ and the normal velocity diverge, but $\Sigma$ stays regular. The number of solvability conditions is $3-\mathrm{rank}\,\mb L$. When the rank drops, a line that fails to persist is replaced by pairs of points born or annihilated at a distance $O(t^{1/2})$ from the point of the line where the rank drops, the line analogue of a saddle--node event (\cref{sec:delta0}); in compressible potential flow this occurs at ordinary points of a line (\cref{rem:fold}).
\item[(ix)] Incompressibility enters the kinematics only through $\mathrm{tr}\,\mb K=0$ (\cref{rem:wherenc}). For inviscid, irrotational, homentropic flow the dilatation rate on a stagnation line is $\theta=\varphi_{tt}/c^2$ (\cref{lem:theta}). The motion law becomes $U_i=-a_i/\lambda_i$ with $\lambda_1+\lambda_2=\theta$, the solvability function is $\Sigma=\lambda_1\lambda_2\,\partial_s\varphi_t$, and persistence needs the enthalpy, equivalently the pressure and density, uniform along the line. Lines are saddles or nodes, never centres, and the breaking law $\lambda_T=t\,\eta''$ is unchanged (\cref{prop:comp}). Exact examples in linear acoustics are a standing-wave node line and a two-frequency wave that breaks it into alternating sinks and saddles (\cref{ex:acoustic,ex:acoustic2}).
\end{itemize}

\subsection{Conventions}\label{sec:conv}

\begin{itemize}
\item Indices $i,j,\dots$ run over the Cartesian components $1,2,3$ and repeated indices are summed. The velocity-gradient tensor is $L_{ij}=\partial v_i/\partial x_j$, so that $(\mb L\mb w)_i=L_{ij}w_j=(\mb w\cdot\nabla)v_i$.
\item $\mb a(\mb x,t)=\partial_t\mb v(\mb x,t)$ is the rate of change of the velocity \emph{at a fixed point of space}, called the local acceleration. It is not the material derivative; the two coincide on a stagnation curve (\cref{sec:curve}).
\item The fluid has unit density. Where the pressure appears, $p$ is the pressure divided by the density, with any conservative body force absorbed into it. In \cref{sec:comp} the density $\rho$ is retained and $p$ is the pressure.
\item \emph{Order of approximation.} Statements obtained by differentiating the identity $\mb v=\mb 0$ along a stagnation curve that exists on a time interval ($\mb L\mb U=-\mb a$ and its consequences, \cref{prop:law,prop:potlaw,prop:second}) are exact at the instant considered; they involve no small parameter. Statements about what replaces a line that does \emph{not} persist (\cref{prop:break,cor:potbreak}) are asymptotic for small $|t|$: positions are given to $O(t)$ (errors $O(t^2)$) and eigenvalues of $\nabla\mb v$ to the order stated. Where we write $O(t^2)$ the symbol refers to the time elapsed since the instant $t=0$ at which the line exists. In \cref{sec:comp}, \cref{lem:theta} and \cref{prop:comp}(i), (ii), (iv) are exact in the same sense (the model is the only approximation), \cref{prop:comp}(iii) is first order in $t$, \cref{rem:fold} is $O(t^{1/2})$, and the acoustic examples are exact within linear acoustics.
\item Each result is introduced by a physical argument, then derived in numbered steps, then followed by a paragraph on its meaning. 
\end{itemize}

\section{Kinematics of a stagnation curve}\label{sec:kin}

\subsection{The curve and its frame}\label{sec:curve}

Let $\mb v(\mb x,t)$ be a smooth velocity field with $\nabla\cdot\mb v=0$, and suppose that for $t$ in some interval the zero set of $\mb v$ contains a smooth curve
\be
\mc C(t)=\{\mb X(\sigma,t)\},
\label{eq:curve}
\ee
parametrised by a label $\sigma$ that is fixed to a point of the curve (a different $\sigma$ is a different point; the same $\sigma$ at a later time is the point that has been \emph{tracked}, in a way we are free to choose).

At a given instant, let $s=s(\sigma,t)$ be the arclength along $\mc C(t)$, so that $\mb t=\partial_s\mb X$ is the unit tangent. The curvature $\kappa\ge0$ and torsion $\tau$ are defined by the Frenet--Serret equations for the frame $(\mb t,\mb n,\mb b)$ (unit tangent, principal normal, binormal, with $\mb b=\mb t\times\mb n$):
\begin{equation}
    \partial_s\mb t=\kappa\mb n, \qquad \partial_s\mb n=-\kappa \mb t + \tau \mb b, \qquad \partial_s\mb b=-\tau\mb n.
    \label{eq:Frenet}
\end{equation}
When $\kappa=0$ the principal normal is not defined; it is then replaced by any smooth orthonormal frame of the normal plane, and only formulas containing $\kappa\mb n$ change trivially (that product is zero).

The tracked point has velocity $\mb U(\sigma,t)=\partial_t\mb X(\sigma,t)$ at fixed $\sigma$. Resolve it into a part along the curve and a part in the normal plane,
\be
\mb U=U_T\,\mb t+\mb U_n,\qquad \mb U_n\cdot\mb t=0 .
\ee
Only $\mb U_n$ changes the \emph{shape and position} of the curve. A displacement $U_T\mb t\,\mathrm{d}t$ slides a point along the curve it already lies on, to first order in $\mathrm{d}t$, and so only changes which label is attached to which point. $U_T$ is therefore a matter of parametrisation, and we are free to choose it; $U_T=0$ is the simplest choice.

Finally the local acceleration on the curve is
\be
\mb a=\partial_t\mb v\big|_{\mc C},
\label{eq:acc}
\ee
the rate of change of the velocity at the fixed point of space $\mb X$. Since $\mb v=\mb 0$ on $\mc C$, the convective term in the material derivative, $(\mb v\cdot\nabla)\mb v$, vanishes there, so $\mb a=\mathrm{D}\mb v/\mathrm{D}t$ on $\mc C$: it is the acceleration of the fluid particle that is instantaneously on the curve.

\begin{remark}[Stagnation curves do not end in the fluid]\label{rem:ends}
A stagnation curve cannot simply stop in the interior of the fluid. Near a point of the curve where the velocity-gradient tensor has rank two, two velocity components vanish together on a single curve through the point, and the third is a smooth function along that curve. If it vanishes at the point, it keeps vanishing along the curve on both sides, since a smooth fluid velocity has no reason to switch off abruptly. So the curve continues. In the fluid interior, stagnation curves therefore form closed loops, extend to infinity or repeat periodically. The only exceptions are isolated points where the rank of $\mb L$ drops (\cref{sec:delta0}), at which curves can merge, split or end, and the boundary of the fluid domain, where a curve can terminate.
At a no-slip wall, the whole wall is a zero set in the frame of the wall, and such lines are not considered.
\end{remark}

\subsection{The velocity-gradient tensor on the curve}\label{sec:grad}

\emph{Step 1: no strain along the line.} The identity $\mb v(\mb X(s),t)=\mb 0$ holds for every $s$ at a given $t$. Differentiating it with respect to $s$ by the chain rule,
\be
0=\frac{\partial v_i}{\partial s}=\frac{\partial v_i}{\partial x_j}\frac{\partial X_j}{\partial s}=L_{ij}t_j,\qquad\text{that is}\qquad \mb L\mb t=\mb 0\quad\text{on }\mc C.
\label{eq:LT}
\ee
\emph{Meaning.} A short fluid element lying along the curve is neither stretched nor rotated by the local flow; because its separation vector $\delta\mb x$ is parallel to the unit tangent ($\delta\mb x \parallel \mb t$), its rate of deformation is given by $\mb L\,\delta\mb x = \mb L(\delta s\,\mb t) = \mb 0$. Consequently, the surrounding linear velocity field $\mb v(\mb X(s)+\gb\delta,t)\simeq \mb L\gb\delta$ (where $\gb\delta$ is small normal distance vector) looks locally identical from one point to the next as you move along the backbone of the curve. This spatial independence to first order highlights the fundamental property that distinguishes a stagnation line from an isolated stagnation point: the velocity-gradient tensor $\mb L$ inherently possesses a zero eigenvalue along the entire curve, meaning its inverse $\mb L^{-1}$ does not exist.

\emph{Step 2: structure of $\mb L$.} At each point of $\mc C$, choose a right-handed orthonormal basis $(\mb e_1,\mb e_2,\mb t)$ with $\mb e_1\times\mb e_2=\mb t$, where $\mb e_1,\mb e_2$ span the normal plane. For any vector $\mb w$, we write $w_\alpha=\mb e_\alpha\cdot\mb w$ ($\alpha=1,2$), $w_T=\mb t\cdot\mb w$, and $\mb w_n=w_1\mb e_1+w_2\mb e_2$. The matrix of $\mb L$ in this basis has entries $\mb e_i\cdot\mb L\mb e_j$ (with $\mb e_3\equiv\mb t$). By \eqref{eq:LT}, its third column is zero, so
\begin{equation}
\begin{split}
\mb L &= \begin{pmatrix} K_{11} & K_{12} & 0 \\ K_{21} & K_{22} & 0 \\ m_1 & m_2 & 0 \end{pmatrix} = \begin{pmatrix} \mb K & \mb 0 \\ \mb m^{\mathsf T} & 0 \end{pmatrix}, \\[1ex]
&\quad K_{\alpha\beta} = \mb e_\alpha \cdot \mb L \mb e_\beta, \qquad m_\beta = \mb t \cdot \mb L \mb e_\beta = \partial_\beta v_T .
\end{split}
\label{eq:Lstruct}
\end{equation}
Equivalently, $\mb L\mb e_\beta=K_{\alpha\beta}\mb e_\alpha+m_\beta\mb t$ and $\mb L\mb t=\mb 0$. Here,
\begin{itemize}
\item $\mb K$ is the $2\times2$ block that describes how the \emph{normal} velocity $\mb v_n$ varies with normal position: a point displaced by a small normal vector $\gb\delta$ from the curve has normal velocity $\mb K\gb\delta$;
\item $\mb m$ is the shear of the \emph{axial} velocity across the line: the same point has axial velocity $\mb m\cdot\gb\delta$.
\end{itemize}
The third row of $\mb L$ is not forced to vanish. Nothing requires $\mb L$ to be symmetric, in general.

\emph{Step 3: trace.} Incompressibility says $\mathrm{tr}\,\mb L=L_{11}+L_{22}+L_{33}=0$. Because $L_{33}=\mb t\cdot\mb L\mb t=0$ by \eqref{eq:LT},
\be
\mathrm{tr}\,\mb K=K_{11}+K_{22}=0 .
\label{eq:trK}
\ee
\emph{Meaning.} With no strain along the line, the area of a small patch of the normal plane is conserved by the normal flow. The normal flow is a pure strain plus rotation.

\emph{Step 4: vorticity.} The vorticity $\gb\omega=\nabla\times\mb v$ has components $\omega_i=\epsilon_{ijk}\partial_jv_k$. In the right-handed basis $(\mb e_1,\mb e_2,\mb t)$, with $\partial_\alpha v_\beta=L_{\beta\alpha}=K_{\beta\alpha}$, $\partial_\beta v_T=m_\beta$ and $\partial_Tv_\beta=L_{\beta T}=0$ (third column of $\mb L$), we have
\be
\omega_T=K_{21}-K_{12},\qquad
\omega_1=m_2,\qquad
\omega_2=-m_1 .
\label{eq:vort}
\ee
So the antisymmetric part of $\mb K$ is the vorticity along the line, and $\mb m=(\,-\omega_2,\omega_1)$ is the vorticity \emph{normal} to the line rotated through a right angle in the normal plane. In particular $\mb m\cdot\gb\omega_n=-\omega_2\omega_1+\omega_1\omega_2=0$. \emph{Meaning.} Axial shear $\partial_\beta v_T$ across a line is the same thing as vorticity normal to it: a line with $\mb m\neq\mb 0$ is crossed by vortex lines.

\emph{Step 5: decomposition of $\mb K$.} Write $\mb K$ as symmetric part plus antisymmetric part. The symmetric part $\mb S=\tfrac12(\mb K+\mb K^{\mathsf T})$ is traceless by \eqref{eq:trK}, so its eigenvalues are $\pm\ep$ with $\ep\ge0$, the \emph{principal strain rate in the normal plane}. The antisymmetric part is $\tfrac12(K_{12}-K_{21})\begin{pmatrix}0&1\\-1&0\end{pmatrix}=\tfrac12\omega_T\mb J$ by \eqref{eq:vort}, with $\mb J=\begin{pmatrix}0&-1\\1&0\end{pmatrix}$ the rotation through $+\pi/2$. Hence
\be
\mb K=\mb S+\tfrac12\omega_T\mb J .
\label{eq:Kdecomp}
\ee
Put $p=K_{11}$, $q=\tfrac12(K_{12}+K_{21})$ and $w=\tfrac12\omega_T$. Then
\[
\mb K=\begin{pmatrix}p&q-w\\ q+w&-p\end{pmatrix},\qquad \ep^2=p^2+q^2.
\]
Its determinant is $\det\mb K=-p^2-(q-w)(q+w)=-(p^2+q^2)+w^2$, that is
\be
\det\mb K=-\Delta,\qquad \Delta:=\ep^2-\tfrac14\omega_T^2 .
\label{eq:detK}
\ee
Squaring $\mb K$ entry by entry: $(\mb K^2)_{11}=p^2+(q-w)(q+w)=\Delta$, $(\mb K^2)_{12}=p(q-w)-(q-w)p=0$, $(\mb K^2)_{21}=(q+w)p-p(q+w)=0$, $(\mb K^2)_{22}=(q+w)(q-w)+p^2=\Delta$. Therefore
\be
\mb K^2=\Delta\,\mb I,\qquad \mb K^{-1}=\frac{\mb K}{\Delta}\quad(\Delta\neq0).
\label{eq:Ksq}
\ee
(Equation \eqref{eq:Ksq} is the Cayley--Hamilton relation for a traceless $2\times2$ matrix.) The eigenvalues of $\mb K$ are $\pm\Delta^{1/2}$.

\emph{Step 6: the flow near the line in the normal plane.} To first order in the distance $\gb\delta$ from the line, the normal velocity is $\mb K\gb\delta$, so fluid particles move according to $\dot{\gb\delta}=\mb K\gb\delta$ and hence $\gb\delta(\vartheta)=e^{\mb K\vartheta}\gb\delta(0)$. Using \eqref{eq:Ksq}, in the series for the exponential, even powers of $\mb K$ reduce to powers of $\Delta$ times $\mb I$ and odd powers to powers of $\Delta$ times $\mb K$, so
\be
\gb\delta(\vartheta)=\Big[\cosh(\Delta^{1/2}\vartheta)\,\mb I+\frac{\sinh(\Delta^{1/2}\vartheta)}{\Delta^{1/2}}\,\mb K\Big]\gb\delta(0)\quad(\Delta>0),
\label{eq:expK}
\ee
and the same with $\cosh\to\cos$, $\sinh\to\sin$, $\Delta^{1/2}\to(-\Delta)^{1/2}$ if $\Delta<0$. (Check: for $\mb K=\mathrm{diag}(\ep,-\ep)$ this gives $\mathrm{diag}(e^{\ep\vartheta},e^{-\ep\vartheta})$.) The curve is therefore of
\begin{itemize}
\item \emph{saddle type} if $\ep>|\omega_T|/2$ ($\Delta>0$): nearby fluid is drawn in along one direction and expelled along another at the exponential rate $\Delta^{1/2}$;
\item \emph{centre type} if $\ep<|\omega_T|/2$ ($\Delta<0$): nearby fluid circulates around the line with angular frequency $(-\Delta)^{1/2}$ and does not escape; the rotation wins over the strain;
\item \emph{degenerate} if $\ep=|\omega_T|/2$ ($\Delta=0$): $\mb K$ is singular and nearby fluid moves linearly in time.
\end{itemize}
In irrotational flow, $\gb\omega=\mb 0$, so $\mb m=\mb 0$, $\mb K=\mb S$ and $\Delta=\ep^2>0$: a stagnation line of potential flow is always of saddle type. Centre-type lines require vorticity.

\subsection{The equation of motion}

Let $\mc C$ be a stagnation curve at $t=0$. At a short time $t$ later, the fluid at the point $\mb X+\gb\delta$ near the curve has velocity, by Taylor expansion in time (using $\mb v=\mb 0$ on $\mc C$ at $t=0$) and in position,
\be
\mb v(\mb X+\gb\delta,t)=t\,\mb a+\mb L\gb\delta+O(t^2,\,t|\gb\delta|,\,|\gb\delta|^2).
\label{eq:Taylor}
\ee
The first term is the velocity gained by the fluid on the curve under the local acceleration; the second is the velocity produced by displacing the point. For the displaced point to be stagnant at time $t$, $\gb\delta=t\mb U$ must be chosen so that $t(\mb a+\mb L\mb U)=O(t^2)$. This is the condition $\mb a+\mb L\mb U=\mb 0$ below. The point is that $\mb L$ is singular: $\mb L\mb U$ cannot produce a velocity out of the plane spanned by the columns of $\mb L$, and an acceleration with a component out of that plane cannot be compensated by moving the point. The component of $\mb a$ out of that plane is measured by the \emph{solvability function}
\be
\Sigma:=\mb a\cdot(\mb L\mb e_1\times\mb L\mb e_2)=\mb m\cdot\mb K\mb a_n-\Delta\,a_T ,
\label{eq:Sigma}
\ee
where the cross product is taken in the right-handed basis $(\mb e_1,\mb e_2,\mb t)$. The vector $\mb L\mb e_1\times\mb L\mb e_2$ is normal to the plane spanned by the columns of $\mb L$; it equals $\mathrm{cof}(\mb L)\,\mb t$, so $\Sigma$ does not depend on the choice of $(\mb e_1,\mb e_2)$ in the normal plane. The second form follows by expanding the cross product with \eqref{eq:Lstruct}, using $\mathrm{tr}\,\mb K=0$ and noting that its $\mb t$-component is $\det\mb K=-\Delta$. $\Sigma$ is a polynomial in the entries of $\mb L$ and $\mb a$; it involves no inverse and is defined whether or not $\Delta$ vanishes. The following proposition makes the motion law and the role of $\Sigma$ precise.

\begin{proposition}[Motion law and solvability]\label{prop:law}
Let $\mc{C}(t)$ be a stagnation curve for $t$ in an interval, with $\Delta\neq0$ everywhere on it, and let $\mb{U}=\partial_t\mb{X}$ be the velocity of a tracked point (at fixed label $\sigma$). Then, exactly, at every point of $\mc C$:
\begin{itemize}
\item[(i)] $\mb a+\mb L\mb U=\mb 0$.
\item[(ii)] \emph{Normal velocity.} $\mb U_n=-\mb K^{-1}\mb a_n=-\mb K\mb a_n/\Delta$.
\item[(iii)] \emph{Solvability along the curve.} $\Sigma=0$, that is,
\be
\Sigma=\mb m\cdot\mb K\mb a_n-\Delta\,a_T=0 .
\label{eq:solv}
\ee
Equivalently, $a_T+\mb m\cdot\mb U_n=0$, that is $a_T+\mb t\cdot(\gb\omega\times\mb U)=0$, because $\Sigma=-\Delta\,(a_T+\mb m\cdot\mb U_n)$.
\item[(iv)] The tangential velocity $U_T$ is arbitrary (it only reparametrises $\mc C$).
\item[(v)] The line element $\mathrm{d}s$ of a tracked segment changes at the rate $\partial_t(\mathrm{d}s)=(\partial_sU_T-\kappa\,\mb n\cdot\mb U)\,\mathrm{d}s$.
\end{itemize}
\end{proposition}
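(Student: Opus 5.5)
The plan is to differentiate the identity $\mb v(\mb X(\sigma,t),t)=\mb 0$ in time at fixed label $\sigma$, and then read off each item from the block structure \eqref{eq:Lstruct}. By the chain rule, $0=\partial_t v_i+\partial_j v_i\,\partial_t X_j=a_i+L_{ij}U_j$, which is (i) exactly, with no small parameter. The Taylor argument \eqref{eq:Taylor} only motivated this; the proof uses the existence of $\mc C(t)$ on an interval.

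Next I decompose (i) in the basis $(\mb e_1,\mb e_2,\mb t)$. Since $\mb L\mb t=\mb 0$, $\mb L\mb U=\mb L\mb U_n$. By \eqref{eq:Lstruct}, the normal components give $\mb K\mb U_n=-\mb a_n$, and the tangential component gives $\mb m\cdot\mb U_n=-a_T$.

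For (ii), $\Delta\neq0$ makes $\mb K$ invertible, and \eqref{eq:Ksq} gives $\mb U_n=-\mb K^{-1}\mb a_n=-\mb K\mb a_n/\Delta$.

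For (iii), I substitute this $\mb U_n$ into the tangential component to get $a_T-\mb m\cdot\mb K\mb a_n/\Delta=0$. Multiplying by $-\Delta$ gives $\Sigma=\mb m\cdot\mb K\mb a_n-\Delta a_T=0$. This also yields the identity $\Sigma=-\Delta(a_T+\mb m\cdot\mb U_n)$, once $\mb U_n$ is taken to mean $-\mb K^{-1}\mb a_n$. To rewrite $\mb m\cdot\mb U_n$ as $\mb t\cdot(\gb\omega\times\mb U)$, I use \eqref{eq:vort}:
\[
\mb t\cdot(\gb\omega\times\mb U)=\omega_1U_2-\omega_2U_1=m_2U_2+m_1U_1.
\]
This uses only the normal components, since the tangential parts drop out of the $\mb t$-component of the cross product.

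The one step that needs care is confirming that the second expression for $\Sigma$ in \eqref{eq:Sigma} agrees with the cross-product definition. The columns are $\mb L\mb e_\beta=K_{1\beta}\mb e_1+K_{2\beta}\mb e_2+m_\beta\mb t$. Their cross product has $\mb t$-component $\det\mb K=-\Delta$. Its normal components are $K_{21}m_2-m_1K_{22}$ and $m_1K_{12}-K_{11}m_2$. Using $K_{22}=-K_{11}$, these equal $(\mb K^{\mathsf T}\mb m)_1$ and $(\mb K^{\mathsf T}\mb m)_2$. So the cross product is $\mb K^{\mathsf T}\mb m-\Delta\,\mb t$ in normal/tangent form, and dotting with $\mb a$ gives $\mb m\cdot\mb K\mb a_n-\Delta a_T$. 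I expect this sign and index bookkeeping to be the main obstacle, though it is routine.

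For (iv), $\mb L\mb t=\mb 0$ means $U_T$ drops out of (i) entirely, so it is unconstrained. Geometrically, changing the tangential component corresponds to relabelling $\sigma\mapsto\tilde\sigma(\sigma,t)$, which adds a multiple of $\partial_s\mb X=\mb t$ to $\partial_t\mb X$.

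For (v), I write $\mathrm{d}s=|\partial_\sigma\mb X|\,\mathrm{d}\sigma$ and differentiate at fixed $\sigma$. This gives $\partial_t|\partial_\sigma\mb X|=\mb t\cdot\partial_\sigma\mb U$, so $\partial_t(\mathrm{d}s)=\mb t\cdot\partial_s\mb U\,\mathrm{d}s$. Expanding $\mb U=U_T\mb t+\mb U_n$ and using Frenet--Serret \eqref{eq:Frenet}:
\[
\mb t\cdot\partial_s(U_T\mb t)=\partial_sU_T \qquad\text{since } \mb t\cdot\partial_s\mb t=0,
\]
and
\[
\mb t\cdot\partial_s\mb U_n=-\mb U_n\cdot\partial_s\mb t=-\kappa\,\mb n\cdot\mb U,
\]
where the second follows by differentiating $\mb t\cdot\mb U_n=0$. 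Adding these gives (v). The case $\kappa=0$ is covered by the convention that $\kappa\mb n=\mb 0$.
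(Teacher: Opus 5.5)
Your proposal is correct and follows the paper's proof essentially step by step. You differentiate $\mb v(\mb X(\sigma,t),t)=\mb 0$ at fixed $\sigma$, drop $U_T$ via $\mb L\mb t=\mb 0$, split into $\mb K\mb U_n=-\mb a_n$ and $\mb m\cdot\mb U_n=-a_T$, invert with $\mb K^{-1}=\mb K/\Delta$, and obtain (v) from $|\partial_\sigma\mb X|$ and Frenet--Serret, just as the paper does. Your explicit check that $\mb L\mb e_1\times\mb L\mb e_2=\mb K^{\mathsf T}\mb m-\Delta\,\mb t$, which uses $\mathrm{tr}\,\mb K=0$, is also correct; it fills in the expansion that the paper only asserts when it defines $\Sigma$.
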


\begin{proof}
\emph{Step 1 (differentiate the identity).} For all $t$ in the interval, $\mb v(\mb X(\sigma,t),t)=\mb 0$ at fixed $\sigma$. The $t$-derivative, by the chain rule, is $\partial_t\mb v+U_j\,\partial_j\mb v=\mb a+\mb L\mb U=\mb 0$, which is (i). This agrees with the physical argument: the velocity of the tracked point, \eqref{eq:Taylor} with $\gb\delta=t\mb U$, is $t(\mb a+\mb L\mb U)$ to first order.

\emph{Step 2 (remove the tangential velocity).} Since $\mb L\mb t=\mb 0$, the component $U_T\mb t$ of $\mb U$ gives no contribution: $\mb L\mb U=\mb L\mb U_n$. So (i) reads $\mb L\mb U_n=-\mb a$.

\emph{Step 3 (split into components).} Insert the structure \eqref{eq:Lstruct}: $\mb L\mb U_n=\mb K\mb U_n+(\mb m\cdot\mb U_n)\,\mb t$, with $\mb K\mb U_n$ in the normal plane and $(\mb m\cdot\mb U_n)\mb t$ along the tangent. The equation $\mb L\mb U_n=-\mb a$ therefore splits into two,
\begin{align}
&\mb K\mb U_n=-\mb a_n\quad(\text{two scalar equations}),\\ &\mb m\cdot\mb U_n=-a_T\quad(\text{one scalar equation}).
\end{align}

\emph{Step 4 (solve the normal part).} For $\Delta\neq0$, $\mb K$ is invertible and $\mb K^{-1}=\mb K/\Delta$ by \eqref{eq:Ksq}, so $\mb U_n=-\mb K^{-1}\mb a_n=-\mb K\mb a_n/\Delta$. This is (ii). It determines $\mb U_n$ \emph{uniquely}.

\emph{Step 5 (the left-over equation).} $\mb U_n$ is now known, and the tangential equation has no free unknown left. It is a condition on $\mb a$ and $\mb L$: $a_T+\mb m\cdot\mb U_n=a_T-\mb m\cdot\mb K^{-1}\mb a_n=0$. Multiplying by $-\Delta\neq0$ and using $\mb K^{-1}=\mb K/\Delta$ gives $\mb m\cdot\mb K\mb a_n-\Delta a_T=\Sigma=0$, that is \eqref{eq:solv}. By \eqref{eq:vort}, $\mb m\cdot\mb U_n=m_1U_1+m_2U_2=-\omega_2U_1+\omega_1U_2$, which is the tangential component of $\gb\omega\times\mb U$ in a right-handed basis (its $\mb t$-component is $\omega_1U_2-\omega_2U_1$). This proves (iii).

\emph{Step 6 (the line element).} Write $\mathrm{d}s=|\partial_\sigma\mb X|\,\mathrm{d}\sigma$. Then $\partial_t|\partial_\sigma\mb X|=\partial_\sigma\mb X\cdot\partial_\sigma\mb U/|\partial_\sigma\mb X|=|\partial_\sigma\mb X|\,\mb t\cdot\partial_s\mb U$. Moreover, $\mb t\cdot\partial_s\mb U=\partial_s(\mb t\cdot\mb U)-\mb U\cdot\partial_s\mb t=\partial_sU_T-\kappa\,\mb n\cdot\mb U$, by \eqref{eq:Frenet}. This proves (v). Part (iv) was shown in \cref{sec:curve}.
\end{proof}

\emph{Meaning.}
\begin{itemize}
\item \emph{Geometry of the solvability condition.} The range of $\mb L$, the set of all vectors $\mb L\mb w$, is the plane $\mc P=\{\mb K\mb w_n+(\mb m\cdot\mb w_n)\mb t\}$ spanned by $\mb L\mb e_1$ and $\mb L\mb e_2$ (a vector $\mb w_n$ of the normal plane together with the axial component tied to it by $\mb m$), and its normal is $\mb L\mb e_1\times\mb L\mb e_2$. Thus the equation $\mb L\mb U=-\mb a$ can be solved precisely when $-\mb a$ lies in $\mc P$, that is when the component of $\mb a$ along this normal vanishes: $\Sigma=0$. (For $\Delta\neq0$ this is the statement that $w_T=\mb m\cdot\mb K^{-1}\mb w_n$ for $\mb w=-\mb a$.) This is the Fredholm alternative: solvability of a singular linear system requires the right-hand side to be orthogonal to the null space of the transposed matrix.
\item \emph{Why the line cannot adjust.} The acceleration $\mb a$ gives every fluid particle on the line an extra velocity $t\mb a$. Displacing the point by $\gb\delta$ adds $\mb L\gb\delta\in\mc P$. If $\mb a\notin\mc P$, the point cannot be made stagnant at any displacement: the sum cannot vanish, and the stagnation set cannot remain a curve. Fluid on the curve acquires a velocity of order $t$ out of the plane, and the line is lost (\cref{sec:loss}). When $\Sigma=0$, the acceleration of the fluid on the line is exactly balanced by the linear flow produced by a normal displacement of the line, and the line persists and moves with normal velocity $\mb U_n$.
\item \emph{Which quantity constrains which.} The solvability condition $\Sigma=0$ involves $\mb a$, $\mb K$ and $\mb m$, but not the shape of the curve. It restricts what the driving of the flow may do to a line that exists.
\item \emph{Stretching.} The tangential velocity $U_T$ is a choice of
labelling and has no physical content: the term $\partial_sU_T$ in (v) only records labelled points sliding along a curve that is not itself changing, like beads on a wire, and it integrates to zero around a closed loop. The length of the \emph{geometric} curve changes only through its normal motion. A part of the line that moves towards its centre of curvature (the direction of $\mb n$) is shortened, and a part that moves away from it is lengthened, at the rate $-\kappa\,\mb n\cdot\mb U_n$ per unit length. For a closed loop, the total length changes at the rate $\mathrm dL/\mathrm dt=-\oint\kappa\,\mb n\cdot\mb U_n\,\mathrm ds$. (Check: a circle of radius $R$ moving inwards at speed $w$ has $\kappa=1/R$, $\mb n\cdot\mb U_n=w$, so $\mathrm dL/\mathrm dt=-2\pi w=\mathrm d(2\pi R)/\mathrm dt$.)
\end{itemize}

\begin{remark}[Weak strain and degenerate normal block]\label{rem:weak}
By (ii), the normal velocity is $\mb K\mb a_n/\Delta$, so it grows like $\Delta^{-1}$ (times $\|\mb K\|$) as $\Delta\to0$; in potential flows, $\mb K=\mathrm{diag}(\ep,-\ep)$ and this is $|\mb a_n|/\ep$. The physical reason is that the restoring linear flow $\mb K\gb\delta$, which balances the acceleration, becomes weak, so a small acceleration must be balanced by a large displacement. Where $\Delta\to0$ along the line, the first-order description fails and the stagnation set can bifurcate there (a line can merge with another line or become a different point set). The equation of motion is valid only away from such points. The solvability function $\Sigma$ itself, \eqref{eq:Sigma}, is regular at $\Delta=0$: the singularity comes only from $\mb K^{-1}$. The degenerate case $\Delta=0$, in which $\mb K$ is nilpotent, is analysed in \cref{sec:delta0}: the number of solvability conditions grows with the rank deficiency of $\mb L$, and the $\Delta^{-1}$ divergence of $\mb U_n$ is replaced by displacements of order $t^{1/2}$ and by pairs of stagnation points created or annihilated along the line.
\end{remark}

\begin{remark}[Where incompressibility enters]\label{rem:wherenc}
Incompressibility is used only in Step~3, through \eqref{eq:trK}. For a general smooth flow, put
$\theta:=\nabla\cdot\mb v=\mathrm{tr}\,\mb L=\mathrm{tr}\,\mb K$ on $\mc C$ (the dilatation rate; $-\theta$ is the invariant $P$ of Chong, Perry and Cantwell~\cite{chong1990general}).
Write $\mb K=\tfrac12\theta\mb I+\mb S_0+\tfrac12\omega_T\mb J$, with $\mb S_0$ symmetric, traceless, with eigenvalues $\pm\ep$, and $\Delta=\ep^2-\tfrac14\omega_T^2$ as before.
The Cayley--Hamilton relation gives $\det\mb K=\tfrac14\theta^2-\Delta$, $\mb K^{-1}=(\theta\mb I-\mb K)/\det\mb K$, and the eigenvalues of $\mb K$ are $\tfrac12\theta\pm\Delta^{1/2}$.
\Cref{prop:law} holds with
\[
\mb U_n=(\mb K-\theta\mb I)\mb a_n/\det\mb K,\qquad
\Sigma=\det\mb K\,a_T+\mb m\cdot(\mb K-\theta\mb I)\mb a_n,
\]
and in \cref{prop:break} the eigenvalues at $\mb X_j$ are $\tfrac12\theta\pm\Delta^{1/2}+O(t)$ and $\lambda_T=t\,\Sigma'(s_j)/\det\mb K+O(t^2)$, with $\det\nabla\mb v=t\Sigma'(s_j)+O(t^2)$ as before.
The line is a saddle if $\Delta>\theta^2/4$, a node if $0\le\Delta<\theta^2/4$, and a focus (centre if $\theta=0$) if $\Delta<0$.
\Cref{cor:sym} is unchanged. The consequences for potential flow are worked out in \cref{sec:comp}.
\end{remark}

\subsection{Evolution of curvature and torsion}\label{sec:geom}

Let the curve move with $\mb U=U_N\mb n+U_B\mb b$ (we take $U_T=0$; a tangential part only adds the
convective terms $U_T\partial_s\kappa$, etc.). By \cref{prop:law}(v), $\partial_t(\mathrm ds)=-\kappa U_N\,\mathrm ds$, so
$[\partial_t,\partial_s]=\kappa U_N\,\partial_s$ since $\partial_t\partial_s=-\partial_t(ds)\partial_s + \partial_s\partial_t$. Put
\[
A:=\partial_sU_N-\tau U_B,\qquad B:=\partial_sU_B+\tau U_N .
\]

\begin{proposition}[Geometry under normal motion]\label{prop:geom}
For $\kappa>0$,
\begin{align}
\partial_t\mb t&=A\,\mb n+B\,\mb b,\\
\partial_t\kappa&=\partial_sA-\tau B+\kappa^2U_N
=\partial_s^2U_N+(\kappa^2-\tau^2)U_N \nonumber \\ &\quad-2\tau\,\partial_sU_B-(\partial_s\tau)\,U_B,\label{eq:kappat}\\
\partial_t\tau&=\partial_s\big(\omega_F\big)+\kappa B+\kappa\tau U_N,\qquad
\omega_F:=\frac{\partial_sB+\tau A}{\kappa},\label{eq:taut}
\end{align}
where $\partial_t\mb n=-A\,\mb t+\omega_F\,\mb b$ (so $\omega_F$ is the rotation rate of the Frenet frame about $\mb t$).
\end{proposition}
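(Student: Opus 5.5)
The plan is to differentiate the Frenet--Serret relations \eqref{eq:Frenet} in time at fixed label $\sigma$. The commutator $[\partial_t,\partial_s]=\kappa U_N\,\partial_s$, which follows from \cref{prop:law}(v) with $U_T=0$, converts time derivatives of $s$-derivatives into $s$-derivatives of time derivatives.

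\emph{Tangent.} First compute $\partial_t\mb t=\partial_t\partial_s\mb X=\partial_s\mb U+\kappa U_N\mb t$. Expanding $\partial_s(U_N\mb n+U_B\mb b)$ with \eqref{eq:Frenet} gives
\[
\partial_s\mb U=-\kappa U_N\mb t+(\partial_sU_N-\tau U_B)\mb n+(\partial_sU_B+\tau U_N)\mb b .
\]
The $\mb t$-components cancel, leaving $\partial_t\mb t=A\mb n+B\mb b$. Since $(\mb t,\mb n,\mb b)$ stays orthonormal, $\partial_t$ of the frame is a rotation. We may therefore write $\partial_t\mb n=-A\mb t+\omega_F\mb b$ and $\partial_t\mb b=-B\mb t-\omega_F\mb n$, where $\mb n\cdot\partial_t\mb t=A$ and $\mb b\cdot\partial_t\mb t=B$ fix the $\mb t$-components by antisymmetry, and $\omega_F:=\mb b\cdot\partial_t\mb n$ is so far unknown.

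\emph{Curvature.} Apply $\partial_t$ to $\partial_s\mb t=\kappa\mb n$. The left side equals $\partial_s\partial_t\mb t+\kappa U_N\partial_s\mb t$. Expanding $\partial_s(A\mb n+B\mb b)$ by Frenet gives
\[
-\kappa A\,\mb t+(\partial_sA-\tau B+\kappa^2U_N)\,\mb n+(\tau A+\partial_sB)\,\mb b .
\]
The right side is $\partial_t\kappa\,\mb n+\kappa(-A\mb t+\omega_F\mb b)$. The $\mb t$-components agree automatically, which is a useful consistency check. The $\mb n$-component gives $\partial_t\kappa=\partial_sA-\tau B+\kappa^2U_N$. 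Substituting $A$ and $B$ then yields the expanded form in \eqref{eq:kappat}, since $\partial_sA=\partial_s^2U_N-\tau\partial_sU_B-(\partial_s\tau)U_B$ and $-\tau B=-\tau\partial_sU_B-\tau^2U_N$. The $\mb b$-component gives $\kappa\,\omega_F=\partial_sB+\tau A$, so $\omega_F$ is as stated; this is where $\kappa>0$ is needed.

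\emph{Torsion.} Apply $\partial_t$ to $\partial_s\mb b=-\tau\mb n$ and take the $\mb n$-component. Use
\[
\partial_t\partial_s\mb b=\partial_s(\partial_t\mb b)+\kappa U_N(-\tau\mb n),\qquad \partial_t\mb b=-B\mb t-\omega_F\mb n .
\]
Then $\mb n\cdot\partial_s(-B\mb t-\omega_F\mb n)=-B\kappa-\partial_s\omega_F$, using $\mb n\cdot\partial_s\mb t=\kappa$ and $\mb n\cdot\partial_s\mb n=0$. The $\mb n$-component of the right side is $-\partial_t\tau$, because $\mb n\cdot\partial_t\mb n=0$. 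Equating, $-\partial_t\tau=-\kappa B-\partial_s\omega_F-\kappa\tau U_N$, which is \eqref{eq:taut}.

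The main obstacle is bookkeeping. One must apply the commutator with the right sign, since $\partial_t\partial_s=\partial_s\partial_t+\kappa U_N\partial_s$ because $\mathrm{d}s$ shrinks at rate $\kappa U_N$. One must also confirm that the frame-rotation coefficients are consistent. I would check the result against a shrinking circle, where $\tau=0$, $U_B=0$ and constant $U_N=w$ give $\partial_t\kappa=\kappa^2 w$, matching $\mathrm{d}(1/R)/\mathrm{d}t$ for $\dot R=-w$.
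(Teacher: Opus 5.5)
Your proposal is correct and follows essentially the same route as the paper. Both use the commutator $[\partial_t,\partial_s]=\kappa U_N\partial_s$, take the $\mb n$- and $\mb b$-components of $\partial_t(\kappa\mb n)=\partial_s(\partial_t\mb t)+\kappa^2U_N\mb n$ for $\partial_t\kappa$ and $\omega_F$, and time-differentiate a Frenet relation for $\partial_t\tau$. The one cosmetic difference is that you take the $\mb n$-component of $\partial_t(\partial_s\mb b)$, whereas the paper takes the $\mb b$-component of $\partial_t(\partial_s\mb n)$; the two are equivalent by the antisymmetry of the frame rotation.
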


\begin{proof}
Using $[\partial_t,\partial_s]$ above, $\partial_t\mb t=\partial_s\mb U+\kappa U_N\mb t$, and \eqref{eq:Frenet}
gives the first line. Next $\partial_t(\kappa\mb n)=\partial_s(\partial_t\mb t)+\kappa U_N\,\kappa\mb n$.
The $\mb n$-component gives \eqref{eq:kappat}, the $\mb b$-component gives $\kappa\,\mb b\cdot\partial_t\mb n=\partial_sB+\tau A$, and
the $\mb t$-component is consistent with $\mb t\cdot\partial_t\mb n=-A$. Finally $\partial_t(\partial_s\mb n)=\partial_s(\partial_t\mb n)+\kappa U_N\partial_s\mb n$
with $\partial_s\mb n=-\kappa\mb t+\tau\mb b$; its $\mb b$-component is \eqref{eq:taut}.
\end{proof}

\emph{Check.} A circle of radius $R$ with $U_N=w$ (towards the centre) has $\partial_t\kappa=\kappa^2w=-\dot R/R^2$ with $\dot R=-w$.
For a planar curve ($\tau=U_B=0$), $\partial_t\kappa=\partial_s^2U_N+\kappa^2U_N$.

\begin{remark}[Not a closed system]
In \cref{prop:geom}, $U_N,U_B$ are given by \cref{prop:law}(ii) in terms of $\mb K$ and $\mb a_n$. In potential flow
they involve $\ep$, $\alpha$ and $\nabla\varphi_t$ (strain rate and its orientation angle and acceleration), and the evolution of $\ep,\alpha$ requires third-order data of
$\varphi$ that the line does not determine (\cref{prop:third}). So \eqref{eq:kappat}--\eqref{eq:taut} do not give an
autonomous system for $(\kappa,\tau)$. They are useful when $\mb U_n$ is available, as when the line is advanced
numerically with the singular representation of \cref{sec:closure}. The expression \eqref{eq:taut} is singular where $\kappa=0$.
\end{remark}

\subsection{Loss of a line}\label{sec:loss}

Because the solvability condition constrains the driving of the flow via $\mb a$, a generic unsteady perturbation of a stagnation line violates it. We now find what replaces the line, for small $t$. At a point $s$ of $\mc C$, we look at the normal plane. There the two normal components of the velocity can always be made to vanish at one point of the plane, a displacement $\gb\delta^*(s,t)$ from the curve, because $\mb K$ is invertible. As $s$ varies, these points trace out a new curve $\mc C^*(t)$, the zero set of the normal velocity. Along $\mc C^*$ only the axial velocity is left, $v_T=-t\,\Sigma(s)/\Delta(s)+O(t^2)$ (below). The true stagnation points are the points of $\mc C^*$ at which the axial velocity also vanishes, and these are the zeros of $\Sigma$.

\begin{proposition}[Loss of a line, first order in $t$]\label{prop:break}
Let $\mc{C}=\mc{C}(0)$ be a closed stagnation curve of $\mb v(\cdot,0)$ (or one period of a periodic curve), with $\Delta\neq0$ on it, and let $\Sigma(s)$ be the function \eqref{eq:Sigma} evaluated at $t=0$. Suppose that all zeros $s_1,\dots,s_N$ of $\Sigma$ are simple ($N=0$ is allowed). Then there exists $t_0>0$ such that for $0<|t|<t_0$, the zero set of $\mb v(\cdot,t)$ in a neighbourhood of $\mc C$ consists of exactly $N$ isolated points,
\be
\mb{X}_j(t)=\mb{X}(s_j)+t\left(-\mb K^{-1}\mb a_n+c_j\mb t\right)+O(t^2),
\label{eq:pts}
\ee
where $\mb K$ and $\mb a_n$ are evaluated at $s_j$ and $c_j$ is a tangential drift speed that is fixed only at the next order in $t$. The eigenvalues of $\nabla\mb v$ at $\mb X_j(t)$ are
\be
\lambda_{1,2}=\pm\Delta^{1/2}+O(t),\qquad \lambda_T=-\frac{t\,\Sigma'(s_j)}{\Delta}+O(t^2),
\label{eq:eig}
\ee
where $\Delta$ is evaluated at $s_j$ and $\lambda_{1,2}$ are a conjugate imaginary pair if $\Delta<0$.
\end{proposition}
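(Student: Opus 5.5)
The plan is a Lyapunov--Schmidt reduction near $\mc C$, using the tubular coordinates $(s,\gb\delta)$ with $\mb x=\mb X(s)+\delta_1\mb e_1(s)+\delta_2\mb e_2(s)$. These coordinates are a diffeomorphism on a thin tube around the closed (or periodic) curve. Write $\mb F(s,\gb\delta,t)=\mb v(\mb X(s)+\gb\delta,t)$ and split it into its normal part $\mb F_n$ (two components) and its tangential part $F_T=\mb t(s)\cdot\mb F$. The Taylor expansion \eqref{eq:Taylor}, together with the structure \eqref{eq:Lstruct}, gives
\[
\mb F_n=t\,\mb a_n+\mb K\gb\delta+O(t^2,t|\gb\delta|,|\gb\delta|^2),\qquad
F_T=t\,a_T+\mb m\cdot\gb\delta+O(t^2,t|\gb\delta|,|\gb\delta|^2),
\]
with these bounds holding uniformly in $s$ by compactness.

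\emph{Step 1 (solve the normal equations).} At $(\gb\delta,t)=(\mb 0,0)$ we have $\partial_{\gb\delta}\mb F_n=\mb K(s)$, which is invertible because $\Delta\neq0$. The implicit function theorem, applied uniformly in $s$ (by compactness and smoothness), gives a unique smooth $\gb\delta^*(s,t)$ in a tube of fixed radius $r$ with $\mb F_n(s,\gb\delta^*,t)=\mb 0$ for $|t|<t_1$. Its expansion is $\gb\delta^*=-t\mb K^{-1}\mb a_n+O(t^2)$. Uniqueness is what matters here: every zero of $\mb v$ in the tube lies on the curve $\mc C^*(t)=\{\gb\delta=\gb\delta^*(s,t)\}$.

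\emph{Step 2 (reduced scalar equation).} Substituting into $F_T$ gives the bifurcation function
\[
g(s,t):=F_T(s,\gb\delta^*(s,t),t)=t\,(a_T-\mb m\cdot\mb K^{-1}\mb a_n)+O(t^2)=-t\,\Sigma(s)/\Delta(s)+O(t^2),
\]
where the last equality uses $\mb K^{-1}=\mb K/\Delta$ from \eqref{eq:Ksq} and the identity $\Sigma=-\Delta(a_T+\mb m\cdot\mb U_n)$ from \cref{prop:law}. Because $g$ is smooth and vanishes at $t=0$, we can write $g=t\,h(s,t)$ with $h$ smooth and $h(s,0)=-\Sigma(s)/\Delta(s)$. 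For $t\neq0$ the zeros of $\mb v$ in the tube are therefore exactly the points of $\mc C^*$ at which $h(s,t)=0$.

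\emph{Step 3 (count and locate the zeros).} This is the main point and the only place where the simplicity hypothesis is used. Since $h(\cdot,0)$ has simple zeros $s_j$, the implicit function theorem gives unique smooth branches $s_j(t)=s_j+c_j t+O(t^2)$, where $c_j=-\partial_th(s_j,0)/\partial_sh(s_j,0)$ involves second-order-in-$t$ data (which is why $c_j$ is fixed only at the next order). Away from small neighbourhoods of the $s_j$, $|h(s,0)|$ is bounded below by compactness, so $h(s,t)\neq0$ there for small $t$. Shrinking $t_0$ then yields exactly $N$ zeros; when $N=0$, $h$ has no zeros at all. The positions follow from $\mb X(s_j(t))+\gb\delta^*(s_j(t),t)$ expanded to $O(t)$, which gives \eqref{eq:pts} with drift $c_j\mb t$ (up to the arclength normalization).

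\emph{Step 4 (eigenvalues).} At $\mb X_j(t)$ the gradient is $\nabla\mb v=\mb L(s_j)+O(t)$. Its two eigenvalues $\pm\Delta^{1/2}$ are simple (or a conjugate pair) and move by $O(t)$. The third eigenvalue, $\lambda_T$, perturbs the simple zero eigenvalue, so I would obtain it from $\lambda_1\lambda_2\lambda_T=\det\nabla\mb v$.

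To compute $\det\nabla\mb v$, work in the coordinates $(\gb\delta,s)$. Along $\mc C^*$, the normal equations give $\mb F_n=\mb 0$, so the tangential derivative of the full field on $\mc C^*$ is $\partial_s g\,\mb t$, with $\partial_sg=-t\Sigma'/\Delta+O(t^2)$ at $s_j$ (using $\Sigma(s_j)=0$). The Jacobian in the adapted frame is then block-triangular up to $O(t)$ corrections, with normal block $\mb K+O(t)$. A Schur-complement argument gives $\det\nabla\mb v=\det\mb K\cdot\partial_sg+O(t^2)=t\Sigma'(s_j)+O(t^2)$. Dividing by $\lambda_1\lambda_2=-\Delta+O(t)$ yields $\lambda_T=-t\Sigma'(s_j)/\Delta+O(t^2)$.

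The delicate part is this Schur-complement step. The coordinate change has a Jacobian that is $\mathrm{id}+O(|\gb\delta|)=\mathrm{id}+O(t)$, and this correction must not pollute the $O(t)$ determinant. I would handle it by noting that it contributes only $O(t)\cdot\det\mb L=O(t)\cdot0$ at leading order, so it enters only at $O(t^2)$.
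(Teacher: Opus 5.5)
Your proposal is correct and follows the paper's proof essentially step for step. The paper also applies the implicit function theorem to the two normal components to obtain $\mc C^*$, and forms the reduced axial velocity $t\,H(s,t)$ with $H(s,0)=-\Sigma(s)/\Delta(s)$. It then applies the implicit function theorem again at the simple zeros of $\Sigma$, with $|H|$ bounded below away from them, and uses the exact Schur-complement identity $\det\mb J_t=\det\mb A\,\partial_s(tH)$ to get $\det\nabla\mb v=t\,\Sigma'(s_j)+O(t^2)$; as in your version, the $\mathrm{id}+O(t)$ tube-coordinate factor is absorbed by multiplicativity of the determinant, and dividing by $\lambda_1\lambda_2=-\Delta+O(t)$ (the paper reads this off the characteristic polynomial) gives $\lambda_T=-t\,\Sigma'(s_j)/\Delta+O(t^2)$.
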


\begin{proof}
\emph{Step 1 (the curve $\mc C^*$ where the normal velocity vanishes).} At a point $\mb X(s)$, consider the displaced point $\mb X(s)+\gb\delta$ with $\gb\delta$ in the normal plane. Using \eqref{eq:Taylor}, its normal velocity is
\[
\mb v_n(\mb X+\gb\delta,t)=t\,\mb a_n+\mb K\gb\delta+O(t^2,\,t|\gb\delta|,\,|\gb\delta|^2).
\]
Setting this to zero and solving for $\gb\delta$, which is possible by the implicit function theorem because $\mb K$ is invertible, gives the unique small solution
\be
\gb\delta^*(s,t)=-t\,\mb K^{-1}\mb a_n+O(t^2).
\ee
\emph{Step 2 (the axial velocity on $\mc C^*$).} At the point $\mb X(s)+\gb\delta^*$ the axial velocity is, by the same Taylor expansion,
\begin{align}
v_T&=t\,a_T+\mb m\cdot\gb\delta^*+O(t^2)=t\,(a_T-\mb m\cdot\mb K^{-1}\mb a_n)+O(t^2) \nonumber \\&=-\frac{t\,\Sigma(s)}{\Delta(s)}+O(t^2),
\end{align}
because $\mb v(\mb X,0)=\mb 0$ makes the axial velocity of the fluid on the curve $t\,a_T+O(t^2)$ and the displacement contributes $\mb m\cdot\gb\delta^*$; the last equality uses $a_T-\mb m\cdot\mb K^{-1}\mb a_n=-\Sigma/\Delta$, from \eqref{eq:Sigma} and \eqref{eq:Ksq}. Define $H(s,t)$ by $v_T(\mb X(s)+\gb\delta^*(s,t),t)=t\,H(s,t)$; then $H$ is smooth and $H(s,t)=-\Sigma(s)/\Delta(s)+O(t)$.

\emph{Step 3 (the zeros).} A stagnation point near $\mc C$ must lie on $\mc C^*$ (normal velocity zero) and have $H=0$. At $t\to0$, the zeros of $H$ are the zeros of $\Sigma$ (since $\Delta\neq0$). If $s_j$ is a simple zero, $\Sigma'(s_j)\neq0$, so $\partial_sH=-\Sigma'(s_j)/\Delta\neq0$, and the implicit function theorem shows that the zero persists as a unique, smooth path $s_j(t)=s_j+c_jt+O(t^2)$ for small $|t|$. Away from the zeros of $\Sigma$, $|H|>0$ for small $t$, so there are no other stagnation points. This gives exactly $N$ points. Substituting $s_j(t)$ into $\mb X(s)+\gb\delta^*(s,t)$ and expanding, $\mb X(s_j(t))=\mb X(s_j)+c_jt\,\mb t+O(t^2)$ and $\gb\delta^*=-t\mb K^{-1}\mb a_n+O(t^2)$, gives \eqref{eq:pts}.

\emph{Step 4 (the Jacobian of the reduced problem).} Let $\mb Y(s,\gb\delta)=\mb X(s)+\delta_1\mb e_1(s)+\delta_2\mb e_2(s)$ and $\mb G(s,\gb\delta)=\mb v(\mb Y,t)$. The Jacobian of $\mb G$ with respect to $(\delta_1,\delta_2,s)$ is $\mb J_t=\nabla\mb v\cdot\partial\mb Y/\partial(\delta_1,\delta_2,s)$, and since $\gb\delta=O(t)$, $\partial\mb Y/\partial(\delta_1,\delta_2,s)=(\mb e_1,\mb e_2,\mb t)+O(t)$. So $\mb J_t=\nabla\mb v\,(\mb I+O(t))$ in the basis $(\mb e_1,\mb e_2,\mb t)$. Block it as
\begin{align}
\mb J_t=\begin{pmatrix}\mb A&\mb B\\ \mb C&D\end{pmatrix},\quad \mb A=\partial_{\gb\delta}\mb G_n=\mb K+O(t),\\ \mb B=\partial_s\mb G_n,\ \ \mb C=\partial_{\gb\delta}G_T,\ \ D=\partial_sG_T.
\end{align}
\emph{Step 5 (elimination of the normal unknowns).} Along $\mc C^*$, we have $\mb G_n(s,\gb\delta^*(s))=\mb 0$. Differentiating in $s$ gives $\mb A\,\partial_s\gb\delta^*+\mb B=\mb 0$, so $\partial_s\gb\delta^*=-\mb A^{-1}\mb B$. The axial velocity along $\mc C^*$ is $\Phi(s)=G_T(s,\gb\delta^*(s))=t\,H(s,t)$, and
\[
t\,\partial_sH=\partial_s\Phi=D+\mb C\,\partial_s\gb\delta^*=D-\mb C\mb A^{-1}\mb B .
\]
Subtracting $\mb C\mb A^{-1}$ times the upper block row from the lower block row of $\mb J_t$ does not change the determinant (Gaussian elimination) and turns the matrix into one with the entry $D-\mb C\mb A^{-1}\mb B$ in the lower right corner, so $\det\mb J_t=\det\mb A\,(D-\mb C\mb A^{-1}\mb B)$. Therefore
\be
t\,\partial_sH=\frac{\det\mb J_t}{\det\mb A}=\frac{\det\mb J_t}{-\Delta+O(t)} .
\ee
At $s=s_j(t)$, $\partial_sH=-\Sigma'(s_j)/\Delta+O(t)$, so $\det\mb J_t=(-\Delta)\,t\,(-\Sigma'(s_j)/\Delta)+O(t^2)=t\,\Sigma'(s_j)+O(t^2)$, and $\det\nabla\mb v=\det\mb J_t\,(1+O(t))=t\,\Sigma'(s_j)+O(t^2)$ too.

\emph{Step 6 (eigenvalues).} At $\mb X_j(t)$, the matrix of $\nabla\mb v$ in the basis $(\mb e_1,\mb e_2,\mb t)$ equals its value on $\mc C$ at $t=0$, namely \eqref{eq:Lstruct}, plus $O(t)$, because $\nabla\mb v$ is smooth and $\mb X_j(t)$ is within $O(t)$ of $\mb X(s_j)$. Its trace is zero (incompressibility). The sum of its principal $2\times2$ minors is $\det\mb K+O(t)=-\Delta+O(t)$, since the other minors contain the zero third column up to $O(t)$ and are $O(t)$. The characteristic equation is $\lambda^3-\mathrm{tr}\,\lambda^2+c_2\lambda-\det=0$ with $c_2=-\Delta+O(t)$, $\mathrm{tr}=0$, $\det=O(t)$. Two roots are therefore $\lambda_{1,2}=\pm\Delta^{1/2}+O(t)$, and the third is small, $\lambda_T\simeq\det/c_2$, whence $\lambda_T=\det\nabla\mb v/(-\Delta)+O(t^2)=-t\,\Sigma'(s_j)/\Delta+O(t^2)$. This is \eqref{eq:eig}.
\end{proof}

\emph{Meaning.} The first-order position of a surviving stagnation point is the position the line would have if it could move, namely $\mb X+t\mb U_n$, with the zero of $\Sigma$ fixing where along the line it sits. The tangential eigenvalue $\lambda_T=-t\,\Sigma'(s_j)/\Delta$ is of order $t$: a point that emerges from a line has a very weak axial strain, which grows linearly in time. This is why the points replacing a line are almost degenerate (since $\lambda_T$ is almost zero) for a while.

\begin{remark}[Types and indices]\label{rem:types}
The \emph{index} (or winding number) of an isolated zero of a vector field is the sign of the determinant of its Jacobian: it records whether the linear map $\gb\delta\mapsto\mb v(\mb x_0+\gb\delta)\approx \mb L\gb\delta$ preserves or reverses orientation near the zero (the standard name is the Poincar\'e index, and for a sum over all zeros inside a region it is a topological invariant). Here, $\det\nabla\mb v=\lambda_1\lambda_2\lambda_T=(-\Delta)\big(-t\,\Sigma'(s_j)/\Delta\big)=t\,\Sigma'(s_j)$, so
\be
\text{Poincar\'e index of }\mb X_j=\mathrm{sign}\big(t\,\Sigma'(s_j)\big),
\ee
for saddle-type ($\Delta>0$) and centre-type ($\Delta<0$) lines alike. On a closed curve, $\Delta$ is continuous and never zero, hence of one sign: all the points have the same normal type. When $\Sigma$ is periodic with simple zeros,  $\Sigma'$ alternates in sign between consecutive zeros: the number of zeros is even, the indices alternate, and the total index is zero.
\end{remark}

\subsection{Persistence by symmetry}

A stagnation line fails to persist because an axial acceleration $a_T$ or an axial shear $\mb m$ appears. If the flow is arranged so that the velocity is everywhere perpendicular to a fixed direction field $\gb\xi$ that is tangent to the curve, neither can appear: there is no axial velocity, hence no axial shear and no axial acceleration. The natural way to arrange this is with a symmetry, and we make it concrete.

Let $\gb\xi(\mb x)=\mb c+\gb\Omega\times\mb x$ with constant vectors $\mb c,\gb\Omega$ be the velocity field of a rigid motion. It generates a one-parameter family of rigid motions $G_\theta$ (that is to say, $\gb\xi$ is a Killing field). The orbits of $G_\theta$ are straight lines parallel to $\mb c$ if $\gb\Omega=\mb 0$ (translations), circles about the axis of $\gb\Omega$ if $\mb c=\mb 0$ (rotations), and helices about that axis if $\mb c\parallel\gb\Omega$, both non-zero (screw motions). A flow \emph{invariant} under $G_\theta$ looks the same after the rigid motion is applied. For a scalar function $f$, this means $\gb\xi\cdot\nabla f=0$.

\begin{corollary}[Persistence by symmetry]\label{cor:sym}
Suppose that for $t$ in an interval the curve $\mc{C}(t)$ is an orbit of $G_\theta$ (so that $\mb t\parallel\gb\xi$ on it), that $\Delta\neq0$ on it, and that $\mb v\cdot\gb\xi=0$ in a neighbourhood of $\mc C(t)$ for all $t$ in the interval, with $\gb\xi$ independent of time. Then $\Sigma\equiv0$ on $\mc C$: the solvability condition holds identically.
\end{corollary}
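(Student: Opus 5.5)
The plan is to show that both ingredients of $\Sigma=\mb m\cdot\mb K\mb a_n-\Delta\,a_T$ vanish, namely $a_T=0$ and $\mb m=\mb 0$. It then follows at once that $\Sigma\equiv0$. The argument never uses the invariance of the flow under $G_\theta$, only the two hypotheses $\mb v\cdot\gb\xi=0$ near $\mc C$ and $\mb t\parallel\gb\xi$ on it. First note that $\gb\xi\neq\mb 0$ on $\mc C$: an orbit that is a curve is not a fixed point of $G_\theta$. On $\mc C$ we may therefore write $\gb\xi=|\gb\xi|\,\mb t$, after orienting $\mb t$ suitably.

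\emph{Step 1 (axial acceleration).} Since $\gb\xi$ does not depend on time, differentiating the identity $\mb v\cdot\gb\xi=0$ in $t$ at a fixed point of space gives $\mb a\cdot\gb\xi=0$ throughout the neighbourhood. On $\mc C$ this reads $|\gb\xi|\,a_T=0$, so $a_T=0$. This is the only place where time-independence of $\gb\xi$ enters.

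\emph{Step 2 (axial shear).} We differentiate $\mb v\cdot\gb\xi=0$ in space along a normal direction $\mb e_\beta$ at a point of $\mc C$. This gives $(\mb L\mb e_\beta)\cdot\gb\xi+\mb v\cdot(\mb e_\beta\cdot\nabla)\gb\xi=0$. The second term vanishes because $\mb v=\mb 0$ on $\mc C$, and there is no need to know the derivative of $\gb\xi$, which is $\gb\Omega\times\mb e_\beta$. What remains is $|\gb\xi|\,\mb t\cdot\mb L\mb e_\beta=|\gb\xi|\,m_\beta=0$, and hence $\mb m=\mb 0$. Substituting $a_T=0$ and $\mb m=\mb 0$ into \eqref{eq:Sigma} gives $\Sigma=0$ at every point of $\mc C(t)$ for every $t$ in the interval.

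A cross-check is available from the geometric form of $\Sigma$. With $\mb m=\mb 0$, the range of $\mb L$ is the normal plane, and $\mb a$ lies in that plane because $a_T=0$. So $\mb a\cdot(\mb L\mb e_1\times\mb L\mb e_2)=0$, since the cross product is parallel to $\mb t$.

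The main obstacle is minor. The hypothesis $\Delta\neq0$ is not needed for $\Sigma=0$ itself. It is needed only to invoke \cref{prop:law} and conclude that the line moves with $\mb U_n=-\mb K^{-1}\mb a_n$ and persists consistently. Step~2 also needs care, because the hypothesis must hold on a neighbourhood and not merely on $\mc C$; the spatial derivative in that step depends on this. I would state both points explicitly.
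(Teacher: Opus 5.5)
Your proof is correct and is essentially the paper's argument. You differentiate $\mb v\cdot\gb\xi=0$ along $\mb e_\beta$, using $\mb v=\mb 0$ on $\mc C$, to get $\mb m=\mb 0$, and in time, using that $\gb\xi$ is time-independent, to get $a_T=0$, so $\Sigma=\mb m\cdot\mb K\mb a_n-\Delta a_T=0$; the paper does the same two steps in the opposite order. Your extra observations, that $\gb\xi\neq\mb 0$ on an orbit that is a curve and that $\Delta\neq0$ is not needed for $\Sigma=0$ itself, are valid refinements that the paper leaves implicit.
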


\begin{proof}
\emph{Step 1 (the axial shear vanishes).} On $\mc C$, $\mb v=\mb 0$ and $\gb\xi=|\gb\xi|\mb t$. Differentiate the identity $\mb v\cdot\gb\xi=0$ in the direction $\mb e_\beta$: $0=\partial_\beta(\mb v\cdot\gb\xi)=(\partial_\beta\mb v)\cdot\gb\xi+\mb v\cdot\partial_\beta\gb\xi=|\gb\xi|\,\mb t\cdot\mb L\mb e_\beta=|\gb\xi|\,m_\beta$, since $\mb v=\mb 0$ on $\mc C$. So $\mb m=\mb 0$.

\emph{Step 2 (the axial acceleration vanishes).} Differentiate $\mb v\cdot\gb\xi=0$ in time at fixed $\mb x$: $\mb a\cdot\gb\xi=0$, so $a_T=0$ on $\mc C$.

\emph{Step 3.} Then $\Sigma=\mb m\cdot\mb K\mb a_n-\Delta a_T=0$.
\end{proof}

Symmetry alone is not enough in general, because the symmetry may allow a velocity component \emph{along} the orbit (axial flow, swirl). It is the absence of this component via the condition $\mb v\cdot\gb\xi=0$ that matters. In potential flow, the absence is automatic (\cref{sec:persym}), in Stokes flow, it must be imposed (\cref{sec:stokes}), and in Navier--Stokes flow, it is not preserved by viscosity for a helix (\cref{sec:NS}).

\begin{remark}[Vorticity and helicity in the symmetric case]\label{rem:helicity}
In the setting of \cref{cor:sym}, with $\gb\xi=\mb e_z+kr\,\mb e_\theta$ and a flow depending on $(r,\chi)$,
$\chi=\theta-kz$, the condition $\mb v\cdot\gb\xi=0$ is $v_z=-krv_\theta$. Then
$\omega_r=\tfrac1r\partial_\chi v_z+k\,\partial_\chi v_\theta=0$ and
$\omega_\theta-kr\,\omega_z=-\partial_r(v_z+krv_\theta)=0$, so $\gb\omega\parallel\gb\xi$. The line is therefore
a vortex line ($\mb m=\mb 0$), and $\mb v\cdot\gb\omega=0$ pointwise: the helicity density vanishes
identically. The limits $k=0$ and $k\to\infty$ are plane flow without axial velocity and axisymmetric flow
without swirl. The conservation law behind persistence is that $w=\mb v\cdot\gb\xi$ satisfies
$\mathrm{D}w/\mathrm{D}t=\nu(\nabla^2w-2k\omega_z)$ (\cref{sec:NS}(iii)), so that $w=0$ is preserved for
$\nu=0$ (with $\gb\xi\cdot\nabla p =\gb\xi\cdot\mb f=0)$. The condition $\mb v\cdot\gb\xi=0$ is the helical analogue of ``no swirl'' in axisymmetric flow and of ``no axial flow'' in plane flow. In Dritschel's helical decomposition $\mb u=\mb h\times\nabla\psi+\mb h\,v$~\cite{dritschel1991generalized}, where $\mb h\parallel\gb\xi$ (his pitch is $-k$), it is the case $v=0$, and the Euler equations conserve $v$ materially, which is the $\nu=0$ statement above. This is the opposite extreme to the Beltrami flows ($\gb\omega\parallel\mb u$) that are the main subject of that paper: here $\gb\omega\parallel\gb\xi\perp\mb v$ and the helicity vanishes.
\end{remark}

\begin{remark}[Symmetry is sufficient, not necessary; vortex filaments]\label{rem:filament}
\Cref{cor:sym} gives a sufficient condition for $\Sigma\equiv0$. It is not necessary: if the flow is
steady in a frame moving as a rigid body, then $\mb a=\mb 0$ in that frame, so $\Sigma\equiv0$ and
$\mb U_n=\mb 0$ for any zero curve, whether or not it is an orbit.
A physical example is the centreline of a vortex core. The velocity vanishes there in a frame
moving with the core, and the centreline has $\omega_T\neq0$, so it is of centre type
($\Delta<0$) when the strain on the core is weak, $\ep<|\omega_T|/2$. A single frame makes the whole centreline stagnant only if the centreline
velocity equals one rigid-body velocity $\mb c+\gb\Omega\times\mb X$ at every point. In the local
induction approximation~\cite{hasimoto1972soliton}, the centreline velocity is $c_\Gamma\kappa\mb b$ with $c_\Gamma$ constant.
For $\gb\Omega=\mb 0$, the condition $\kappa\mb b=\mb c/c_\Gamma$ gives $\kappa'=0$ and
$\kappa\tau=0$, so the filament is a circle: the translating vortex ring. In the frame of the ring, this is axisymmetric flow without swirl and \cref{cor:sym} applies; the closed non-orbit filaments below are the cases that it does not cover. For $\gb\Omega\neq0$, it
contains the helix, and closed filaments that are not orbits of any one-parameter group of
rigid motions are known~\cite{kida1981vortex}. In a rotating frame, the vorticity is $\gb\omega-2\gb\Omega$,
so $\omega_T$, $\Delta$ and the type of the line depend on the frame. For the helical vortex filament, the centreline is stagnant in the screw-moving frame, with $\mb a=\mb 0$ and $\Sigma\equiv0$ because the flow is steady there. In general, it is \emph{not} true there that $\mb v\cdot\gb\xi=0$ in a neighbourhood of the line, since the induced flow has swirl and axial components. So \cref{cor:sym} does not apply, whereas the steady-frame argument does (see also \cref{rem:helixvortex}).
\end{remark}

\section{Potential flow}\label{sec:pot}

Consider $\mb v=\nabla\varphi(\mb{x},t)$ with $\nabla^2\varphi=0$ in a fluid domain (this is $\nabla\cdot\mb v=0$ for a gradient field; compressible irrotational flow is treated in \cref{sec:comp}).

\subsection{Local structure of the stagnation curve}

For irrotational flow the velocity-gradient tensor is $\mb L=\mb E:=\nabla\nabla\varphi$, the Hessian of $\varphi$, which is symmetric. Then $\gb\omega=\mb 0$, so by \eqref{eq:vort} $\omega_T=0$ and $\mb m=\mb 0$, and $\mb K=\mb S$ is symmetric.

\emph{Step 1 (eigenstructure).} $\mb E\mb t=\mb 0$ by \eqref{eq:LT}, so $0$ is an eigenvalue with eigenvector $\mb t$. The trace is zero, so the other two eigenvalues are $\pm\ep$, $\ep\ge0$, with orthonormal eigenvectors $\mb e_1$ (eigenvalue $+\ep$, the \emph{outflow} direction) and $\mb e_2$ (eigenvalue $-\ep$, the \emph{inflow} direction), and $\mb E$ is built from its eigenvectors as
\begin{equation} \label{eq:Hat}
    \mb E = \ep\, (\mb e_1\otimes \mb e_1 - \mb e_2 \otimes \mb e_2),\qquad\text{so}\qquad \mb K=\mathrm{diag}(\ep,-\ep),\quad \Delta=\ep^2 .
\end{equation}
 Close to the line the flow is the planar stagnation-point flow $\mb v\simeq\ep(\delta_1\mb e_1-\delta_2\mb e_2)$ in each normal plane: fluid arrives along $\mb e_2$ and leaves along $\mb e_1$.

\emph{Step 2 (orientation of the strain axes).} The pair $(\mb e_1,\mb e_2)$ lies in the normal plane, so it is obtained from the Frenet pair $(\mb n,\mb b)$ by a rotation through an angle $\alpha(s)$:
\begin{equation}
    \mb e_1 = \cos\alpha \,\mb n + \sin\alpha\,\mb b, \qquad \mb e_2 = -\sin\alpha \,\mb n + \cos\alpha\,\mb b .
    \label{eq:rot}
\end{equation}
(Then $\mb e_1\times\mb e_2=(\cos^2\alpha+\sin^2\alpha)\,\mb n\times\mb b=\mb t$, as required.) Differentiating $\mb e_1$ along the curve with \eqref{eq:Frenet},
\begin{align}
\partial_s\mb e_1&=\alpha'(-\sin\alpha\,\mb n+\cos\alpha\,\mb b)+\cos\alpha(-\kappa\mb t+\tau\mb b)+\sin\alpha(-\tau\mb n)\nonumber \\
&=(\alpha'+\tau)\,\mb e_2-\kappa\cos\alpha\,\mb t . \nonumber
\end{align}
So the strain axes turn about $\mb t$, relative to a frame that is parallel-transported along the curve (one that does not rotate about $\mb t$), at the rate $\partial_s\alpha+\tau$ per unit arclength.

Below, $\varphi_{ijk}=\partial_i\partial_j\partial_k\varphi$ denotes third derivatives, which measure how the strain varies as we move off the point. A subscript $T$, $1$ or $2$ denotes contraction with $\mb t$, $\mb e_1$ or $\mb e_2$.

\begin{lemma}\label{lem:local}
At a fixed time, on $\mc{C}$:
\begin{itemize}
\item[(a)] $\varphi$ is constant along $\mc{C}$: $\varphi(\mb{X},t)=\Phi(t)$, and $\varphi_1=\varphi_2=\varphi_T=0$.
\item[(b)] $\mb E\mb t=\mb{0}$ and $\mb E=\ep\,(\mb{e}_1\mb{e}_1-\mb{e}_2\mb{e}_2)$; in particular $\varphi_{TT}=\mb t\cdot\mb E\mb t=0$.
\item[(c)] $\varphi_{ijk}t_jt_k=-\kappa\,(\mb E\mb n)_i$; in particular,
\be
\varphi_{TT1}=-\kappa\,\ep\cos\alpha,\qquad \varphi_{TT2}=-\kappa\,\ep\sin\alpha,\qquad \varphi_{TTT}=0 .
\label{eq:third}
\ee
\item[(d)] For any harmonic function $f$, $f_{11}+f_{22}=\kappa\,\mb n\cdot\nabla f-\partial_s^2\big(f|_{\mc{C}}\big)$.
\end{itemize}
\end{lemma}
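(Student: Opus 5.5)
The plan is to differentiate the identity $\nabla\varphi(\mb X(s),t)=\mb 0$ repeatedly along the curve and read off each part of the lemma. Parts (a) and (b) need almost no work. Since $\mb v=\nabla\varphi=\mb 0$ on $\mc C$, all first derivatives $\varphi_1,\varphi_2,\varphi_T$ vanish there. Then $\partial_s(\varphi|_{\mc C})=\varphi_T=0$, so $\varphi$ is constant along $\mc C$ at each instant; this constant is $\Phi(t)$. For a curve that is not connected, the statement is understood componentwise. Part (b) is just \eqref{eq:LT} with $\mb L=\mb E$, combined with the eigen-decomposition \eqref{eq:Hat}; contracting that decomposition twice with $\mb t$ gives $\varphi_{TT}=0$.

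For (c), differentiate $E_{ij}t_j=0$, which holds along $\mc C$ by (b), with respect to $s$. The chain rule gives $\varphi_{ijk}t_jt_k+E_{ij}\,\partial_s t_j=0$. Frenet \eqref{eq:Frenet} gives $\partial_s\mb t=\kappa\mb n$, so $\varphi_{ijk}t_jt_k=-\kappa(\mb E\mb n)_i$. Next, insert $\mb E\mb n=\ep(\cos\alpha\,\mb e_1+\sin\alpha\,\mb e_2)$. This uses the expression for $\mb E$ and the inversion of \eqref{eq:rot}, namely $\mb n\cdot\mb e_1=\cos\alpha$ and $\mb n\cdot\mb e_2=-\sin\alpha$, so $\mb E\mb n=\ep\cos\alpha\,\mb e_1+\ep\sin\alpha\,\mb e_2$. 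Contracting with $\mb e_1$, $\mb e_2$ and $\mb t$ gives the three formulas \eqref{eq:third}, and $\varphi_{TTT}=0$ follows because $\mb t\cdot\mb E\mb n=0$. The only care needed is the sign of the $\mb e_2$ component, which I would check directly.

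For (d), which is really a statement about an arbitrary smooth $f$ restricted to a curve, I would compute the second arclength derivative of $f|_{\mc C}$. First, $\partial_s f=f_it_i$. Then $\partial_s^2 f=f_{ij}t_it_j+f_i\,\partial_s t_i=f_{TT}+\kappa\,\mb n\cdot\nabla f$. Because $f$ is harmonic, the trace in the orthonormal basis $(\mb e_1,\mb e_2,\mb t)$ vanishes, so $f_{11}+f_{22}=-f_{TT}$. Substituting gives $f_{11}+f_{22}=\kappa\,\mb n\cdot\nabla f-\partial_s^2(f|_{\mc C})$, as claimed.

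None of these steps is a genuine obstacle. The one subtle point is the case $\kappa=0$, where $\mb n$ and $\alpha$ are undefined. There every formula containing $\kappa\mb n$ or $\kappa\cos\alpha$ reduces to $0$, in line with the convention of \cref{sec:curve}, so (c) and (d) hold trivially in the limiting sense. I would state this explicitly so the lemma stays valid on straight segments.
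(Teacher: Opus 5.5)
Your proposal is correct and matches the paper's proof step for step. Parts (a) and (b) follow directly from $\nabla\varphi=\mb 0$ on $\mc C$ and \eqref{eq:Hat}. Part (c) comes from differentiating $\mb E\mb t=\mb 0$ along $\mc C$ with $\partial_s\mb t=\kappa\mb n$ and $\mb E\mb n=\ep(\cos\alpha\,\mb e_1+\sin\alpha\,\mb e_2)$. Part (d) is the chain rule plus the vanishing trace of the Hessian of a harmonic function; your sign check $\mb n\cdot\mb e_2=-\sin\alpha$ and your remark on $\kappa=0$ are both correct.
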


\begin{proof}
(a) $\nabla\varphi=\mb v=\mb 0$ on $\mc C$, so $\mathrm{d}\varphi/\mathrm{d}s=\mb t\cdot\nabla\varphi=0$. (b) is Step 1.

(c) Differentiate the identity $\mb E(\mb X(s))\,\mb t(s)=\mb 0$ along the curve. Since $\mathrm{d}E_{ij}/\mathrm{d}s=\varphi_{ijk}t_k$ and $\partial_s\mb t=\kappa\mb n$,
\[
\varphi_{ijk}t_kt_j+\kappa E_{ij}n_j=0 .
\]
With $\mb E\mb n=\ep\big[(\mb e_1\cdot\mb n)\mb e_1-(\mb e_2\cdot\mb n)\mb e_2\big]=\ep(\cos\alpha\,\mb e_1+\sin\alpha\,\mb e_2)$ by \eqref{eq:rot}, taking components along $\mb e_1$, $\mb e_2$, $\mb t$ gives \eqref{eq:third}.

(d) By the chain rule, $\partial_s^2f(\mb X(s))=\partial_s(\mb t\cdot\nabla f)=t_it_jf_{ij}+\kappa\,\mb n\cdot\nabla f=f_{TT}+\kappa\,\mb n\cdot\nabla f$. Since $f$ is harmonic, $f_{TT}+f_{11}+f_{22}=0$, so $f_{11}+f_{22}=-f_{TT}=\kappa\,\mb n\cdot\nabla f-\partial_s^2f$.
\end{proof}

\subsection{The equation of motion and the pressure}

For an inviscid fluid of unit density with a conservative body force, $\partial_t\mb v+(\mb v\cdot\nabla)\mb v=-\nabla p$ (the force potential is absorbed in $p$). For irrotational flow $(\mb v\cdot\nabla)\mb v=\nabla(\tfrac12|\mb v|^2)$ and $\mb v=\nabla\varphi$, so $\nabla\big(\varphi_t+\tfrac12|\nabla\varphi|^2+p\big)=\mb 0$. In a connected fluid domain this gives
\be
p=c(t)-\varphi_t-\tfrac12|\nabla\varphi|^2 ,
\label{eq:bern}
\ee
where $c(t)$ is fixed by initial or boundary conditions. On the stagnation curve $\mb v=\mb 0$, so $p=c(t)-\varphi_t$ there. Moreover $\nabla(\tfrac12|\mb v|^2)=\mb E\mb v=\mb 0$ on $\mc C$, so taking the gradient of \eqref{eq:bern},
\be
\mb a=\nabla\varphi_t=-\nabla p\quad\text{on }\mc C .
\ee
The fluid particle on the curve is accelerated by minus the pressure gradient, as it must be, since it is at rest and feels no convective acceleration.

\begin{proposition}\label{prop:potlaw}
Let $\mc{C}(t)$ be a stagnation curve of a potential flow for $t$ in an interval, with $\ep>0$ everywhere on it. Then, exactly, on $\mc{C}$:
\begin{itemize}
\item[(i)] $\mb E\mb{U}=-\nabla\varphi_t=-\mb a$.
\item[(ii)] \emph{Solvability along the curve.} $\Sigma=-\ep^2\,\partial_s\varphi_t$, so the solvability condition is $\partial_s\varphi_t=0$; equivalently $\varphi_t=\Phi'(t)$ is constant along $\mc{C}$, and the stagnation pressure $p=c(t)-\Phi'(t)$ is uniform along $\mc{C}$.
\item[(iii)] \emph{Normal velocity.} With $\mb{U}=U_1\mb{e}_1+U_2\mb{e}_2+U_T\mb t$,
\be
U_1=-\frac{a_1}{\ep},\qquad U_2=+\frac{a_2}{\ep},\qquad a_i=\mb{e}_i\cdot\nabla\varphi_t ,
\label{eq:law}
\ee
and $U_T$ is arbitrary.
\end{itemize}
In terms of the Frenet frame, the normal and binormal velocities are $U_N=U_1\cos\alpha-U_2\sin\alpha$ and $U_B=U_1\sin\alpha+U_2\cos\alpha$.
\end{proposition}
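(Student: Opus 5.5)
The plan is to specialise \cref{prop:law} to the irrotational case, using the local structure from \cref{lem:local} and the Bernoulli relation \eqref{eq:bern}. Nothing new has to be derived; the work is in substituting the potential-flow forms of $\mb L$, $\mb m$, $\mb K$, $\Delta$ and $\mb a$.

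\emph{Step 1 (part (i)).} For $\mb v=\nabla\varphi$ we have $\mb L=\mb E$. Earlier in this section we showed that $\mb a=\nabla\varphi_t$ on $\mc C$, because the convective term $\nabla(\tfrac12|\mb v|^2)=\mb E\mb v$ vanishes there. Hence \cref{prop:law}(i) reads $\mb E\mb U=-\nabla\varphi_t$. The identity $-\mb a=\nabla p$ comes from the gradient of \eqref{eq:bern}, which was already obtained above.

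\emph{Step 2 (part (ii)).} Irrotationality gives $\gb\omega=\mb 0$, so by \eqref{eq:vort} we get $\mb m=\mb 0$ and $\omega_T=0$. By \eqref{eq:Hat}, $\mb K=\mathrm{diag}(\ep,-\ep)$ in the strain-axis basis $(\mb e_1,\mb e_2)$, and $\Delta=\ep^2$. The basis is right-handed because $\mb e_1\times\mb e_2=\mb t$, and $\Sigma$ does not depend on the choice of basis in the normal plane. The second form of \eqref{eq:Sigma} then collapses to $\Sigma=-\ep^2a_T$. Since $a_T=\mb t\cdot\nabla\varphi_t=\partial_s(\varphi_t|_{\mc C})$, this is $\Sigma=-\ep^2\partial_s\varphi_t$. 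With $\ep>0$, the condition $\Sigma=0$ of \cref{prop:law}(iii) is therefore equivalent to $\partial_s\varphi_t=0$, i.e. $\varphi_t$ is constant along $\mc C$.

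To identify that constant as $\Phi'(t)$, I would use \cref{lem:local}(a): $\varphi(\mb X(\sigma,t),t)=\Phi(t)$. Differentiating in $t$ at fixed $\sigma$ gives $\varphi_t+\mb U\cdot\nabla\varphi=\Phi'(t)$, and the second term vanishes because $\nabla\varphi=\mb 0$ on $\mc C$. This holds on any connected component of the curve. Uniformity of the pressure then follows from $p=c(t)-\varphi_t$ on $\mc C$. This step is the one subtle point, and it is mild: differentiating $\Phi$ already yields uniformity of $\varphi_t$ independently, which is a consistency check on the solvability statement.

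\emph{Step 3 (part (iii)).} \Cref{prop:law}(ii) gives $\mb U_n=-\mb K\mb a_n/\Delta=-\mathrm{diag}(\ep,-\ep)(a_1,a_2)^{\mathsf T}/\ep^2$. This yields $U_1=-a_1/\ep$ and $U_2=a_2/\ep$. The arbitrariness of $U_T$ is \cref{prop:law}(iv).

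\emph{Step 4 (Frenet components).} Inverting the rotation \eqref{eq:rot} gives $\mb n=\cos\alpha\,\mb e_1-\sin\alpha\,\mb e_2$ and $\mb b=\sin\alpha\,\mb e_1+\cos\alpha\,\mb e_2$. Dotting with $\mb U_n$ then gives $U_N=U_1\cos\alpha-U_2\sin\alpha$ and $U_B=U_1\sin\alpha+U_2\cos\alpha$.
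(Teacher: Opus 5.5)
Your proposal is correct and matches the paper's proof step for step. Like the paper, you specialise \cref{prop:law} with $\mb m=\mb 0$, $\mb K=\mathrm{diag}(\ep,-\ep)$ and $\Delta=\ep^2$ to get $\Sigma=-\ep^2a_T=-\ep^2\partial_s\varphi_t$ and $\mb U_n=-\mb K^{-1}\mb a_n$, give the second derivation of (ii) by differentiating $\varphi(\mb X(\sigma,t),t)=\Phi(t)$, and obtain $U_N,U_B$ by projecting onto $\mb n,\mb b$.
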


\begin{proof}
This is \cref{prop:law} with $\mb m=\mb 0$ and $\mb K=\mathrm{diag}(\ep,-\ep)$, so $\Delta=\ep^2$. Then (i) is $\mb a+\mb L\mb U=\mb 0$ with $\mb L=\mb E$ and $\mb a=\nabla\varphi_t$. The solvability function \eqref{eq:Sigma} is $\Sigma=\mb m\cdot\mb K\mb a_n-\Delta a_T=-\ep^2a_T=-\ep^2\,\mb t\cdot\nabla\varphi_t=-\ep^2\partial_s\varphi_t$, which is (ii). The normal velocity is $\mb U_n=-\mb K^{-1}\mb a_n=-\mathrm{diag}(\ep^{-1},-\ep^{-1})(a_1,a_2)^{\mathsf T}=(-a_1/\ep,\,a_2/\ep)$, which is \eqref{eq:law}. 

A second derivation of (ii) uses part (a) of \cref{lem:local}: differentiate $\varphi(\mb{X}(\sigma,t),t)=\Phi(t)$ in $t$ to get $\varphi_t+\nabla\varphi\cdot\partial_t\mb X=\Phi'$, i.e.\ $\varphi_t=\Phi'(t)$ on $\mc C$ because $\nabla\varphi=\mb 0$ there. A function of $t$ alone is uniform along the curve. Finally the Frenet components follow by projecting $\mb U$ on $\mb n$ and $\mb b$ with \eqref{eq:rot}: $U_N=\mb n\cdot\mb U=U_1\cos\alpha-U_2\sin\alpha$, $U_B=\mb b\cdot\mb U=U_1\sin\alpha+U_2\cos\alpha$.
\end{proof}

\emph{Meaning.} 
\begin{itemize}
\item \emph{Why the pressure must be uniform.} The fluid on the line has no strain along it ($\mb E\mb t=\mb 0$) and so nothing to restrain it axially. If the pressure varies along the line, the particles on it are accelerated along the line, $\mb t\cdot\mb a=-\partial_sp\neq0$, acquire an axial velocity $-t\,\partial_sp$ after a time $t$, and the line is destroyed. This is the same statement as $a_T=0$.
\item \emph{Why the line moves as it does.} By $\mb a=-\nabla p$, \eqref{eq:law} says that the line is displaced by the pressure gradient divided by the strain: $U_1=\partial_1p/\ep$ and $U_2=-\partial_2p/\ep$. In the outflow direction, the line moves towards higher pressure, in the inflow direction towards lower pressure. 
\item \emph{Physical reading.} At an instant the fluid velocity near the line is $\ep(\delta_1\mb e_1-\delta_2\mb e_2)+t\,\mb a$. The point where this vanishes is $\delta_1=-ta_1/\ep$, $\delta_2=ta_2/\ep$: it is displaced just far enough that the strain flow cancels the velocity gained from the acceleration.
\end{itemize}

\begin{remark}[Weak strain]
The velocity of the curve scales as $\ep^{-1}$. Where $\ep\to0$ along the line, the saddle degenerates, the first-order description fails, and the stagnation set can bifurcate there (see \cref{rem:weak}).
\end{remark}

\begin{remark}[Frame]
The stagnation set is not Galilean invariant: the zero set of $\mb v$ depends on the frame. Consider a frame whose origin moves with velocity $\dot{\mb x}_0(t)$. The velocity relative to it is $\mb v'(\mb x',t)=\mb v(\mb x'+\mb x_0,t)-\dot{\mb x}_0$, again irrotational, with potential $\varphi'=\varphi-\dot{\mb x}_0\cdot\mb x'$. Its local acceleration at fixed $\mb x'$ is $\mb a'=\mb a+(\dot{\mb x}_0\cdot\nabla)\mb v-\ddot{\mb x}_0$ (combination of local acceleration in the original frame, convective change due to the frame moving through a spatially varying velocity field and inertial acceleration of the frame itself). At an instant when the frame is not moving relative to the original one, $\dot{\mb x}_0=\mb 0$ (i.e., even when the frame is instantaneously at rest relative to the original frame), so $\mb a'=\mb a-\ddot{\mb x}_0$ and the law \eqref{eq:law} applied in the new frame changes the normal velocity by $+\mb E_n^{-1}\ddot{\mb x}_0$, where $\mb E_n^{-1}=\mathrm{diag}(\ep^{-1},-\ep^{-1})$ is the inverse of $\mb E$ on the normal plane. The solvability condition in the new frame is $\partial_s\varphi_t=\mb t\cdot\ddot{\mb x}_0$: a uniform acceleration of the frame along the line is fatal to the line unless it is cancelled by $\partial_s\varphi_t$.
\end{remark}

\subsection{Persistence by symmetry}\label{sec:persym}

If $\varphi(\cdot,t)$ is invariant under $G_\theta$, then $\gb\xi\cdot\nabla\varphi=0$ (this is what invariance of a scalar means), that is $\mb v\cdot\gb\xi=0$ everywhere: the second hypothesis of \cref{cor:sym} is automatic. Hence in potential flow, a stagnation curve that is an orbit of $G_\theta$ satisfies the solvability condition identically. Directly: $\varphi_t$ is also $G_\theta$-invariant, so it is constant on each orbit, and $\partial_s\varphi_t=0$.

The straight line (translations), the circle (rotations) and the helix (screw motions) are the three orbit types; this explains why these are the stagnation curves of the classical stagnation-point flows. They are not the only persisting lines of potential flow (\cref{ex:ellipse}); symmetry is sufficient, not necessary. Since the stagnation helix of a helically symmetric potential flow does not appear to have been described, we present it briefly; more details can be found in Appendix~\ref{app:helix}.

\emph{The law for a stagnation helix.} Let the helix have radius $r_0$ and pitch $2\pi/k$ in a helically symmetric flow $\varphi=F(r,\chi,t)$, $\chi=\theta-kz$, and suppose the strain axes are $\mb{e}_1=\mb{e}_r$, $\mb{e}_2=\hchi=\nabla\chi/|\nabla\chi|$. 

\emph{Step 1.} For a function $\psi(r,\chi)$, $\partial_\theta\psi=\psi_\chi$ and $\partial_z\psi=-k\psi_\chi$, so
\[
\nabla\psi=\psi_r\mb e_r+\psi_\chi\Big(\frac{\mb e_\theta}{r}-k\mb e_z\Big)=\psi_r\mb e_r+\psi_\chi\nabla\chi=\psi_r\mb e_r+\Lambda\psi_\chi\,\hchi,
\]
with $\Lambda^2=1/r^2+k^2=|\nabla\chi|^2 $.
Hence $a_1=F_{tr}$ and $a_2=\Lambda F_{t\chi}$ on the helix.

\emph{Step 2.} By \eqref{eq:law}, $U_1=-F_{tr}/\ep$ and $U_2=\Lambda F_{t\chi}/\ep$.

\emph{Step 3.} $U_1$ is the rate of change of the radius. The phase $\chi$ of the helix changes only by displacements along $\nabla\chi$, i.e.\ along $\hchi$: a displacement $U_2\,\mathrm{d}t$ along $\hchi$ changes $\chi$ by $\nabla\chi\cdot\hchi\,U_2\,\mathrm{d}t=\Lambda U_2\,\mathrm{d}t$, and a displacement along $\mb e_r$ does not change $\chi$ since $\nabla\chi\cdot\mb e_r=0$. So
\be
\dot r_0=-\frac{F_{tr}}{\ep},\qquad \dot\chi_0=\frac{\Lambda^2F_{t\chi}}{\ep},\qquad \Lambda^2=\frac1{r_0^2}+k^2,
\label{eq:helix}
\ee
evaluated on the helix. The pitch cannot change because of the symmetry. If $F$ remains even in $\chi$, then $F_{t\chi}=0$ on $\chi=0$, the phase is fixed, and the helix only breathes.

\subsection{Second-order condition}

The condition $\partial_s\varphi_t=0$ must hold at every instant. So its time derivative along the moving curve must also vanish. This produces a second condition involving $\varphi_{tt}$.

\begin{proposition}[Second-order condition]\label{prop:second}
If $\mc{C}(t)$ is a stagnation line of a potential flow for $t\in[0,T)$, then in addition to $\partial_s\varphi_t=0$,
\be
\partial_s\left[\varphi_{tt}-\frac{a_1^2-a_2^2}{\ep}\right]=0\quad\text{on }\mc{C}(t).
\label{eq:second}
\ee
\end{proposition}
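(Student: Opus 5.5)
The plan is to differentiate the first-order condition of \cref{prop:potlaw}(ii) in time along a tracked point, rather than working directly with $\partial_s$ of time derivatives. That route avoids any commutator $[\partial_t,\partial_s]$ on a moving curve.

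\emph{Step 1.} By \cref{prop:potlaw}(ii), and in particular its second derivation from \cref{lem:local}(a), $\varphi_t(\mb X(\sigma,t),t)=\Phi'(t)$ for every label $\sigma$ and every $t\in[0,T)$. The right-hand side does not depend on $\sigma$. Both sides are smooth in $t$ at fixed $\sigma$, because $\varphi$ is smooth and the tracked point $\mb X(\sigma,t)$ is differentiable in $t$ with velocity $\mb U$. Differentiating at fixed $\sigma$ by the chain rule gives
\[
\varphi_{tt}+\mb U\cdot\nabla\varphi_t=\Phi''(t)\quad\text{on }\mc C(t).
\]

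\emph{Step 2.} Evaluate $\mb U\cdot\nabla\varphi_t=\mb U\cdot\mb a$ in the strain basis $(\mb e_1,\mb e_2,\mb t)$. We have $\mb U\cdot\mb a=U_1a_1+U_2a_2+U_Ta_T$. The tangential term vanishes because $a_T=\partial_s\varphi_t=0$ by \cref{prop:potlaw}(ii). This is also why the arbitrary tangential velocity $U_T$ drops out, so the result does not depend on the labelling. Inserting the motion law \eqref{eq:law}, $U_1=-a_1/\ep$ and $U_2=a_2/\ep$, gives $\mb U\cdot\mb a=-(a_1^2-a_2^2)/\ep$. Hence
\[
\varphi_{tt}-\frac{a_1^2-a_2^2}{\ep}=\Phi''(t)\quad\text{on }\mc C(t).
\]

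\emph{Step 3.} The right-hand side depends on $t$ alone, so the bracket in \eqref{eq:second} is uniform along $\mc C(t)$. Taking $\partial_s$ at fixed $t$ gives \eqref{eq:second}. The hypothesis $\ep>0$ is needed only so that \eqref{eq:law} is available.

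The only point needing care is Step 1: the function $\Phi'(t)$ must be differentiable, and the identity must be differentiated at fixed label, not at fixed arclength. Both follow from the smoothness of $\varphi$ and of the family $\mb X(\sigma,t)$ assumed in \cref{sec:curve}. Beyond that, the argument is a direct substitution and I do not expect a real obstacle.
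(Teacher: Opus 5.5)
Your proposal is correct and follows the paper's proof essentially step for step. Both differentiate $\varphi_t(\mb X(\sigma,t),t)=\Phi'(t)$ at fixed label, drop the $a_TU_T$ term using $a_T=0$, and substitute \eqref{eq:law} to show that $\varphi_{tt}-(a_1^2-a_2^2)/\ep=\Phi''(t)$ is uniform along $\mc C(t)$.
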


\begin{proof}
By \cref{prop:potlaw}(ii), $\varphi_t(\mb{X}(\sigma,t),t)=\Phi'(t)$ for all $t$. Differentiate in $t$ at fixed $\sigma$: $\varphi_{tt}+\nabla\varphi_t\cdot\mb{U}=\Phi''(t)$. Expand $\nabla\varphi_t\cdot\mb U=a_1U_1+a_2U_2+a_TU_T$. The last term vanishes since $a_T=\partial_s\varphi_t=0$, so $U_T$ drops out. By \eqref{eq:law}, $a_1U_1+a_2U_2=-a_1^2/\ep+a_2^2/\ep=-(a_1^2-a_2^2)/\ep$. So $\varphi_{tt}-(a_1^2-a_2^2)/\ep=\Phi''(t)$ is independent of $s$.
\end{proof}

 The first condition says that $\varphi_t$ is uniform; the second, that the quantity $\varphi_{tt}$, corrected by the effect of the line's own motion through the field $\nabla\varphi_t$, is uniform. The hierarchy continues: persistence on an interval means that all time derivatives $\mathrm{D}^n\varphi/\mathrm{D}t^n$ following the curve are uniform along it, i.e.,
 \begin{equation}
     \partial_s\left[\frac{\mathrm{D}^n\varphi}{\mathrm{D}t^n}\right]=0\quad\text{on }\mc{C}(t) \quad \text{for all } n.
 \end{equation}
 At each order the condition constrains the driving of the flow, not the line.

\subsection{Loss of a line}

\begin{corollary}[Breaking of a potential-flow line]\label{cor:potbreak}
Let $\mc{C}=\mc{C}(0)$ be a closed stagnation curve of a potential flow with $\ep>0$ on it and let $\eta(s)=\varphi_t(\mb{X}(s),0)$, so that $\Sigma=-\ep^2\eta'$. Suppose $\eta$ is not constant and all its critical points $s_1,\dots,s_N$ are nondegenerate. Then $N\geq2$, and for small $t\neq0$ the stagnation set of $\varphi(\cdot,t)$ near $\mc{C}$ consists of exactly $N$ isolated points, to first order in $t$,
\begin{equation}
\begin{split}
\mb{X}_j(t) &= \mb{X}(s_j)+t\left(U_{j,1}\mb e_1+U_{j,2}\mb e_2+c_j\mb t\right)+O(t^2), \\[1ex]
&\quad U_{j,1}=-\frac{a_1}{\ep},\qquad U_{j,2}=\frac{a_2}{\ep},
\end{split}
\label{eq:potpts}
\end{equation}
and the Hessian of $\varphi$ at $\mb{X}_j(t)$ has eigenvalues
\be
\lambda_{1,2}=\pm\ep+O(t),\qquad \lambda_T=t\,\eta''(s_j)+O(t^2).
\label{eq:poteig}
\ee
\end{corollary}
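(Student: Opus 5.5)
This is a specialisation of \cref{prop:break}, so the plan is to check its hypotheses in the potential-flow setting and then translate its conclusions with \cref{prop:potlaw}.

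\emph{Hypotheses.} By \cref{prop:potlaw}(ii), $\Sigma(s)=-\ep^2\eta'(s)$. Since $\ep>0$ on $\mc C$, the zeros of $\Sigma$ are exactly the critical points $s_j$ of $\eta$. Differentiating along the curve gives $\Sigma'=-2\ep\ep'\eta'-\ep^2\eta''$. At a zero the first term drops, so $\Sigma'(s_j)=-\ep^2\eta''(s_j)\neq0$, and every zero of $\Sigma$ is simple. Also $\Delta=\ep^2\neq0$. So \cref{prop:break} applies, and for small $t\ne0$ there are exactly $N$ isolated stagnation points near $\mc C$.

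\emph{Count $N\ge2$.} The loop $\mc C$ is closed, so $\eta$ is a smooth periodic function of arclength. A nonconstant periodic function attains a maximum and a minimum at distinct points, and both are critical points. Hence $N\ge2$. Since the critical points are nondegenerate, the signs of $\eta''(s_j)$ alternate between consecutive critical points (maxima and minima alternate), which is the alternation of type claimed in the introduction and matches \cref{rem:types}.

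\emph{Positions and eigenvalues.} Formula \eqref{eq:pts} with $\mb K^{-1}=\mathrm{diag}(\ep^{-1},-\ep^{-1})$ gives $-\mb K^{-1}\mb a_n=(-a_1/\ep,\,a_2/\ep)$, which is \eqref{eq:potpts}. For the eigenvalues, \eqref{eq:eig} gives $\lambda_{1,2}=\pm\ep+O(t)$. For the axial eigenvalue,
\[
\lambda_T=-t\,\Sigma'(s_j)/\Delta=-t(-\ep^2\eta'')/\ep^2=t\,\eta''(s_j),
\]
up to $O(t^2)$. These are eigenvalues of the Hessian of $\varphi$, since $\nabla\mb v=\nabla\nabla\varphi$.

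The only step needing care is the computation of $\Sigma'$ at $s_j$: one must check that the $\ep'$ term vanishes there, which it does because $\eta'(s_j)=0$. Everything else is direct substitution into \cref{prop:break}.
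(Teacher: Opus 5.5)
Your proposal is correct and follows the paper's proof. Like the paper, you specialise \cref{prop:break} with $\mb m=\mb 0$, $\Delta=\ep^2$ and $\Sigma=-\ep^2\eta'$, read off $\lambda_T=-t\Sigma'/\Delta=t\,\eta''$, and get $N\ge2$ from the maximum and minimum of the periodic function $\eta$. Your explicit check that the $\ep'\eta'$ term in $\Sigma'$ drops at the $s_j$ makes precise a step the paper leaves implicit: it is what makes nondegenerate critical points of $\eta$ simple zeros of $\Sigma$. The paper's additional direct computation of $\lambda_T$, which shows the curvature terms cancel, is an independent consistency check and is not needed for the proof.
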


\begin{proof}
This is \cref{prop:break} with $\Sigma=-\ep^2\eta'$, $\mb m=\mb 0$ and $\Delta=\ep^2$: the zeros of $\Sigma$ are the critical points of $\eta$, and at such a point $\Sigma'=-\ep^2\eta''$, so that $\lambda_T=-t\Sigma'/\Delta=t\eta''$. A continuous periodic function that is not constant has at least a maximum and a minimum, so $N\ge2$.

\emph{A direct check of $\lambda_T$ that exhibits the cancellation of curvature.} Write $\psi=\varphi_t$. At $\mb X_j(t)$, the tangential diagonal Hessian entry is, to $O(t)$,
\begin{align}
\varphi_{TT}(\mb X_j,t)&=\underbrace{\varphi_{TT}(\mb X,0)}_{=0}+t\,\psi_{TT}+\gb\delta^*\!\cdot\nabla\varphi_{TT}\nonumber \\ &=t\,\psi_{TT}+t\,(U_1\varphi_{TT1}+U_2\varphi_{TT2}),
\end{align}
by Taylor expansion in time and in the displacement $\gb\delta^*=t(U_1\mb e_1+U_2\mb e_2)$ (the $O(t)$ displacement along the line contributes $t\,c_j\varphi_{TTT}=0$ by \eqref{eq:third}). Using \eqref{eq:third} and \eqref{eq:law}, $U_1\varphi_{TT1}+U_2\varphi_{TT2}=-\kappa\ep(U_1\cos\alpha+U_2\sin\alpha)=\kappa(a_1\cos\alpha-a_2\sin\alpha)=\kappa\,\mb n\cdot\nabla\psi$, because $\mb n=\cos\alpha\,\mb e_1-\sin\alpha\,\mb e_2$ by inverting \eqref{eq:rot}. Since $\psi$ is harmonic, \cref{lem:local}(d) with $f=\psi$ gives $\psi_{TT}=-\psi_{11}-\psi_{22}=\eta''-\kappa\,\mb n\cdot\nabla\psi$. Adding,
\[
\varphi_{TT}(\mb X_j,t)=t\big(\eta''-\kappa\,\mb n\cdot\nabla\psi+\kappa\,\mb n\cdot\nabla\psi\big)=t\,\eta''(s_j)+O(t^2).
\]
The curvature terms cancel. The off-diagonal Hessian entries coupling the tangential direction to the normal plane are $O(t)$ and shift $\lambda_T$ only by $O(t^2)/\ep$, giving \eqref{eq:poteig}.

The cancellation has a simple meaning. Because $\psi=\varphi_t$ is harmonic, the axial curvature of $\psi$ on a bent line is partly fed by its normal-plane curvature, which produces the term $-\kappa\,\mb n\cdot\nabla\psi$ in $\psi_{TT}$. On the other hand, the new stagnation point sits at the displaced position $\mb X+t\,\mb U_n$, and since the line is curved the axial strain $\varphi_{TT}$ is not zero off the line (\cref{eq:third}); the displacement therefore adds exactly $+\kappa\,\mb n\cdot\nabla\psi$. To first order in $t$, the axial strain rate of the newborn point is thus set only by the variation of $\varphi_t$ along the line, $\lambda_T=t\,\eta''=-t\,\partial_s^2p$, and does not depend on the curvature or torsion of the line, on the orientation $\alpha$ of the strain axes, or on $\ep$.
\end{proof}

\begin{remark}[Types and indices]
For $t>0$, at a maximum of $\eta$ ($\eta''<0$) the point has eigenvalues $(\ep,-\ep,-|\lambda_T|)$ and index $-\mathrm{sign}(\lambda_T)=+1$ as a zero of $\nabla\varphi$; at a minimum, $(\ep,-\ep,+|\lambda_T|)$ with index $-1$ (the roles are reversed for $t<0$). Since $\eta$ is periodic, it alternates between maxima and minima around a closed curve and a stagnation \emph{loop} breaks into at least two points, with equal numbers of each type, and the total index remains zero. Generally, a closed stagnation loop breaks into an even number of stagnation points. By \eqref{eq:bern}, $\eta$ is minus the pressure on the line (up to $c(t)$). At a pressure minimum, the fluid on the line converges axially towards the point (negative tangential eigenvalue), and at a pressure maximum it diverges from it.
\end{remark}

\begin{remark}[Open problem]
\Cref{cor:potbreak} describes the generic loss of a line. The converse question is when a stagnation line of \emph{arbitrary} shape can be maintained on a finite time interval by a suitable driving. The conditions of \cref{prop:potlaw}(ii) and \eqref{eq:second} are necessary, and at every order they only involve the driving. They are also sufficient in at least one non-trivial class: \cref{ex:ellipse} gives, for any analytic planar curve $f(x,y,t)=0$ with $\nabla f\neq0$, an exact (locally convergent, and for polynomial $f$ global) harmonic $\varphi$ with a persisting line of that shape and motion, which is not an orbit of any rigid motion. For curves that are not planar, we have not checked this and do not use it.
\end{remark}

\subsection{Examples}\label{sec:examples}

\begin{example}[Straight line in a time-dependent uniform stream]\label{ex:uniform}
Take $\varphi=\tfrac12\ep(t)(x^2-y^2)+W_1(t)x+W_2(t)y$, so $\mb v=(\ep x+W_1,\,-\ep y+W_2,\,0)$. The stagnation line is parallel to the $z$-axis, at $(x,y)=(-W_1/\ep,\,W_2/\ep)$. 

\emph{Step 1.} $\varphi_t=\tfrac12\dot\ep(x^2-y^2)+\dot W_1x+\dot W_2y$, so $\mb a=\nabla\varphi_t=(\dot\ep x+\dot W_1,\,-\dot\ep y+\dot W_2,\,0)$. At the line, $a_1=\dot W_1-\dot\ep W_1/\ep$, $a_2=\dot W_2-\dot\ep W_2/\ep$ and $a_T=0$.

\emph{Step 2.} The solvability condition holds ($a_T=0$; the line is invariant under translation along $z$), and \eqref{eq:law} gives
\[
U_1=-\frac{\dot W_1}{\ep}+\frac{\dot\ep\,W_1}{\ep^2},\qquad U_2=\frac{\dot W_2}{\ep}-\frac{\dot\ep\,W_2}{\ep^2},
\]
which are the time derivatives of $-W_1/\ep$ and $W_2/\ep$, as they should be: the law reproduces the exact motion of the line.

\emph{Step 3 (failure).} Add an axial stream $W_3(t)z$ to $\varphi$, with $W_3(0)=0$, $\dot W_3(0)\neq0$. Then $\partial_s\varphi_t=\dot W_3\neq0$ and the solvability condition fails: $v_z=W_3(t)\neq0$ for $t\neq0$, so there is no stagnation point at all (here $\eta=\dot W_3z$ is linear in $z$ and has no critical point). A uniform axial acceleration pushes the whole line of fluid along $z$; the flow has no axial strain to restrain it.
\end{example}

\begin{example}[Axial modulation: breaking into saddle points]\label{ex:break}
Take $\ep>0$ constant and
\be
\varphi=\tfrac12\ep(x^2-y^2)+\beta(t)\,\sin(qz)\cosh(qx),\quad \beta(0)=0,\ \ \dot \beta(0)\neq0,
\label{eq:exact}
\ee
which is harmonic for all $t$: the three second derivatives $\varphi_{xx}=\ep+\beta q^2\sin(qz)\cosh(qx)$, $\varphi_{yy}=-\ep$ and $\varphi_{zz}=-\beta q^2\sin(qz)\cosh(qx)$ sum to zero.

\emph{Step 1 (the line at $t=0$).} At $t=0$, $\beta=0$ and the $z$-axis is a stagnation line with strain rate $\ep$, and $\varphi_t|_{\rm axis}=\dot \beta\sin(qz)$, which is not constant: the solvability condition fails.

\emph{Step 2 (the exact stagnation set for $\beta\neq0$).} The velocity is
\[
\mb v=\big(\ep x+\beta q\sin(qz)\sinh(qx),\ -\ep y,\ \beta q\cos(qz)\cosh(qx)\big).
\]
$v_y=0$ gives $y=0$. Since $\cosh\ge1$ and $\beta\ne0$, $v_z=0$ gives $\cos qz=0$, so $\sin qz=\pm1\equiv\sigma_n=(-1)^n$ at $z_n=(n+\tfrac12)\pi/q$. Then $v_x=0$ reads $x\,[\ep+\sigma_n\beta q^2\,\sinh(qx)/(qx)]=0$. For $|qx|\le1$, $\sinh(qx)/(qx)\le\sinh1<1.2$, so if $|\beta|q^2<\ep/2$ the bracket is at least $0.4\ep>0$ and $x=0$ is the only solution. Hence the stagnation set near the axis consists of the isolated points
\[
x=y=0,\qquad z_n=\frac{(n+\tfrac12)\pi}{q},\quad n\in\mathbb{Z},
\]
the critical points of $\eta=\dot \beta\sin qz$, as predicted by \cref{cor:potbreak}. Here the result is exact rather than first order in $t$.

\emph{Step 3 (motion and type).} $\mb a=\nabla\varphi_t=\dot\beta\,(q\sin(qz)\sinh(qx),\,0,\,q\cos(qz)\cosh(qx))$ vanishes at the points, so by \eqref{eq:law} the points do not move (by symmetry). The Hessian at $z_n$ is diagonal, since the off-diagonal entries are $\varphi_{xz}=\beta q^2\cos(qz)\sinh(qx)=0$ and the others vanish identically:
\[
\mb E=\operatorname{diag}\big(\ep+\sigma_n\beta q^2,\,-\ep,\,-\sigma_n\beta q^2\big).
\]
The tangential eigenvalue is $-\sigma_n\beta q^2=t\,\eta''(z_n)$ for $\beta=\dot \beta\,t$, in agreement with \eqref{eq:poteig}, since $\eta''=-\dot\beta q^2\sin(qz_n)$. The points alternate between types $(+,-,-)$ and $(+,-,+)$ along the axis, with alternating index $\pm1$.

 On the axis, the pressure is $p=c(t)-\dot\beta\sin(qz)$ at $t=0$, so it oscillates with $z$. The axial velocity $v_z=\beta q\cos(qz)$ points towards the pressure minima and away from the maxima; the stagnation points sit where the axial pressure gradient vanishes, and each has axial strain of the sign dictated by whether it is a minimum (converging) or a maximum (diverging).
\end{example}

\begin{example}[A persisting line that is not an orbit: a pulsating ellipse]\label{ex:ellipse}
Let $f=\alpha(t)x^2+\beta(t)y^2-1$ with $\alpha=a^{-2}$, $\beta=b^{-2}$, and
\[
\varphi=f^2-\tfrac12z^2\,\Delta_2f^2+\tfrac1{24}z^4\,\Delta_2^2f^2,\qquad \Delta_2=\partial_x^2+\partial_y^2 .
\]
\emph{Step 1 (harmonic).} Here $\Delta_2f^2=(12\alpha^2+4\alpha\beta)x^2+(12\beta^2+4\alpha\beta)y^2-4(\alpha+\beta)$ and $\Delta_2^2f^2=24(\alpha^2+\beta^2)+16\alpha\beta$ is constant. Hence $\Delta\varphi=\Delta_2f^2-\Delta_2f^2+\tfrac12z^2(\Delta_2^2f^2-\Delta_2^2f^2)=0$ for all $t$.

\emph{Step 2 (the line).} $\varphi$ is even in $z$ and $\varphi_x=2ff_x+O(z^2)$, $\varphi_y=2ff_y+O(z^2)$, so $\nabla\varphi=\mb 0$ on $f=0$, $z=0$: the ellipse with semi-axes $a(t)$, $b(t)$ is a stagnation line for all $t$. There $\mb E=2\nabla f\nabla f-2|\nabla f|^2\mb e_z\mb e_z$, so $\ep=2|\nabla f|^2=8(\alpha^2x^2+\beta^2y^2)>0$, which varies along the line, and $\mb e_1=\nabla f/|\nabla f|$, $\mb e_2=\mb e_z$.

\emph{Step 3 (check against the law).} Every term of $\varphi_t$ contains $f$ or $z^2$, so $\varphi_t=0$ on the line: $\partial_s\varphi_t=0$ and the pressure is uniform, as \cref{prop:potlaw}(ii) requires. Moreover $\nabla\varphi_t=2f_t\nabla f$ there, so $a_1=2f_t|\nabla f|$, $a_2=0$, and \eqref{eq:law} gives $U_1=-a_1/\ep=-f_t/|\nabla f|$, which is nothing but the normal velocity of the level set $f(x,y,t)=0$. At $(x,y)=(a,0)$ this is $\dot a$.
\end{example}

Near the line, $\varphi\simeq f^2-z^2|\nabla f|^2$, so fluid is drawn in along $z$ towards the plane $z=0$ and expelled in that plane along the normal to the ellipse, outwards on the outer side and towards the centre on the inner side. The line is an elliptical analogue of the circular stagnation line of \eqref{eq:circle}, but it is regular, with no singular axis. Unlike the straight and circular cases, the strain rate is not uniform along the line: $\ep=8(x^2/a^4+y^2/b^4)$ equals $8/a^2$ at the ends of the major axis and $8/b^2$ at the ends of the minor axis, so it is largest where the nested level curves of $f$ are most crowded. The strain distribution along a persisting line is therefore set by the shape of the line and not only by symmetry. The strain never vanishes on the ellipse, so $\mb K^{-1}$ is regular everywhere on it.

\emph{Two flows, one line.} The sign of $\varphi$ is free: $\varphi$ and $-\varphi$ are both exact solutions of the Euler and Navier--Stokes equations, with the same pressure $p=c(t)-\varphi_t-\tfrac12|\nabla\varphi|^2$ up to the sign of $\varphi_t$, and both have the same stagnation ellipse. In the flow $\varphi$ the fluid approaches the plane $z=0$ from above and below and leaves in the plane along the normal to the ellipse: an \emph{elliptically conically inward} flow, fed from far away inside a wedge bounded by the stable surface $S$ of the line. In the flow $-\varphi$ all streamlines are reversed: fluid arrives in the plane from outside the ellipse and, after descending the axis, from inside it, the two streams collide along the ellipse, and both are ejected along $S$, an \emph{elliptically conically outward} flow. For $a=b$ the surface $S$ leaves the ring vertically and tends to the cone $z=\tfrac{\sqrt3}{2}r$; for $a\neq b$ we expect an elliptic cone, but have not computed it. The line theory does not distinguish the two flows: under $\varphi\to-\varphi$ both $a_1$ and $\ep$ change sign, so $U_1=-a_1/\ep$ is unchanged.

\emph{Symmetry and persistence.} For $a=b$ the flow is axisymmetric and has no swirl, $v_\theta=0$, so $\mb v\cdot\gb\xi=0$ with $\gb\xi=r\mb e_\theta$ and the circle persists by \cref{cor:sym}; it differs from the circular line of \eqref{eq:circle} in being regular on the axis. For $a\neq b$ the ellipse is an orbit of no rigid motion, so \cref{cor:sym} does not apply, yet the line persists: $\varphi_t$ vanishes on it identically, so the stagnation pressure is uniform, $\Sigma=0$, and by \eqref{eq:law} the line moves as the level set $f=0$, $U_1=-f_t/|\nabla f|$. For constant $a,b$ the flow is steady and the ellipse is a steady stagnation line; for time-dependent $a(t),b(t)$ it pulsates. Persistence therefore does not require symmetry, in line with \cref{rem:filament}: \cref{cor:sym} is sufficient but not necessary.

\section{Compressible potential flow}\label{sec:comp}

Consider an inviscid, irrotational, homentropic flow, $\mb v=\nabla\varphi$, with pressure $p(\rho)$, specific enthalpy $h(\rho)=\int\mathrm dp/\rho$ and sound speed $c^2=\mathrm dp/\mathrm d\rho$. With the function of time absorbed in $\varphi$, Bernoulli's equation is $\varphi_t+\tfrac12|\nabla\varphi|^2+h(\rho)=0$, and mass conservation is $\rho_t+\nabla\cdot(\rho\nabla\varphi)=0$. The flow is assumed smooth (no shocks; see \cref{sec:outlook}). No condition on the Mach number is imposed; at a stagnation point it is zero.

\begin{lemma}[Dilatation rate at a stagnation point]\label{lem:theta}
Where $\mb v=\mb 0$, $\ \theta:=\nabla^2\varphi=\varphi_{tt}/c^2$.
\end{lemma}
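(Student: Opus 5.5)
We start from the two governing equations and evaluate everything at a point where $\nabla\varphi=\mb 0$. Mass conservation, written as $\rho_t+\mb v\cdot\nabla\rho+\rho\,\nabla^2\varphi=0$, reduces at the stagnation point to $\theta=\nabla^2\varphi=-\rho_t/\rho=-\partial_t\ln\rho$. It therefore remains to express $\rho_t$ there through $\varphi$. For this we use Bernoulli's equation $\varphi_t+\tfrac12|\nabla\varphi|^2+h(\rho)=0$.

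Differentiating Bernoulli's equation in time at a fixed point of space gives
\[
\varphi_{tt}+\nabla\varphi\cdot\nabla\varphi_t+h'(\rho)\,\rho_t=0 .
\]
The middle term vanishes at the stagnation point, because $\nabla\varphi=\mb 0$ there. Since $h'(\rho)=p'(\rho)/\rho=c^2/\rho$, this yields $\rho_t/\rho=-\varphi_{tt}/c^2$. Substituting into the mass-conservation result gives $\theta=\varphi_{tt}/c^2$.

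The main obstacle is only a point of care. The time derivatives are taken at a fixed spatial point, not following a moving stagnation point. The identity is pointwise and instantaneous, so this is legitimate. The convective term $\mb v\cdot\nabla\rho$ in the continuity equation drops out by the same stagnation condition, and no derivative of $\mb v$ needs to vanish. Smoothness, with no shocks as assumed, is needed so that both equations can be differentiated. We also assume $c^2>0$.
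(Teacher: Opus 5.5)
Your proposal is correct and follows the paper's proof step for step: the continuity equation at $\mb v=\mb 0$ gives $\rho\theta=-\rho_t$, the time derivative of Bernoulli's equation at fixed $\mb x$ gives $\varphi_{tt}=-(c^2/\rho)\rho_t$ once $\nabla\varphi\cdot\nabla\varphi_t$ is dropped, and eliminating $\rho_t$ yields $\theta=\varphi_{tt}/c^2$. Your remarks on fixed-point time derivatives, smoothness and $c^2>0$ are consistent with the paper's assumptions.
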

\begin{proof}
\emph{Step 1 (mass).} At $\mb v=\mb 0$, $\nabla\cdot(\rho\mb v)=\rho\,\theta$, so $\rho\theta=-\rho_t$.
\emph{Step 2 (momentum).} Differentiate Bernoulli's equation in $t$ at fixed $\mb x$. The term $\partial_t\tfrac12|\nabla\varphi|^2=\mb v\cdot\nabla\varphi_t$ vanishes at $\mb v=\mb 0$, so $\varphi_{tt}=-h'(\rho)\rho_t=-(c^2/\rho)\rho_t$.
\emph{Step 3.} Eliminate $\rho_t$.
\end{proof}
\emph{Meaning.} This is the full-potential (wave) equation evaluated where its nonlinear terms vanish, and it is exact. For steady flow, $\theta=0$ on a stagnation set. For a flow varying on a time scale $\omega^{-1}$ and a length scale $\ell$, $\theta/\ep=O\big((\omega\ell/c)^2\big)$: the small parameter is a Helmholtz number, not a Mach number.

\begin{proposition}[Compressible potential-flow lines]\label{prop:comp}
Let $\mc C(t)$ be a stagnation curve, $\mb E=\nabla\nabla\varphi$ with normal eigenvalues $\lambda_1,\lambda_2$ and eigenvectors $\mb e_1,\mb e_2$, $\lambda_1+\lambda_2=\theta$, and $\lambda_1\lambda_2\neq0$ on $\mc C$. On $\mc C$, $\mb a=\nabla\varphi_t=-\nabla h=-\nabla p/\rho$, and:
\begin{itemize}
\item[(i)] $\mb E\mb U=-\mb a$, so $U_i=-a_i/\lambda_i$ ($i=1,2$), $U_T$ arbitrary.
\item[(ii)] $\Sigma=\lambda_1\lambda_2\,\partial_s\varphi_t$. Persistence requires $\varphi_t$ uniform along $\mc C$, equivalently $h$, $p$ and $\rho$ uniform.
\item[(iii)] If $\Sigma\neq0$, the line is replaced to first order in $t$ by points at the critical points $s_j$ of $\eta=\varphi_t|_{\mc C}$, at $\mb X(s_j)+t(U_1\mb e_1+U_2\mb e_2+c_j\mb t)+O(t^2)$, with eigenvalues $\lambda_{1,2}+O(t)$ and $\lambda_T=t\,\eta''(s_j)+O(t^2)$. The Poincar\'e index is $\mathrm{sign}\big(t\,\lambda_1\lambda_2\,\eta''(s_j)\big)$.
\item[(iv)] Persistence on an interval implies $\partial_s\big[c^2(\lambda_1+\lambda_2)-a_1^2/\lambda_1-a_2^2/\lambda_2\big]=0$.
\end{itemize}
\end{proposition}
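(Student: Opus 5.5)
The plan is to specialise the general kinematics of \cref{prop:law}, in the compressible form of \cref{rem:wherenc}, to the case $\mb L=\mb E$. No new analytic machinery is needed. First, from Bernoulli's equation $\varphi_t+\tfrac12|\nabla\varphi|^2+h=0$, take the gradient. Since $\nabla\tfrac12|\nabla\varphi|^2=\mb E\mb v=\mb 0$ on $\mc C$, this gives $\mb a=\nabla\varphi_t=-\nabla h=-\nabla p/\rho$, where $\mathrm dh=\mathrm dp/\rho$.

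Irrotationality gives $\mb m=\mb 0$, and $\mb K$ is symmetric with eigenvalues $\lambda_1,\lambda_2$ and $\mathrm{tr}\,\mb K=\theta$.

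\emph{Part (i).} Step 1 of \cref{prop:law} ($\mb a+\mb L\mb U=\mb 0$) used no incompressibility. In the eigenbasis, $\mb E\mb U=-\mb a$ reads $\lambda_iU_i=-a_i$ and $0\cdot U_T=-a_T$. This gives $U_i=-a_i/\lambda_i$ and leaves $U_T$ free.

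\emph{Part (ii).} I evaluate $\Sigma=\mb a\cdot(\mb E\mb e_1\times\mb E\mb e_2)=\lambda_1\lambda_2\,\mb a\cdot\mb t=\lambda_1\lambda_2\,\partial_s\varphi_t$. This is consistent with \cref{rem:wherenc}, since $\det\mb K\,a_T$ is the only surviving term. As a direct check, I would also differentiate $\varphi(\mb X(\sigma,t),t)=\Phi(t)$ in time, using \cref{lem:local}(a), whose proof is unchanged. Then $\varphi_t$ is uniform along $\mc C$, so $h=-\varphi_t$ is uniform there. Since $h$ is a monotone function of $\rho$ (because $c^2>0$) and $p=p(\rho)$, both $\rho$ and $p$ are uniform too.

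\emph{Part (iii).} I rerun the proof of \cref{prop:break}, with two changes: $\det\mb A=\lambda_1\lambda_2+O(t)$ replaces $-\Delta$, and the trace is $\theta\neq0$. Steps 1--5 go through verbatim. They give $v_T=t(a_T-\mb m\cdot\mb K^{-1}\mb a_n)=t\,\eta'(s)$ along $\mc C^*$, so zeros occur at the critical points of $\eta$ and are located as stated. They also give $\det\nabla\mb v=\lambda_1\lambda_2\,t\,\eta''(s_j)+O(t^2)$, which yields the index. For the eigenvalues, the characteristic polynomial has $c_2=\lambda_1\lambda_2+O(t)$ and $\det=O(t)$. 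Hence two roots are $\lambda_{1,2}+O(t)$, and the small root is $\det/c_2=t\eta''+O(t^2)$. This holds because $\lambda_1\lambda_2\neq0$ keeps the small root separated from the others even when $\theta\neq0$.

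\emph{Part (iv).} I repeat the proof of \cref{prop:second}. Differentiating $\varphi_t(\mb X,t)=\Phi'(t)$ along tracked points gives $\varphi_{tt}+a_1U_1+a_2U_2=\Phi''$, and hence $\varphi_{tt}-a_1^2/\lambda_1-a_2^2/\lambda_2$ is uniform. I then substitute $\varphi_{tt}=c^2\theta=c^2(\lambda_1+\lambda_2)$ from \cref{lem:theta}.

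The main obstacle is part (iii). The proof of \cref{prop:break} silently used $\mathrm{tr}=0$ and $\det\mb K=-\Delta$ in Step 6, so I must check that only $c_2\neq0$ is really needed there. I must also confirm that the cancellation producing $\lambda_T=t\eta''$ survives without harmonicity. This is direct in the Schur-complement form: $t\partial_sH=t\eta''$, because $\Sigma/\det\mb K=\partial_s\varphi_t$ exactly, so no curvature terms appear and \cref{lem:local}(d) is not needed.
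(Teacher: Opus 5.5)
Your proposal is correct and follows essentially the paper's route. The paper's proof simply applies \cref{rem:wherenc} with $\mb m=\mb 0$; you re-derive what that remark asserts, including the check that Step~6 of \cref{prop:break} needs only $c_2=\lambda_1\lambda_2\neq0$. For (iv) the paper argues exactly as you do, via \cref{prop:second} and \cref{lem:theta}.

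One small point in (iii) concerns the case $\lambda_1=\lambda_2$, which is possible for node lines, e.g.\ $\alpha=\beta$ in \cref{ex:acoustic2}. At a double root, an $O(t)$ change in the coefficients of the characteristic polynomial only gives an $O(t^{1/2})$ shift. The estimate $\lambda_{1,2}+O(t)$ should therefore be justified by the symmetry of $\nabla\mb v=\mb E$ (Weyl's inequality), not by the coefficients alone.
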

\begin{proof}
Apply \cref{rem:wherenc} with $\mb m=\mb 0$. For (iv), differentiate $\varphi_t(\mb X(\sigma,t),t)=\Phi'(t)$ in $t$ as in \cref{prop:second}: $\varphi_{tt}-a_1^2/\lambda_1-a_2^2/\lambda_2$ is uniform. By \cref{lem:theta}, $\varphi_{tt}=c^2(\lambda_1+\lambda_2)$, and $c=c(\rho)$ is uniform by (ii).
\end{proof}

\begin{remark}[Types]
$\mb E$ is symmetric, so lines are saddles ($\lambda_1\lambda_2<0$) or nodes ($\lambda_1\lambda_2>0$; sources if $\theta>0$, sinks if $\theta<0$), never centres or foci. Node lines need $\theta=\varphi_{tt}/c^2\neq0$, that is, a density changing on the line.
\end{remark}

\begin{remark}[Degenerate lines at ordinary points]\label{rem:fold}
In incompressible potential flow, $\det\mb K=-\ep^2$ vanishes only if $\mb K=\mb 0$ (\cref{rem:potdelta0}). In compressible flow, $\lambda_1\lambda_2=0$ with $\lambda_1+\lambda_2=\theta\neq0$ means that one eigenvalue, say $\lambda_2$, vanishes and $\lambda_1=\theta$, so $\mb K=\theta\,\mb e_1\otimes\mb e_1$ has rank one. This happens at isolated points $s_*$ of a line where $\lambda_2$ changes sign, that is, where the line changes between saddle and node type. There $U_2=-a_2/\lambda_2$ diverges (\cref{rem:weak}). Let $\eta=\varphi_t|_{\mc C}$ and $\lambda_2'=\partial_s\lambda_2(s_*)=\varphi_{T22}$ (differentiate $\mb E\mb e_2=\lambda_2\mb e_2$ along the curve and use $\mb e_2\cdot\mb E\mb e_1=\mb e_2\cdot\mb E\mb t=0$). At the point $\mb X(s_*+\varsigma)+\delta_1\mb e_1+\delta_2\mb e_2$ at time $t$, with $\delta_2=O(t^{1/2})$, $\varsigma=O(t^{1/2})$ and $\delta_1=O(t)$,
\begin{align}
v_T&=t\,\eta'(s_*)+\tfrac12\lambda_2'\,\delta_2^2+O(t^{3/2}),\\
v_2&=t\,a_2+\lambda_2'\,\varsigma\,\delta_2+\tfrac12\varphi_{222}\,\delta_2^2+O(t^{3/2}),
\end{align}
while $v_1=0$ fixes $\delta_1$. Put $\delta_2=t^{1/2}\xi$ and $\varsigma=t^{1/2}\sigma$. If $t\,\eta'(s_*)/\lambda_2'<0$ there are exactly two stagnation points near $\mb X(s_*)$, with
\[
\xi_\pm=\pm\Big(-\frac{2\eta'(s_*)}{\lambda_2'}\Big)^{1/2},\qquad
\sigma_\pm=-\frac{a_2+\tfrac12\varphi_{222}\,\xi_\pm^2}{\lambda_2'\,\xi_\pm},
\]
and there are none for the opposite sign of $t$. This is a saddle--node event of the line. It differs from \cref{prop:rank1,prop:K0}, where the rank of $\mb L$ drops on a whole arc (two solvability conditions) rather than at an isolated point of a line that is otherwise of rank two, so that one condition ($\Sigma=0$) still governs the line away from $s_*$. As a check, take $\varphi=\tfrac12x^2+\tfrac12\mu zy^2+\tfrac16ky^3+t(a_Tz+a_2y)$, for which $\lambda_1=1$, $\lambda_2=\mu z$, $\eta'=a_T$ and $\varphi_{222}=k$. Its exact zeros are $x=0$, $y=\pm(-2a_Tt/\mu)^{1/2}$ and $z=y\,(\mu a_2-ka_T)/(2\mu a_T)$, in agreement. (This model tests the kinematics only. The condition $\varphi_{tt}=c^2\nabla^2\varphi$ fixes $\varphi_{tt}$ on the line and enters the velocity at $O(t^2)$.)
\end{remark}

\begin{example}[Acoustic standing wave: a node line]\label{ex:acoustic}
Let $\varphi=\cos(\alpha x)\cos(\beta y)\cos\omega t$ with $\alpha^2+\beta^2=\omega^2/c^2$ (linear acoustics, constant $c$). Then $\mb v=\mb 0$ on $x=y=0$ for all $t$, $\mb E=-\cos\omega t\,\mathrm{diag}(\alpha^2,\beta^2,0)$, $\theta=-(\omega^2/c^2)\cos\omega t=\varphi_{tt}/c^2$, and $\mb a=\mb 0$ there, so $\Sigma\equiv0$ and $\mb U_n=\mb 0$. The line is a sink for $\cos\omega t>0$ and a source for $\cos\omega t<0$, and is degenerate at $\cos\omega t=0$. The pressure perturbation $-\rho_0\varphi_t$ is uniform along the line, as \cref{prop:comp}(ii) requires.
\end{example}

\begin{example}[Two standing waves: a node line breaks into sinks and saddles]\label{ex:acoustic2}
Work in linear acoustics (constant $c$, $|\mb v|\ll c$); the statements below are exact within this model. Let
\be
\varphi=\cos(\alpha x)\cos(\beta y)\cos\omega t+\delta\cos(\lambda y)\cos(\gamma z)\sin\Omega t,
\label{eq:acoustic2}
\ee
with $\delta\neq0$, $\alpha^2+\beta^2=\omega^2/c^2$ and $\lambda^2+\gamma^2=\Omega^2/c^2$, so that $\varphi_{tt}=c^2\nabla^2\varphi$.

\emph{Step 1 (the line at $t=0$).} At $t=0$ the second term vanishes and $\mb v=\nabla[\cos(\alpha x)\cos(\beta y)]$, which is zero on the $z$-axis. There $\mb E=\mathrm{diag}(-\alpha^2,-\beta^2,0)$: a sink-type node line with $\theta=-\omega^2/c^2$ and $\lambda_1\lambda_2=\alpha^2\beta^2>0$. On the axis $\eta(z)=\varphi_t(0,0,z,0)=\delta\Omega\cos\gamma z$ is not constant, so $\Sigma=\lambda_1\lambda_2\,\eta'=-\alpha^2\beta^2\delta\Omega\gamma\sin\gamma z\not\equiv0$ and the solvability condition of \cref{prop:comp}(ii) fails.

\emph{Step 2 (the exact stagnation set).} The velocity is
\begin{align*}
v_x&=-\alpha\sin\alpha x\cos\beta y\cos\omega t,\\
v_y&=-\beta\cos\alpha x\sin\beta y\cos\omega t-\delta\lambda\sin\lambda y\cos\gamma z\sin\Omega t,\\
v_z&=-\delta\gamma\cos\lambda y\sin\gamma z\sin\Omega t .
\end{align*}
Let $|t|$ be small with $\sin\Omega t\neq0$. Near the axis $\cos\lambda y\neq0$, so $v_z=0$ gives $\sin\gamma z=0$, that is $z_n=n\pi/\gamma$; $v_x=0$ gives $x=0$; and at $x=0$, $z=z_n$ the component $v_y$ vanishes at $y=0$ with $\partial_yv_y=-[\beta^2\cos\omega t+(-1)^n\delta\lambda^2\sin\Omega t]\neq0$, so $y=0$ is the only zero there. Hence near the axis the stagnation set is exactly the isolated points $(0,0,z_n)$, $n\in\mathbb Z$, the critical points of $\eta$, as in \cref{prop:comp}(iii).

\emph{Step 3 (motion and type).} At $t=0$, $\mb a=\delta\Omega(0,-\lambda\sin\lambda y\cos\gamma z,-\gamma\cos\lambda y\sin\gamma z)$ has no normal component on the axis, so $\mb U_n=\mb 0$ and the points do not move. The Hessian at $z_n$ is diagonal,
\begin{align}
\mb E&=\mathrm{diag}\big(-\alpha^2\cos\omega t,\ -\beta^2\cos\omega t-(-1)^n\delta\lambda^2\sin\Omega t,\nonumber \\  &\quad-(-1)^n\delta\gamma^2\sin\Omega t\big),
\end{align}
with trace $-(\omega^2/c^2)\cos\omega t-(-1)^n\delta(\Omega^2/c^2)\sin\Omega t=\varphi_{tt}/c^2$, as \cref{lem:theta} requires. The tangential eigenvalue is $-(-1)^n\delta\gamma^2\sin\Omega t=t\,\eta''(z_n)+O(t^3)$, in agreement with \cref{prop:comp}(iii). The two normal eigenvalues stay negative for small $t$. For $\delta t>0$ the points with $n$ even are sinks $(-,-,-)$ and those with $n$ odd are saddles $(-,-,+)$, with the roles exchanged for $\delta t<0$. The Poincar\'e indices $\mathrm{sign}(t\,\alpha^2\beta^2\eta'')$ alternate. In incompressible potential flow every such point is a saddle, as in \cref{ex:break}; here the sign of $\lambda_1\lambda_2$ allows sinks. As in the incompressible case, a sink sits at a pressure minimum on the line ($\eta=\varphi_t=-p'/\rho_0$ is a maximum there).
\end{example}

\section{Stokes flow}\label{sec:stokes}

\subsection{Setting}

Consider the Stokes equations $\mu\nabla^2\mb v=\nabla p-\mb f$, $\nabla\cdot\mb v=0$, with viscosity $\mu$ and a force density $\mb f$ supported on singularities or outside the fluid domain. There is no time derivative in the momentum equation, so the flow is quasi-steady: $\mb v(\cdot,t)$ is determined at each instant by the data (boundary velocities, forces) at that instant. The operator is linear and independent of time, so differentiating the equations in $t$ at fixed $\mb x$ gives $\mu\nabla^2\mb a=\nabla p_t-\mb f_t$, $\nabla\cdot\mb a=0$: the local acceleration field $\mb a=\partial_t\mb v$ is itself a Stokes flow, driven by the time derivatives of the data (plus a shape-derivative contribution if the boundaries move). \Cref{prop:law,prop:break,cor:sym} apply unchanged, because they used only incompressibility.

Every harmonic potential flow $\mb v=\nabla\varphi$ is a Stokes flow, since $\nabla^2\mb v=\nabla\nabla^2\varphi=\mb 0$ with $p$ constant; in particular, \cref{ex:uniform,ex:break} are exact Stokes flows. Two things are different from \cref{sec:pot}.
\begin{itemize}
\item There is no Bernoulli relation. On the line, $\nabla p=\mu\nabla^2\mb v$ involves second derivatives of $\mb v$ and is unrelated to $\mb a$, which is the acceleration of the fluid particle at the line but not a pressure gradient. The solvability condition is therefore purely kinematic, and the statement ``the pressure is uniform along the line'' has no analogue.
\item $\mb L$ is not symmetric and the vorticity on the line need not vanish.
\end{itemize}

\subsection{Types of stagnation lines and the normal velocity}

By \eqref{eq:Kdecomp}, the normal block is $\mb K=\mb S+\tfrac12\omega_T\mb J$, with eigenvalues $\pm(\ep^2-\tfrac14\omega_T^2)^{1/2}$. If $\ep>|\omega_T|/2$ the line is of saddle type, as in potential flow. If $\ep<|\omega_T|/2$ the line is of centre type: in the normal plane the flow near the line circulates around it with angular frequency $(\tfrac14\omega_T^2-\ep^2)^{1/2}$. This cannot occur in potential flow. 

\emph{The law.} \Cref{prop:law} holds in both cases. Writing out $\mb U_n=-\mb K\mb a_n/\Delta$ in components with $K_{22}=-K_{11}$,
\begin{equation}
\begin{split}
U_1 &= -\frac{K_{11}a_1+K_{12}a_2}{\Delta}, \\[1ex]
U_2 &= -\frac{K_{21}a_1-K_{11}a_2}{\Delta}, \qquad \Delta=\ep^2-\tfrac14\omega_T^2 ,
\end{split}
\label{eq:Vstokes}
\end{equation}
which reduces to \eqref{eq:law} when $\omega_T=0$ and $\mb K=\mathrm{diag}(\ep,-\ep)$, and which fails at $\ep=|\omega_T|/2$, where the line can bifurcate.  The rotation $\tfrac12\omega_T\mb J$ in $\mb K$ turns the displacement of the line away from the strain-only result. In the limit where the rotation dominates ($\ep\ll|\omega_T|/2$, centre type) $\mb K\simeq\tfrac12\omega_T\mb J$, $\Delta\simeq-\omega_T^2/4$, and \eqref{eq:Vstokes} gives $\mb U_n\simeq(2/\omega_T)\mb J\mb a_n$: the line moves at right angles to the acceleration, at speed $2|\mb a_n|/|\omega_T|$, like a vortex core pushed sideways by a force.

\subsection{The solvability condition}

To first order in $t$, the fluid velocity at a point displaced from the line by $t\mb U_n$ is $t(\mb a+\mb L\mb U_n)$, with tangential component $t(a_T+\mb m\cdot\mb U_n)$, where $m_\beta=\partial_\beta v_T$ is the shear of the axial velocity across the line. If the displaced point is to stay stagnant, this must vanish. Multiplying by $-\Delta$ and using $\mb U_n=-\mb K\mb a_n/\Delta$, this is \eqref{eq:solv}:
\be
\Sigma=\mb m\cdot\mb K\mb a_n-\Delta\,a_T=0 .
\ee
By \eqref{eq:vort}, $\mb m\cdot\mb U_n=\mb t\cdot(\gb\omega\times\mb U)$, so the correction to the potential-flow condition $a_T=0$ is governed by the vorticity \emph{normal} to the line, $\gb\omega_n$, acting on the normal motion of the line. The line moves, with velocity $\mb U_n$, across a field in which the axial velocity varies at the rate $\mb m$ per unit normal distance. It therefore ``sees'' a changing axial velocity $\mb m\cdot\mb U_n$ per unit time, and this must be cancelled by the axial acceleration $a_T$ of the fluid. In potential flow $\gb\omega=\mb 0$ and the condition reduces to $\partial_s\varphi_t=0$, the uniform-pressure condition.

\subsection{Persistence by symmetry}

In Stokes flow, invariance of the flow under $G_\theta$ does not by itself imply $\mb v\cdot\gb\xi=0$: the component of the velocity along the orbit (axial velocity, swirl) is constant along an orbit but is in general not zero. An orbit is then a stagnation line only for isolated values of the parameters, and moreover generically not persistent. By \cref{cor:sym}, a $G_\theta$-orbit that is a stagnation line persists if in addition $\mb v\cdot\gb\xi=0$ everywhere. Concretely: a straight line in a two-dimensional flow with no axial component; a circle in an axisymmetric flow without swirl; and a helix in a helically symmetric flow with $v_z+krv_\theta=0$ (here $\gb\xi=\mb e_z+kr\mb e_\theta$ is the screw motion that leaves $\chi=\theta-kz$ invariant). The last condition is restrictive for a chiral flow, so the helix is the least natural of the three in Stokes flow.

\subsection{Loss of a line}

By \cref{prop:break}, to first order in $t$ the stagnation points that replace a line are located at the \emph{zeros} of $\Sigma$, not at the critical points of a function. This has two consequences that are absent from potential flow.
\begin{itemize}
\item[(a)] For a closed stagnation loop, $\Sigma$ need not have any zero. If $\Sigma$ has a single sign on the loop, then for small $t\neq0$ there is no stagnation point near the loop at all, and the loop dissolves without leaving anything behind. In potential flow, $\Sigma=-\ep^2\eta'$ with $\eta$ periodic and $\ep>0$, so $\Sigma$ always has at least two zeros, because the integral of $\eta'$ around the loop vanishes (see \cref{ex:stokesper}).
\item[(b)] When zeros exist, there is an even number of them and the sign of $\Sigma'$ alternates, so the points alternate in index along the line, as in potential flow (\cref{cor:potbreak}). All of them have the same normal type, saddle or centre, since $\Delta$ is continuous and non-zero on the loop and so has one sign. The zeros need not be critical points of any periodic function (\cref{ex:stokesper}).
\end{itemize}
The next example is exact and shows the term $\mb m\cdot\mb K\mb a_n$ of \eqref{eq:solv}.

\begin{example}[Linear Stokes flow: the shear term]\label{ex:linear}
Take $\mb v=\mb L\mb x+\mb W(t)$ with a constant traceless matrix $\mb L$ whose third column vanishes ($\mb L\mb e_z=\mb 0$),
\[
v_1=\ep x+W_1(t),\qquad v_2=-\ep y+W_2(t),\qquad v_3=\gamma x+W_3(t),
\]
with constants $\ep>0$ and $\gamma$.

\emph{Step 1 (it is a Stokes flow).} $\nabla\cdot\mb v=\ep-\ep+0=0$ and $\nabla^2\mb v=\mb 0$, so with $p$ constant the Stokes equations hold.

\emph{Step 2 (the line).} $\mb v=\mb 0$ requires $\ep x=-W_1$, $\ep y=W_2$ and $\gamma x=-W_3$. The first two give a line parallel to the $z$-axis at $(x,y)=(-W_1/\ep,W_2/\ep)$; the third is satisfied on it if and only if
\[
D(t):=W_3-\frac{\gamma}{\ep}W_1=0 ,
\]
and otherwise there is no stagnation point at all.

\emph{Step 3 (comparison with the general law).} Here $\mb t=\mb e_z$, $\mb K=\mathrm{diag}(\ep,-\ep)$, $\Delta=\ep^2$, $\mb m=(\gamma,0)$ (the third row of $\mb L$), $\omega_T=0$, and the vorticity is $(0,-\gamma,0)$: normal to the line, $\omega_2=-\gamma$, consistent with $m_1=-\omega_2$. The acceleration is $\mb a=\dot{\mb W}$, so $\mb U_n=-\mb K^{-1}\mb a_n=(-\dot W_1/\ep,\dot W_2/\ep)$, the time derivative of the position of the line, and
\[
\Sigma=\mb m\cdot\mb K\mb a_n-\Delta a_T=\gamma\ep\dot W_1-\ep^2\dot W_3=-\ep^2\,\frac{\mathrm d D}{\mathrm dt},
\]
in agreement with \eqref{eq:solv}.

The condition is not $\dot W_3=0$ as in potential flow ($\gamma=0$): a stream along the line is allowed if it is compensated by the motion of the line across the shear $\gamma$. If $D(0)=0$ and $\dot D(0)\neq0$, then the line exists at $t=0$ and has dissolved for all $t\neq0$ with $D(t)\neq0$; this is the analogue, for an unbounded straight line, of case (a) above (and it is exact, not only first order in $t$).
\end{example}

\begin{example}[Periodic Stokes line: dissolution, and zeros of $\Sigma$ that are critical points of nothing]\label{ex:stokesper}
Let $\ep>0$, $\gamma,c,\beta\neq0$, $q>0$, and
\[
\mb v=\begin{pmatrix}
    \ep x+ct \\ -\ep y+t\beta\sin(qz)\sinh(qy) \\ \gamma x+t\beta\cos(qz)\cosh(qy)
\end{pmatrix}.
\]
\emph{Step 1 (Stokes flow).} $\mb v=\mb v_0+t\mb a$ with $\mb v_0=(\ep x,-\ep y,\gamma x)$ and
$\mb a=(c,\ \beta\sin qz\sinh qy,\ \beta\cos qz\cosh qy)$. Every component is harmonic, and
$\nabla\cdot\mb v=\ep-\ep+t\beta q(\sin qz\cosh qy-\sin qz\cosh qy)=0$, so with $p$ constant the Stokes equations hold.

\emph{Step 2 (the line and $\Sigma$).} At $t=0$ the $z$-axis is a stagnation line with $\mb K=\mathrm{diag}(\ep,-\ep)$, $\Delta=\ep^2$, $\mb m=(\gamma,0)$,
$\omega_T=0$ and vorticity $(0,-\gamma,0)$. On it $\mb a=(c,0,\beta\cos qz)$, so
\[
\Sigma(z)=\mb m\cdot\mb K\mb a_n-\Delta a_T=-\ep^2\Big(\beta\cos qz-\frac{\gamma c}{\ep}\Big).
\]
\emph{Step 3 (exact stagnation set).} $v_x=0$ gives $x=-ct/\ep$ exactly, which is $t\,U_1=-t\,a_1/\ep$. In $|qy|\le1$,
$v_y=y\,[-\ep+t\beta q\sin qz\,S(qy)]$ with $S(u)=\sinh u/u\le\sinh1<1.2$, so if $|t|\,|\beta|q<\ep/2$ the bracket is non-zero and $y=0$.
Then $v_z=\gamma x+t\beta\cos qz=-t\,\Sigma(z)/\ep^2$ exactly. Hence, near the axis, the stagnation set is
$\{(-ct/\ep,\,0,\,z):\Sigma(z)=0\}$ for $t\neq0$.

\emph{Case $|\gamma c|>\ep|\beta|$.} $\Sigma$ has one sign, there is no stagnation point near the axis, and the line dissolves without leaving anything.

\emph{Case $|\gamma c|<\ep|\beta|$.} $\Sigma$ has two simple zeros per period, $z=\pm z_*+2\pi n/q$ with $z_*=q^{-1}\arccos(\gamma c/\ep\beta)$.
The Jacobian is lower triangular, so the eigenvalues are exactly
$\ep,\ -\ep+t\beta q\sin qz,\ -t\beta q\sin qz=-t\,\Sigma'(z)/\Delta$, in agreement with \eqref{eq:eig}, and the tangential one alternates in sign along the axis.
The mean of $\Sigma$ over a period is $\gamma\ep c\neq0$, so $\Sigma$ is not (a multiple of) the derivative of any periodic function:
these zeros are not the critical points of anything. In the potential-flow limit $\gamma=0$ one recovers $\Sigma=-\ep^2\beta\cos qz$, which has zero mean.
\end{example}

\subsection{Second-order condition}

The condition for persistence beyond first order follows by differentiating $\mb v(\mb X(\sigma,t),t)=\mb 0$ twice in time, and the hierarchy continues as in potential flow. We do not write the second-order condition explicitly: unlike \eqref{eq:second}, it does not reduce to the uniformity of a scalar along the line, because the tangential part of the second derivative contains $\mb m\cdot\dot{\mb U}_n$ with $\dot{\mb U}_n$ determined by the normal part, and the result depends on $\mb K^{-1}$, $\mb m$ and the second derivatives of $\mb v$.

\section{Closure: flows driven by moving singularities}\label{sec:closure}

The law needs $\mb a=\partial_t\mb v$ on the line. This is a property of the whole flow, not of the line: it is fixed by the boundary data and the singularities. The question here is how to evaluate it in a concrete case.

\subsection{Potential flow}

For a flow in free space driven by point sources of strength $q_j(t)$ at positions $\mb{x}_j(t)$,
\be
\varphi=\sum_j q_j(t)\,G(\mb{x}-\mb{x}_j(t)),\qquad G(\mb{r})=-\frac{1}{4\pi|\mb{r}|},
\label{eq:sources}
\ee
where $G$ satisfies $\nabla^2G=\delta(\mb r)$, so that each source emits the volume flux $q_j$.  The time derivative at fixed $\mb x$ has two parts, one from the change of the strength and one from the motion of the source (the argument of $G$ changes at the rate $-\dot{\mb x}_j$):
\be
\varphi_t=\sum_j\Big[\dot q_j\,G(\mb{x}-\mb{x}_j)-q_j\,\dot{\mb{x}}_j\cdot\nabla G(\mb{x}-\mb{x}_j)\Big].
\label{eq:phit}
\ee
Substituting \eqref{eq:phit} into \eqref{eq:law} gives an explicit motion law for the stagnation set in terms of the motion of the singularities, and the solvability condition $\partial_s\varphi_t=0$ becomes a condition on $(\dot q_j,\dot{\mb{x}}_j)$. The same holds for a flow with bodies, with the Neumann data of $\varphi_t$ determined by the body accelerations.

\subsection{Stokes flow}

For Stokes flow driven by point forces $\mb F_j(t)$ at $\mb x_j(t)$ in free space,
\be
\mb v=\sum_j\mb S(\mb x-\mb x_j(t))\cdot\mb F_j(t),\qquad \mb S(\mb r)=\frac1{8\pi\mu}\left(\frac{\mb I}{|\mb r|}+\frac{\mb r\mb r}{|\mb r|^3}\right),
\ee
where $\mb S$ is the Stokeslet (Oseen tensor). Differentiating at fixed $\mb x$, by the same argument as above,
\be
\mb a=\sum_j\Big[\mb S(\mb x-\mb x_j)\cdot\dot{\mb F}_j-(\dot{\mb x}_j\cdot\nabla)\mb S(\mb x-\mb x_j)\cdot\mb F_j\Big],
\ee
where $(\dot{\mb x}_j\cdot\nabla)\mb S$ is the derivative of $\mb S$ with respect to its argument in the direction $\dot{\mb x}_j$. Both $\mb L=\nabla\mb v$ and $\mb a$ are explicit, so $\mb K$, $\mb m$, $\ep$, $\omega_T$ and $\Sigma$ are explicit on the line, the normal velocity is given by \cref{prop:law}(ii), and the solvability condition $\Sigma=0$ becomes a condition on $(\dot{\mb F}_j,\dot{\mb x}_j)$. The same holds for a flow with bodies, with the boundary data of $\partial_t\mb v$ determined by the accelerations of the bodies.

\subsection{Advancing the line, and the strain-rate equation}

When the singular representation is available, the curve can be advanced by $\partial_t\mb X=\mb U_n$ (with $U_T=0$) with $\mb L$ recomputed from the representation at the current position of the curve. 

If a local description of the curve in potential flow in terms of $(\mb X,\ep,\alpha)$ is wanted, the strain rate must be tracked too. Write $\psi=\varphi_t$. Following a point of the curve with $U_T=0$, the strain rate $\ep=\mb e_1\cdot\mb E\mb e_1$ changes at the \emph{exact instantaneous} rate
\be
\frac{\mathrm{D}\ep}{\mathrm{D}t}=\psi_{11}+\varphi_{111}U_1+\varphi_{112}U_2 .
\label{eq:epdot}
\ee
\emph{Derivation.} $\ep=e_{1i}E_{ij}e_{1j}$. Its total rate of change has three parts: the explicit time dependence of $\mb E$, giving $e_{1i}\psi_{ij}e_{1j}=\psi_{11}$; the motion of the point through the gradient of $\mb E$, giving $U_k\varphi_{ijk}e_{1i}e_{1j}=\varphi_{11k}U_k$ (only $U_1,U_2$ appear since $U_T=0$); and the rotation of $\mb e_1$, giving $2\,\dot{\mb e}_1\cdot\mb E\mb e_1=2\ep\,\dot{\mb e}_1\cdot\mb e_1=0$ because $\mb e_1$ is a unit vector. This gives \eqref{eq:epdot}. 

Equation \eqref{eq:epdot} is not closed, but the line fixes more of its right-hand side than the local data
$\varphi_{TT1},\varphi_{TT2}$ of \eqref{eq:third} suggest. The following gives the exact count.

\begin{proposition}[Third-order data on a potential-flow line]\label{prop:third}
Let $\varrho:=\partial_s\alpha+\tau$. On $\mc C$:
\begin{itemize}
\item[(a)] $\varphi_{T11}=\ep'$, $\ \varphi_{T22}=-\ep'$, $\ \varphi_{T12}=2\ep\varrho$, together with \eqref{eq:third}.
\item[(b)] $\varphi_{111}+\varphi_{122}=\kappa\ep\cos\alpha$ and $\varphi_{112}+\varphi_{222}=\kappa\ep\sin\alpha$.
Hence the only third derivatives not fixed by the line are $\varphi_{111}$ and $\varphi_{112}$.
\item[(c)] In tube coordinates $\mb x=\mb X(s)+\rho_1\mb e_1+\rho_2\mb e_2$ the axial velocity is
$v_T=\tfrac12\ep'(\rho_1^2-\rho_2^2)+2\ep\varrho\,\rho_1\rho_2+O(\rho^3)$.
\item[(d)] Following a point with $U_T=0$, with $\psi=\varphi_t$,
\begin{align}
\frac{\mathrm D\ep}{\mathrm Dt}&=\psi_{11}+\varphi_{111}U_1+\varphi_{112}U_2,\\
2\ep\,\Omega&=\psi_{12}+\varphi_{112}U_1+\varphi_{122}U_2,\qquad \Omega:=\mb e_2\cdot\frac{\mathrm D\mb e_1}{\mathrm Dt},
\end{align}
and $\partial_t\alpha=\Omega-\omega_F$ with $\omega_F$ from \cref{prop:geom}. Moreover
$\psi_{11}+\psi_{22}=\kappa\,(a_1\cos\alpha-a_2\sin\alpha)$.
The nonlocal data left undetermined are $\varphi_{111},\varphi_{112},\psi_{11}-\psi_{22},\psi_{12}$.
\end{itemize}
\end{proposition}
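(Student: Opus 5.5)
The plan is to obtain every statement by differentiating the eigen-identities of $\mb E$ along the curve and in time, and then to close with harmonicity of $\varphi$ and $\psi=\varphi_t$. The first ingredient is the derivatives of the strain frame along $\mc C$. From \eqref{eq:rot} and the computation after it, $\partial_s\mb e_1=\varrho\,\mb e_2-\kappa\cos\alpha\,\mb t$. The same calculation gives $\partial_s\mb e_2=-\varrho\,\mb e_1-\kappa\sin\alpha\,\mb t$.

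For (a) I will differentiate the three scalar identities $\mb e_1\cdot\mb E\mb e_1=\ep$, $\mb e_2\cdot\mb E\mb e_2=-\ep$ and $\mb e_1\cdot\mb E\mb e_2=0$ along $s$, using $\mathrm dE_{ij}/\mathrm ds=\varphi_{ijk}t_k$.

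\begin{itemize}
\item In the diagonal identities the frame-derivative terms are $2\,\partial_s\mb e_1\cdot\mb E\mb e_1=2\ep\,\partial_s\mb e_1\cdot\mb e_1=0$, and similarly for $\mb e_2$. This leaves $\varphi_{T11}=\ep'$ and $\varphi_{T22}=-\ep'$.
\item In the off-diagonal identity they contribute $\varrho\,\mb e_2\cdot\mb E\mb e_2-\varrho\,\mb e_1\cdot\mb E\mb e_1=-2\ep\varrho$. The $\mb t$-parts are killed because $\mb E\mb t=\mb 0$. Hence $\varphi_{T12}=2\ep\varrho$.
\end{itemize}

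Together with \cref{lem:local}(c), this fixes every third derivative carrying at least one index $T$.

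For (b) I will use $\nabla^2\varphi_i=0$, i.e.\ $\varphi_{11i}+\varphi_{22i}+\varphi_{TTi}=0$ for $i=1,2$, and insert \eqref{eq:third}. Counting the ten independent third derivatives, six contain a $T$ and the two trace relations remove two more, so only $\varphi_{111}$ and $\varphi_{112}$ remain free.

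For (c) I will Taylor-expand $v_T=\mb t(s)\cdot\nabla\varphi$ at $\mb X(s)+\rho_1\mb e_1+\rho_2\mb e_2$. The zeroth order vanishes since $\mb v=\mb 0$, and the first order vanishes since $\mb E\mb t=\mb 0$. The second order is $\tfrac12\varphi_{T\alpha\beta}\rho_\alpha\rho_\beta$, into which I substitute (a).

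For (d) the $\ep$ equation is \eqref{eq:epdot}. For $\Omega$ I will differentiate $\mb e_2\cdot\mb E\mb e_1=0$ following the point with $U_T=0$. The explicit term gives $\psi_{12}$ and the transport term gives $\varphi_{12k}U_k=\varphi_{112}U_1+\varphi_{122}U_2$. For the frame terms, write $\mathrm D\mb e_2/\mathrm Dt=-\Omega\,\mb e_1+(\cdot)\,\mb t$. Then $\mathrm D\mb e_2/\mathrm Dt\cdot\mb E\mb e_1=-\ep\Omega$ and $\mb e_2\cdot\mb E\,\mathrm D\mb e_1/\mathrm Dt=-\ep\Omega$, again because $\mb E\mb t=\mb 0$. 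Together these give $2\ep\Omega=\psi_{12}+\varphi_{112}U_1+\varphi_{122}U_2$.

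Differentiating $\mb e_1=\cos\alpha\,\mb n+\sin\alpha\,\mb b$ in time, with $\partial_t\mb n=-A\mb t+\omega_F\mb b$ from \cref{prop:geom} and the corresponding $\partial_t\mb b=-B\mb t-\omega_F\mb n$, the $\mb e_2$-component gives the relation between $\partial_t\alpha$, $\Omega$ and $\omega_F$. The sign convention must be checked here against the statement's $\partial_t\alpha=\Omega-\omega_F$.

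The identity for $\psi_{11}+\psi_{22}$ is \cref{lem:local}(d) with $f=\psi$. On a persisting line $\psi|_{\mc C}=\Phi'(t)$ by \cref{prop:potlaw}(ii), so $\partial_s^2\psi|_{\mc C}=0$. Inverting \eqref{eq:rot} gives $\mb n=\cos\alpha\,\mb e_1-\sin\alpha\,\mb e_2$, hence $\mb n\cdot\nabla\psi=a_1\cos\alpha-a_2\sin\alpha$.

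The final count then reads as follows. Of the third derivatives of $\varphi$, only $\varphi_{111}$ and $\varphi_{112}$ are free by (b). Of the normal second derivatives of $\psi$, only the trace-free part $\psi_{11}-\psi_{22}$ and $\psi_{12}$ are free, since the trace is fixed.

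The main obstacle is the bookkeeping of frame rotations and signs: keeping straight $\varrho$ versus $\tau$, and $\Omega$ versus $\omega_F$. I would verify each sign on the circle and the helix of \cref{sec:persym}, where $\varrho$ and $\alpha$ are explicit.
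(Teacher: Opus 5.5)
Your proposal is correct and follows essentially the same route as the paper. For (a) you differentiate the eigen-structure of $\mb E$ along $\mc C$ by contracting the identities $\mb e_\alpha\cdot\mb E\mb e_\beta$, while the paper differentiates $\mb E=\ep(\mb e_1\mb e_1-\mb e_2\mb e_2)$ as a tensor, which is equivalent. For (b) you use harmonicity of $\partial_k\varphi$, and for (c) a Taylor expansion. For (d) you differentiate $\mb e_2\cdot\mb E\mb e_1=0$ following the point and invoke Lemma~\ref{lem:local}(d); the sign you left open does come out as $\Omega=\partial_t\alpha+\omega_F$, as stated. There is one harmless slip: the $\mb t$-component of $\partial_s\mb e_2$ is $+\kappa\sin\alpha$, since $\partial_s\mb e_2\cdot\mb t=-\kappa\,\mb e_2\cdot\mb n$, not $-\kappa\sin\alpha$. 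It does not affect your argument, because every such $\mb t$-component is annihilated by $\mb E\mb t=\mb 0$.
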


\begin{proof}
(a) Differentiate $\mb E=\ep(\mb e_1\mb e_1-\mb e_2\mb e_2)$ along the curve: $\varphi_{ijT}=\partial_sE_{ij}$. From \eqref{eq:rot} and \eqref{eq:Frenet},
$\partial_s\mb e_1=\varrho\,\mb e_2-\kappa\cos\alpha\,\mb t$ and $\partial_s\mb e_2=-\varrho\,\mb e_1+\kappa\sin\alpha\,\mb t$, so
\begin{align}
\partial_s\mb E&=\ep'(\mb e_1\mb e_1-\mb e_2\mb e_2)+2\ep\varrho(\mb e_1\mb e_2+\mb e_2\mb e_1) \nonumber \\ &
-\kappa\ep\big[\cos\alpha(\mb t\mb e_1+\mb e_1\mb t)+\sin\alpha(\mb t\mb e_2+\mb e_2\mb t)\big],
\end{align}
whose components are the stated ones (the last bracket reproduces \eqref{eq:third}).
(b) $\nabla^2\partial_k\varphi=0$ gives $\varphi_{11k}+\varphi_{22k}+\varphi_{TTk}=0$; take $k=1,2$ and use \eqref{eq:third}.
(c) $v_T=\mb t\cdot\nabla\varphi(\mb X+\gb\rho)=\mb t\cdot\mb E\gb\rho+\tfrac12\varphi_{Tjk}\rho_j\rho_k+\dots$ and $\mb t\cdot\mb E=\mb 0$.
(d) The first relation is \eqref{eq:epdot}. For the second, $\mathrm D E_{ij}/\mathrm Dt=\psi_{ij}+\varphi_{ijk}U_k$ and
$\mathrm D\mb e_1/\mathrm Dt=\Omega\mb e_2+(\cdot)\mb t$, $\mathrm D\mb e_2/\mathrm Dt=-\Omega\mb e_1+(\cdot)\mb t$; the $(1,2)$ component of
$\mathrm D\mb E/\mathrm Dt$ is $2\ep\Omega$. Writing $\mb e_1=\cos\alpha\,\mb n+\sin\alpha\,\mb b$ gives $\Omega=\partial_t\alpha+\omega_F$.
The last relation is \cref{lem:local}(d) with $f=\psi$, using $\partial_s^2\psi|_{\mc C}=0$ by \cref{prop:potlaw}(ii).
\end{proof}

Thus the line determines $\ep'$, $\alpha'$, $\kappa$, $\tau$ and the sums in (b), but a closed description of $(\mb X,\ep,\alpha)$ still
needs a closure for four nonlocal functions, for instance from the Biot--Savart-type representation of the singularities.

\section{Discussion}\label{sec:disc}

The kinematic structure $\mb L\mb U=-\mb a$ with $\mb L\mb t=\mb 0$ is common to all flows, with $\mathrm{tr}\,\mb K=0$ for incompressible flows and $\mathrm{tr}\,\mb K=\nabla\cdot\mb v$ otherwise (\cref{rem:wherenc,sec:comp}), and the solvability condition is one scalar condition per point of the line. What the dynamics supplies is the acceleration $\mb a$ on the line: in potential flow it is $-\nabla p$ and the condition becomes uniformity of the pressure along the line, which makes the three symmetric orbits persistent; in Stokes flow it is the kinematic acceleration of the particle at the line, the vorticity enters through $\mb K$ and $\mb m$, and persistence needs the additional symmetry $\mb v\cdot\gb\xi=0$. We now say what changes for the Navier--Stokes equations.

\subsection{Navier--Stokes flow}\label{sec:NS}

\Cref{sec:kin} used only incompressibility, so \cref{prop:law,prop:break,cor:sym} hold unchanged for solutions of
\[
\partial_t\mb v+(\mb v\cdot\nabla)\mb v=-\nabla p+\nu\nabla^2\mb v+\mb f,\qquad\nabla\cdot\mb v=0
\]
(unit density, kinematic viscosity $\nu$). What changes is the origin of $\mb a$. On the line $\mb v=\mb 0$, so the convective term vanishes and the momentum equation gives
\be
\mb a=-\nabla p+\nu\nabla^2\mb v+\mb f\qquad\text{on }\mc C .
\label{eq:NSa}
\ee
Substituting into \eqref{eq:Sigma}, the solvability function is
\be
\Sigma=\Delta\big(\partial_sp-\nu(\nabla^2\mb v)_T-f_T\big)+\mb m\cdot\mb K\big(-\nabla p+\nu\nabla^2\mb v+\mb f\big)_n ,
\label{eq:NSsolv}
\ee
and the normal velocity of the line is $\mb U_n=-\mb K^{-1}\mb a_n$ with $\mb a$ from \eqref{eq:NSa}. Four remarks follow.

\emph{(i) Reductions.} If the flow is irrotational then $\nabla^2\mb v=\mb 0$, $\mb m=\mb 0$ and \eqref{eq:NSsolv} reduces to $\Sigma=\ep^2(\partial_sp-f_T)$, so the solvability condition is $\partial_sp=f_T$, the uniform-pressure condition of \cref{prop:potlaw} (with a conservative force absorbed into $p$). Since every harmonic potential flow also solves the Navier--Stokes equations, \cref{sec:pot} applies to them whenever the boundary conditions admit an irrotational solution. In Stokes flow the momentum equation has no $\partial_t\mb v$, so \eqref{eq:NSa} is not available there and $\mb a$ comes from the time-differentiated data (\cref{sec:stokes}); in Navier--Stokes flow the momentum equation itself supplies $\mb a$, at the price described next. For $\nu=0$, $\mb f=\mb 0$ (Euler), \eqref{eq:NSsolv} reads $\Sigma=\Delta\,\partial_sp-\mb m\cdot\mb K\nabla_np$, so $\Sigma=0$ is
uniformity of $p$ along the line only when the vorticity normal to the line vanishes ($\mb m=\mb 0$).

\emph{(ii) The condition constrains the state, not the driving.} In \cref{sec:pot,sec:stokes} the quantity $\mb a$ is fixed by the data of the problem, so $\Sigma=0$ is a condition one may try to satisfy by choosing the driving. In \eqref{eq:NSsolv} the pressure and $\nabla^2\mb v$ are determined by the instantaneous flow, and the pressure is nonlocal. Taking the divergence of the momentum equation and using $\nabla\cdot\mb v=0$ and $\nabla\cdot[(\mb v\cdot\nabla)\mb v]=\partial_iv_j\partial_jv_i=\mathrm{tr}(\mb L^2)$ gives $\nabla^2p=-\mathrm{tr}(\mb L^2)+\nabla\cdot\mb f$. On the line, $\mb L$ has the structure \eqref{eq:Lstruct}, so $\mb L^2=\begin{pmatrix}\mb K^2&\mb 0\\ \mb m^{\mathsf T}\mb K&0\end{pmatrix}$, $\mathrm{tr}(\mb L^2)=\mathrm{tr}(\mb K^2)=2\Delta$ by \eqref{eq:Ksq}, and
\[
\nabla^2p=-2\Delta+\nabla\cdot\mb f=-2\ep^2+\tfrac12\omega_T^2+\nabla\cdot\mb f\qquad\text{on }\mc C,
\]
which is a local statement about the trace of $\nabla\nabla p$ but does not determine $\partial_sp$ or the normal gradient of $p$. (With $\mb f=\mb 0$ and a saddle-type line, $\Delta>0$, this says $\nabla^2p<0$: the pressure on the line exceeds its average over a small sphere about the point, as at a stagnation point of a steady inviscid flow, while rotation, through $\omega_T^2$, works against this.) In potential flow the uniform-pressure condition was a requirement on the driving; here $\partial_sp\neq0$ is allowed provided it is balanced by the viscous and shear terms in \eqref{eq:NSsolv}. Whether a line persists is therefore a property of the solution, and one expects it to be lost unless a symmetry forces $\Sigma=0$.

\emph{(iii) Persistence by symmetry.} \Cref{cor:sym} applies whenever the solution keeps the symmetry and $\mb v\cdot\gb\xi=0$. For the straight line in a plane flow with no axial component, and for the circle in an axisymmetric flow without swirl, this holds for all time: the axial velocity, respectively the swirl, satisfies a homogeneous equation and stays zero. The helix is different. Take $\gb\xi=\mb e_z+kr\,\mb e_\theta=\mb e_z+k\,\mb e_z\times\mb x$ and a flow with $\gb\xi\cdot\nabla p=\gb\xi\cdot\mb f=0$, and put $w=\mb v\cdot\gb\xi$.

\begin{itemize}
\item[1.] Following a fluid particle, $\mathrm{D}w/\mathrm{D}t=\gb\xi\cdot(\mathrm{D}\mb v/\mathrm{D}t)+\mb v\cdot(\mb v\cdot\nabla)\gb\xi$. The last term is $v_iv_j\partial_j\xi_i=0$ because $\partial_j\xi_i=k\epsilon_{izj}$ is antisymmetric.
\item[2.] The momentum equation gives $\gb\xi\cdot\mathrm{D}\mb v/\mathrm{D}t=\nu\,\gb\xi\cdot\nabla^2\mb v$ (using $\gb\xi\cdot\nabla p=\gb\xi\cdot\mb f=0$ by the definition of invariance).
\item[3.] $\nabla^2(\mb v\cdot\gb\xi)=\gb\xi\cdot\nabla^2\mb v+2\,\partial_jv_i\,\partial_j\xi_i+\mb v\cdot\nabla^2\gb\xi$, and $\nabla^2\gb\xi=\mb 0$ (it is linear in $\mb x$). Moreover $\partial_jv_i\,\partial_j\xi_i=k\,\epsilon_{izj}\partial_jv_i=k\,\omega_z$, with $\omega_z=\mb e_z\cdot\gb\omega$.
\end{itemize}
Combining 1--3,
\[
\partial_tw+\mb v\cdot\nabla w=\nu\big(\nabla^2w-2k\,\omega_z\big).
\]
The source $-2\nu k\omega_z$ vanishes for translations ($k=0$) and for a swirl-free axisymmetric flow ($\omega_z=0$), but not in general for a helical flow. Hence viscosity generates $\mb v\cdot\gb\xi\neq0$: starting from helically symmetric data with $\mb v\cdot\gb\xi=0$ at one instant, one finds on the helix $\mb m=\mb 0$ (by Step 1 of the proof of \cref{cor:sym}), so that $\Sigma=-\Delta a_T$, and, since $\mb v=\mb 0$ and $w\equiv0$ there at that instant,
\[
a_T=\frac{\mb a\cdot\gb\xi}{|\gb\xi|}=\frac{\partial_tw}{|\gb\xi|}=-\frac{2\nu k\,\omega_z}{(1+k^2r_0^2)^{1/2}},\qquad
\Sigma=\frac{2\nu k\,\Delta\,\omega_z}{(1+k^2r_0^2)^{1/2}},
\]
so the helix persists to first order only where $\omega_z=0$. At $\nu=0$ (Euler) it persists. The helix is thus again the least robust of the three orbits, as in Stokes flow, and in Navier--Stokes flow this is a viscous effect.

\emph{(iv) First-order loss of a line needs only the initial field.} For Stokes flow, the acceleration requires differentiating the data in time. For Navier--Stokes flow, \eqref{eq:NSa} gives $\mb a$ from the instantaneous velocity and pressure, so the first-order prediction of \cref{prop:break} (the number, positions and types of the stagnation points replacing a line, or their absence) can be evaluated from the initial field alone, with $p$ obtained from the Poisson equation above. This does not extend to higher order: the second-order condition involves $\partial_t\mb K$, $\partial_t\mb m$ and hence $\nabla\nabla p$ and $\nabla\nabla\nabla\mb v$ on the line, and the hierarchy does not close locally. A closed description of a persisting or evolving line requires solving the coupled problem, in which the line is a feature of the solution rather than a probe of a prescribed flow.

\subsection{An exact Navier--Stokes example: stagnation loops broken by the pressure}\label{sec:nsex}

A shear flow $u(y)$ in a frame in which $u$ vanishes on a plane has $\mb v=\mb 0$ on that whole plane, not on a line. There $\mb L$ is nilpotent (for any choice of $\mb t$ in the plane), so $\varepsilon=|\omega_T|/2$ and $\Delta=0$ and the theory of \cref{sec:kin} is degenerate; the theory for stagnation surfaces will be addressed in a separate work. A stagnation \emph{line} appears only when a cross-flow adds strain in the plane normal to the line; this is the cat's-eye geometry of a Kelvin--Helmholtz or Tollmien--Schlichting wave. A line that is periodic in the axial direction $z$ is then a closed loop. If the cross-flow is modulated along $z$, the incompressibility constraint forces an axial velocity, the pressure on the line varies with $z$, and fluid on the line is accelerated axially. By \cref{prop:law} this destroys the line. The example below is chosen so that every quantity can be computed in closed form.

\begin{example}[Superposition of solenoidal Laplacian eigenmodes with different decay rates: lines that exist only at $t = 0$ and split immediately]\label{ex:NSkol}
Take the Navier--Stokes equations with kinematic viscosity $\nu$ and no body force, in the box that is $2\pi$-periodic in $x$ and $y$ and $2\pi/k$-periodic in $z$ ($k$ a positive integer). Let $A>0$, $\delta\neq0$, and let the velocity at $t=0$ be
\be
\mb v(\mb x,0)=\begin{pmatrix}
    \sin y+\delta k(1-\cos x)\sin kz \\ A\sin x \\ \delta\sin x\cos kz
\end{pmatrix}.
\label{eq:kolv}
\ee
The velocity is linear in $\delta$. Everything below is exact at $t=0$ for every $\delta$ (no expansion in $\delta$); only the statements about $t>0$ are first order in $t$.

\emph{Step 1 (incompressibility and the lines).} $\partial_xv_1+\partial_yv_2+\partial_zv_3=\delta k\sin x\sin kz+0-\delta k\sin x\sin kz=0$. For $\delta=0$, the field is the two-dimensional Childress--Soward flow~\cite{childress1989scalar,rajamanickam2023effective} with stream function $\psi=A\cos x-\cos y$ ($v_1=\partial_y\psi$, $v_2=-\partial_x\psi$), which satisfies $\nabla^2\psi=-\psi$. For $\delta\neq0$, the lines $(x,y)=(0,0)$ and $(x,y)=(0,\pi)$, both parallel to the $z$-axis, are still zeros of \eqref{eq:kolv}, because the modulation terms carry the factors $(1-\cos x)$ and $\sin x$, which vanish at $x=0$. (The flow also has zeros on $x=\pi$; they are not analysed here.) \Cref{fig:ns0} shows the cross-flow.
\begin{figure}[t]
\centering
\includegraphics[width=0.48\textwidth]{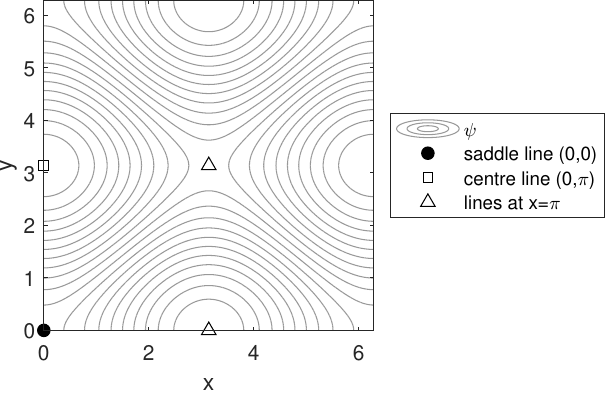}
\caption{Cross-flow of \eqref{eq:kolv} at $\delta=0$: contours of $\psi=A\cos x-\cos y$ in the $(x,y)$ plane. The stagnation lines analysed are the saddle line at $(0,0)$ and the centre line at $(0,\pi)$; the lines at $x=\pi$ are not analysed.}
\label{fig:ns0}
\end{figure}

\emph{Step 2 (normal block and axial shear).} On both lines, $\mb t=\mb e_z$ and, with $\varpi=+1$ on $y=0$ and $\varpi=-1$ on $y=\pi$,
\begin{align}
\mb K=\begin{pmatrix}0&\varpi\\ A&0\end{pmatrix},\quad \mb K^{-1}=\begin{pmatrix}0&1/A\\ \varpi&0\end{pmatrix},\\ \Delta=-\det\mb K=\varpi A,\quad \mb m=(\delta\cos kz,\,0).
\end{align}
So $y=0$ is a saddle line ($\Delta=A>0$) and $y=\pi$ is a centre line ($\Delta=-A<0$): the cross-flow strain moves $\Delta$ off zero and its sign selects the type. The tangential vorticity is $\omega_T=K_{21}-K_{12}=A-\varpi$.

\emph{Step 3 (the pressure, exactly).} Taking the divergence of the momentum equation gives $\nabla^2p=-\mathrm{tr}(\mb L^2)$. Write $\mb L=\mb L_0+\delta\mb L_1$ with $\mb L_0$ the gradient of $(\sin y,A\sin x,0)$ so that $\mathrm{tr}(\mb L_0^2)=2A\cos x\cos y$. $\mb L_0$ has only the entries $(1,2)$ and $(2,1)$, and $\mb L_1$ has zeros there ($v_1$ does not depend on $y$ in the $\delta$-part, and the $\delta$-part of $v_2$ is zero), so $\mathrm{tr}(\mb L_0\mb L_1)=0$ and
\begin{align}
\mathrm{tr}(\mb L^2)&=2A\cos x\cos y+\delta^2\,\mathrm{tr}(\mb L_1^2),\\
\mathrm{tr}(\mb L_1^2)&=2k^2\sin^2x\sin^2kz+2k^2(1-\cos x)\cos x\cos^2kz .
\end{align}
Using the double-angle formulas, $\mathrm{tr}(\mb L_1^2)=k^2(\cos x-\cos2x)+k^2(\cos x-1)\cos2kz$. Each term is a single Fourier mode, so $\nabla^2p=-\mathrm{tr}(\mb L^2)$ is solved mode by mode, with $\nabla^2\to-(n^2+m^2+l^2)$ on $\cos nx\cos my\cos lz$. The result, up to an additive constant, is
\begin{equation}
\begin{split}
p &= A\cos x\cos y+\delta^2k^2\left(\cos x-\tfrac14\cos2x\right) \\
&\quad +\delta^2\left(\frac{k^2\cos x}{1+4k^2}-\frac14\right)\cos2kz .
\end{split}
\label{eq:kolp}
\end{equation}

This is exact at $t=0$ for every $\delta$, since $\mathrm{tr}(\mb L^2)$ is a quadratic polynomial in $\delta$.

\emph{Step 4 (acceleration on the lines and $\Sigma$).} On the lines $\partial_xp=\partial_yp=0$, and
\[
\partial_zp=\frac{\delta^2k\,\sin2kz}{8k^2+2}.
\]
The viscous term of \eqref{eq:NSa} is $\nu\nabla^2\mb v$, with $\nabla^2\mb v_0=-\mb v_0=\mb 0$ on the lines (here $\mb v_0$ is the part of $\mb v$ independent of $\delta$) and $\nabla^2[(1-\cos x)\sin kz]=[\cos x-k^2(1-\cos x)]\sin kz=\sin kz$, $\nabla^2(\sin x\cos kz)=-(1+k^2)\sin x\cos kz=0$ at $x=0$. Hence, on both lines,
\be
\mb a=-\nabla p+\nu\nabla^2\mb v=\Big(\nu\delta k\sin kz,\ 0,\ -\frac{\delta^2k\sin2kz}{8k^2+2}\Big).
\label{eq:kola}
\ee
Since $\mb K\mb a_n=(\varpi a_y,\,Aa_x)$ with $a_y=0$, and $\mb m=(m_1,0)$, we have $\mb m\cdot\mb K\mb a_n=m_1\varpi a_y=0$, and \eqref{eq:solv} (equivalently \eqref{eq:NSsolv}) gives
\begin{equation}
\begin{split}
\Sigma(z) &= -\Delta a_z = \varpi A\,\partial_zp = \frac{\varpi A\,\delta^2k\,\sin2kz}{8k^2+2}, \\[1ex]
\Sigma'(z) &= \frac{\varpi A\,\delta^2k^2\cos2kz}{4k^2+1}.
\end{split}
\label{eq:kolh}
\end{equation}

Its zeros are the same on the saddle and centre lines; $\Sigma$ differs between them only through the factor $\Delta=\varpi A$. It does not depend on $\nu$, and it has zero mean over a period: it is $\Delta$ times the axial pressure gradient on the line, i.e.\ an inviscid effect. It is $O(\delta^2)$ because the $O(\delta)$ part of $\mathrm{tr}(\mb L^2)$ vanishes.

\emph{Step 5 (what replaces the lines, first order in $t$).} The zeros of $\Sigma$ are $z_n=n\pi/(2k)$, $n\in\{0,1,2,3\}$ in each period $2\pi/k$ of the modulation, and they are simple. By \cref{prop:break}, for small $t\neq0$ each line is replaced by four stagnation points per period, at
\begin{align}
&\mb X_n(t)=\left(0,\ y_0-\varpi\nu\delta k\sin(n\pi/2)\,t,\ z_n\right)+t\,c_n\mb e_z+O(t^2),\nonumber \\ &y_0= \begin{cases}
    0 \quad (\varpi=+1),\\
    \pi \quad (\varpi=-1),
\end{cases} \nonumber
\end{align}
where the lateral displacement is $-t\,\mb K^{-1}\mb a_n=(0,-\varpi\nu\delta k\sin kz_n\,t)$ and $c_n$ is a tangential drift that this order does not determine. The eigenvalues of $\nabla\mb v$ at $\mb X_n$ are
\begin{align}
&\lambda_{1,2}=\pm(\varpi A)^{1/2}+O(t),\\ &\lambda_T=-\frac{t\,\Sigma'(z_n)}{\Delta}=-\frac{(-1)^n\,t\,\delta^2k^2}{4k^2+1}+O(t^2).
\end{align}
The sign of $\lambda_T$ alternates with $n$, so the Poincar\'e indices alternate, and by \cref{rem:types} they are $\mathrm{sign}(t\,\Sigma'(z_n))=\varpi(-1)^n$ for $t>0$, that is $(-1)^n$ on the saddle line and $-(-1)^n$ on the centre line. The total index per period is zero.
Viscosity does not break the line: it only enters through $a_x=\nu\delta k\sin kz$, which moves the line sideways in the shear direction $y$ with opposite signs for the saddle and centre lines.

\emph{Step 6 (the unmodulated case).} For $\delta=0$, $\mb v=e^{-\nu t}(\sin y,A\sin x,0)$ solves the Navier--Stokes equations (the nonlinear term is a gradient because $\nabla^2\psi=-\psi$ and $\omega=\psi$, and $\nu\nabla^2\mb v=-\nu\mb v$ only reduces the amplitude). The lines persist for all time and do not move. Here $\mb m=\mb 0$ and $\Sigma\equiv0$, in agreement with \eqref{eq:kolh}.
\end{example}

\begin{remark}[Relation to eddy solutions]\label{rem:walsh}
For $\delta=0$, the stream function $\psi=A\cos x-\cos y$ is a sum of the $(1,0)$ and $(0,1)$
eigenfunctions of the Laplacian with eigenvalue $\lambda=-1$. Step 6 is therefore the case
$\lambda=-1$ of the exact solutions $\mb v=e^{\nu\lambda t}\mb b$, $\nabla^2\mb b=\lambda\mb b$,
$\nabla\cdot\mb b=0$, noted by Taylor~\cite{taylor1923lxxv} and Walsh~\cite{walsh2006eddy}, for which
$\nabla p=-\mb v\cdot\nabla\mb v$. The stream lines do not change in time, so every stagnation point
is frozen, which is why $\Sigma\equiv0$ and $\mb U_n=\mb 0$. For $A=1$,
$\psi=-2\sin\frac{x+y}{2}\sin\frac{x-y}{2}$ is Taylor's cellular vortex in axes rotated by $\pi/4$, with wavenumber $1/\sqrt 2$.

For $\delta\neq0$, \eqref{eq:kolv} is a sum of four solenoidal Laplacian eigenmodes: the shears
$(\sin y,0,0)$, $(0,A\sin x,0)$ and $\delta k(\sin kz,0,0)$, with $\lambda=-1,-1,-k^2$, respectively, and
Taylor's vortex $\delta(-k\cos x\sin kz,0,\sin x\cos kz)$ in the $(x,z)$-plane, with
$\lambda=-(1+k^2)$. Their sum vanishes on $x=0$ at $t=0$, but the last two decay at rates
differing by $\nu$, so on $x=0$ the $x$-velocity is $\delta k\sin kz\,(e^{-\nu k^2t}-e^{-\nu(1+k^2)t})
=\nu\delta k\sin kz\,t+O(t^2)$. This is the viscous part $a_x=\nu\delta k\sin kz$ of \eqref{eq:kola}.
The modes lie in three different eigenspaces, so the nonlinear term is not a gradient, and the
$O(\delta^2)$ pressure~\eqref{eq:kolp} and $\Sigma\neq0$ result from it. The eigenvalues $-1$, $-k^2$ and $-(1+k^2)$ are distinct for $k\ge2$; for $k=1$ (the case computed in the
numerical check) the first three modes share $\lambda=-1$. In either case, at least two eigenspaces occur, so
Walsh's lemma does not apply, the nonlinear term is not a gradient, and the $O(\delta^2)$ pressure
\eqref{eq:kolp} and $\Sigma\neq0$ result from it.
\end{remark}

\emph{Numerical check.} We integrated the full Navier--Stokes equations from \eqref{eq:kolv} with a pseudo-spectral method ($32^3$ Fourier modes, 2/3 dealiasing, fourth-order Runge--Kutta, time step $2.5\times10^{-4}$) for $A=0.8$, $k=1$, $\delta=0.3$, $\nu=0.05$. The pressure of \eqref{eq:kolp} was confirmed at $t=0$ against the spectral solution of $\nabla^2p=-\mathrm{tr}(\mb L^2)$. At the times $t=2.5\times10^{-4},\,10^{-3},\,4\times10^{-3}$, we located, in each plane $z=\mathrm{const}$, the point where $v_x=v_y=0$ near each line (the curve $\mc C^*$ of \cref{sec:loss}) and compared the axial velocity there with the prediction $-t\,\Sigma(z)/\Delta$ of \cref{sec:loss}, \cref{fig:ns1}. We also located the isolated zeros of $\mb v$ near $z_n$ by a three-dimensional Newton iteration and computed the eigenvalues of $\nabla\mb v$ there.
\Cref{tab:ns} shows that the error in $v_z/t$ and in $\lambda_T/(-t\,\Sigma'/\Delta)$ decreases linearly with $t$, as expected from the $O(t^2)$ remainders in \cref{prop:break}, and the lateral displacement of the line agrees with $-t\,\mb K^{-1}\mb a_n$ (\cref{fig:ns2}). The tangential eigenvalue of each point approaches $-t\,\Sigma'(z_n)/\Delta$ as $t\to0$ (\cref{fig:ns3}). The computation was restricted to short times: at $t=O(1)$ and large $\delta$ the first-order law no longer applies, and we make no claim there.

\begin{figure*}[t]
\centering
\includegraphics[width=0.8\textwidth]{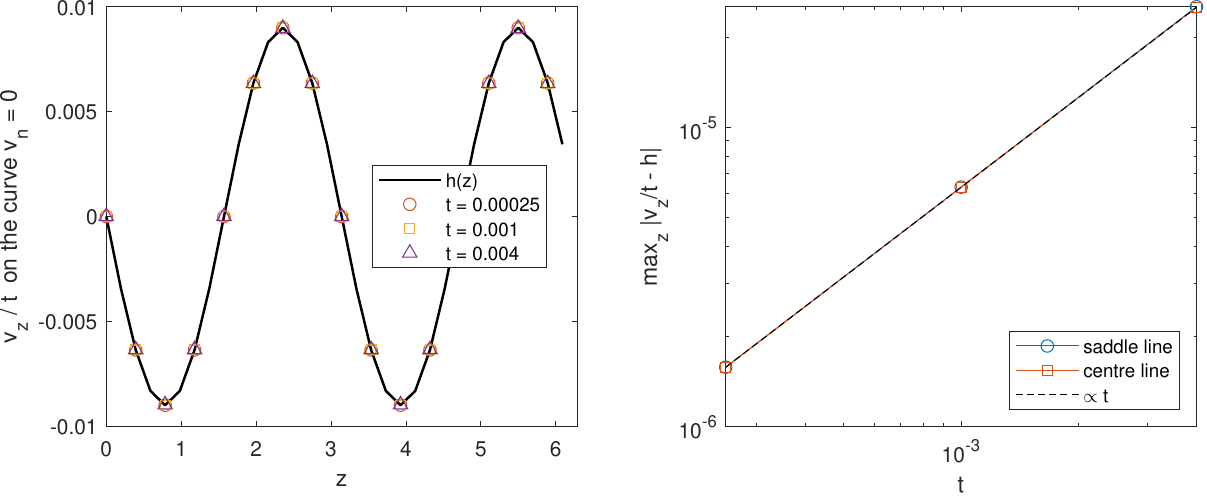}
\caption{(a) Axial velocity $v_z/t$ on the curve where the normal velocity vanishes (symbols, three times) against the prediction $-h(z)=\Sigma(z)/\Delta$ from \eqref{eq:kolh} (line), saddle line. (b) Maximum error in $z$, showing linear decrease with $t$ for both lines.}
\label{fig:ns1}
\end{figure*}

\begin{figure}[t]
\centering
\includegraphics[width=0.4\textwidth]{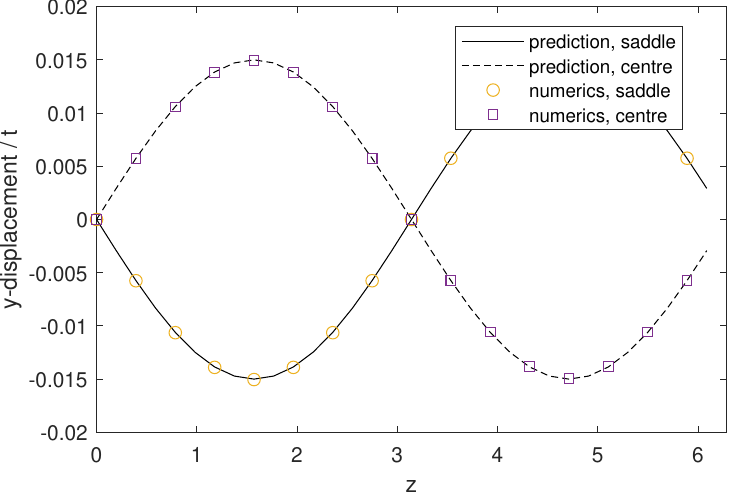}
\caption{Displacement of the curve in the shear direction $y$ divided by $t$ (symbols) against the
prediction $(-\mb K^{-1}\mb a_n)_y=-\varpi\nu\delta k\sin kz$ (lines), at $t=10^{-3}$. The saddle and centre lines move in opposite directions.}
\label{fig:ns2}
\end{figure}

\begin{figure*}[t]
\centering
\includegraphics[width=0.8\textwidth]{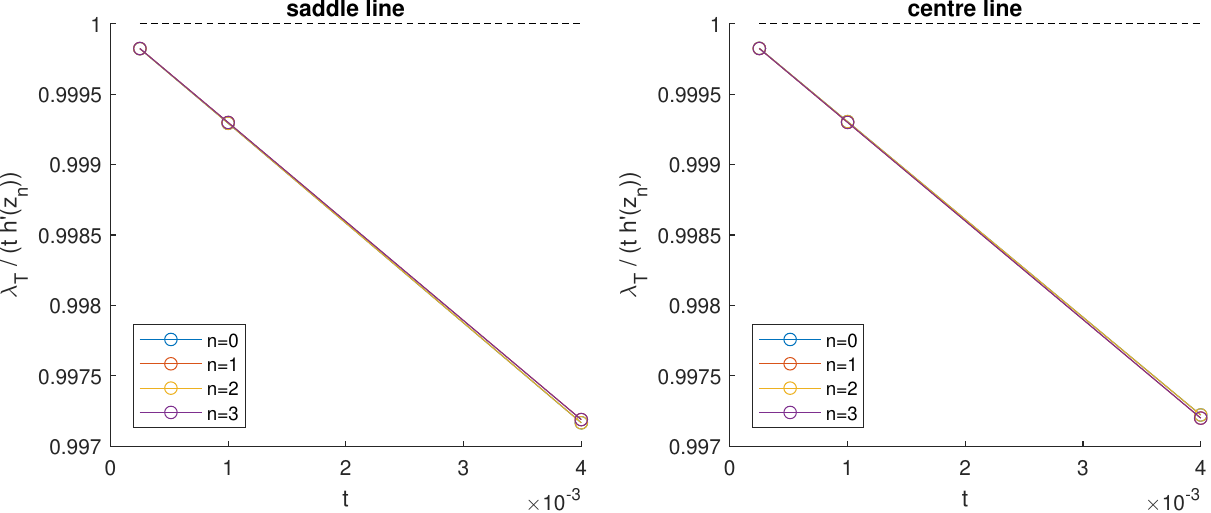}
\caption{Tangential eigenvalue $\lambda_T$ of the isolated stagnation points near $z_n=n\pi/(2k)$, divided by the prediction $-t\,\Sigma'(z_n)/\Delta\equiv th'(z_n)$ of \eqref{eq:eig}, against $t$. The ratio tends to $1$ as $t\to0$ on both the saddle line (a) and the centre line (b).}
\label{fig:ns3}
\end{figure*}

\begin{table}[t]
\centering
\caption{Errors of the first-order prediction for \cref{ex:NSkol}: $E_v=\max_z|v_z/t+\Sigma/\Delta|$ on $\mc C^*$ and $E_\lambda=\max_n|\lambda_T/(-t\,\Sigma'(z_n)/\Delta)-1|$ at the isolated points, for the saddle and centre lines.}
\label{tab:ns}
\begin{tabular}{c|cc|cc}
$t$ & $E_v$ (saddle) & $E_\lambda$ (saddle) & $E_v$ (centre) & $E_\lambda$ (centre)\\ \hline
$2.5\times 10^{-4}$ & $1.6\times 10^{-6}$ & $1.8\times 10^{-4}$ & $1.6\times 10^{-6}$ & $1.8\times 10^{-4}$ \\
$1.0\times 10^{-3}$ & $6.3\times 10^{-6}$ & $7.1\times 10^{-4}$ & $6.3\times 10^{-6}$ & $7.0\times 10^{-4}$ \\
$4.0\times 10^{-3}$ & $2.5\times 10^{-5}$ & $2.8\times 10^{-3}$ & $2.5\times 10^{-5}$ & $2.8\times 10^{-3}$ \\
\end{tabular}
\end{table}

\emph{Meaning.} In \cref{sec:NS}(ii) we said that for the Navier--Stokes equations, the solvability function is a property of the instantaneous solution through the nonlocal pressure. This example shows it concretely: $\Sigma$ is computed from a Poisson equation that can be solved in closed form, its zeros are the same on lines of both types, and it has zero mean, so each loop breaks into an even number of points whose indices alternate, as in potential flow. The two-sided structure, a saddle and a centre line with the same zeros of $\Sigma$ and opposite lateral drift, is what one expects near the cat's eyes of a shear-layer instability.

\section{Stagnation lines with a singular normal block ($\Delta=0$)}\label{sec:delta0}

The motion law of \cref{prop:law} divides by $\Delta=\ep^2-\tfrac14\omega_T^2$. On a smooth line, $\Delta(s)$ is a scalar function of arclength, so a line whose type changes between saddle and centre passes through isolated points where $\Delta=0$, and a shear-layer geometry (\cref{sec:nsex}) has $\Delta\equiv0$ on a whole plane. This section replaces $\mb K^{-1}$ by the structure of $\mb L$ itself. The kinematic theory survives, but the number of solvability conditions is set by the rank of $\mb L$, and fractional powers of $t$ appear when the rank drops. The results are local and kinematic, obtained by the method of \cref{sec:kin}. Two situations make $\Delta=0$ important: lines that change between saddle and centre type, which pass through isolated points with $\Delta=0$, and lines in a shear layer, where the linear part of $\mb v$ vanishes on a whole plane (the case $m_1\equiv0$ below).

Throughout, $(\mb e_1,\mb e_2,\mb t)$ is the right-handed basis of \cref{sec:grad}. Second derivatives of the velocity on the line are denoted by
\[
Z_{\alpha\beta\gamma}:=\mb e_\alpha\cdot(\mb e_\beta\cdot\nabla)(\mb e_\gamma\cdot\nabla)\mb v,\qquad \alpha\in\{1,2,T\},\ \beta,\gamma\in\{1,2\},
\]
and $\dot{\mb a}:=\partial_t\mb a=\partial_t^2\mb v$, $a_{\alpha,\beta}:=\mb e_\alpha\cdot(\mb e_\beta\cdot\nabla)\mb a$, all evaluated on $\mc C$ at $t=0$. For a point displaced by $\gb\delta=\delta_1\mb e_1+\delta_2\mb e_2$ from $\mb X(s)$, \eqref{eq:Taylor} extends to
\be
v_\alpha(\mb X+\gb\delta,t)=t\,a_\alpha+(\mb L\gb\delta)_\alpha+\tfrac12Z_{\alpha\beta\gamma}\delta_\beta\delta_\gamma+O(t^2,\ t|\gb\delta|,\ |\gb\delta|^3).
\label{eq:Bsecond}
\ee

\subsection{The nilpotent normal block}

\begin{lemma}\label{lem:nil}
Let $\Delta=0$ and $\mb K\neq\mb 0$ at a point of $\mc C$. Then there is a right-handed orthonormal basis $(\mb e_1,\mb e_2,\mb t)$ in which
\be
\mb K=\zeta\,\mb e_1\otimes\mb e_2,\qquad \zeta=-\omega_T,\qquad |\zeta|=2\ep>0,
\label{eq:Bnil}
\ee
that is $\mb K\mb e_1=\mb 0$, $\mb K\mb e_2=\zeta\mb e_1$, and
\be
\mb L\mb e_1=m_1\mb t,\qquad \mb L\mb e_2=\zeta\,\mb e_1+m_2\mb t,\qquad \mb L\mb t=\mb 0 .
\label{eq:BL}
\ee
\end{lemma}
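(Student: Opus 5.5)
Since $\mb K$ is traceless, \eqref{eq:Ksq} gives $\mb K^2=\Delta\mb I=\mb 0$. Thus $\mb K$ is nilpotent, and because $\mb K\neq\mb 0$ it has rank exactly one. The range of $\mb K$ is then a line in the normal plane, and it lies inside $\ker\mb K$ because $\mb K^2=\mb 0$. Since both are one-dimensional, $\mathrm{range}\,\mb K=\ker\mb K$. Choose $\mb e_1$ to be a unit vector spanning this common line, and let $\mb e_2$ be the unit normal vector fixed by $\mb e_1\times\mb e_2=\mb t$. This gives the right-handed orthonormal basis $(\mb e_1,\mb e_2,\mb t)$.

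In this basis $\mb K\mb e_1=\mb 0$, and $\mb K\mb e_2$ lies in the range, so $\mb K\mb e_2=\zeta\mb e_1$ for some scalar $\zeta$. Hence $\mb K=\zeta\,\mb e_1\otimes\mb e_2$, and $\zeta\neq0$ because $\mb K\neq\mb 0$.

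To identify $\zeta$, read off the entries $K_{12}=\zeta$ and $K_{21}=0$. Then \eqref{eq:vort} gives $\omega_T=K_{21}-K_{12}=-\zeta$. For the magnitude, the symmetric part of $\mb K$ is $\tfrac12\zeta(\mb e_1\otimes\mb e_2+\mb e_2\otimes\mb e_1)$, whose eigenvalues are $\pm\tfrac12|\zeta|$. Therefore $\ep=|\zeta|/2$, that is $|\zeta|=2\ep>0$, consistent with $\Delta=\ep^2-\tfrac14\omega_T^2=0$.

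Finally, \eqref{eq:BL} is \eqref{eq:Lstruct} written in this particular basis: $\mb L\mb e_\beta=K_{\alpha\beta}\mb e_\alpha+m_\beta\mb t$ with $K_{11}=K_{21}=K_{22}=0$ and $K_{12}=\zeta$, while $\mb L\mb t=\mb 0$ holds by \eqref{eq:LT}.

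The only step needing care is the choice of basis. Orienting $\mb e_1$ along the kernel fixes the basis up to the sign of $\mb e_1$, and right-handedness then fixes $\mb e_2$. Flipping the sign of $\mb e_1$ also flips $\mb e_2$, so $\zeta=\mb e_1\cdot\mb K\mb e_2$ is unchanged. This confirms that $\zeta$ is well defined and equals $-\omega_T$. No other step is an obstacle; the argument is linear algebra on the results of \cref{sec:grad}.
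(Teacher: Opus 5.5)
Your proof is correct and follows the paper's argument: $\mb K^2=\Delta\mb I=\mb 0$ makes the range and kernel of $\mb K$ the same line, $\mb e_1$ spans it, and $\omega_T=K_{21}-K_{12}=-\zeta$ follows from \eqref{eq:vort}. The differences are cosmetic: you get $\mb K\mb e_2\parallel\mb e_1$ from the range rather than from $\mathrm{tr}\,\mb K=0$, and you read off $\ep=|\zeta|/2$ from the symmetric part directly rather than from $\Delta=0$.
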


\begin{proof}
By \eqref{eq:trK} and \eqref{eq:Ksq}, $\mb K^2=\Delta\mb I=\mb 0$, so the range of $\mb K$ lies in its kernel. Since $\mb K\neq\mb 0$ both are the same line; let $\mb e_1$ span it and let $\mb e_2$ complete a right-handed basis. Then $\mb K\mb e_1=\mb 0$ and $\mb K\mb e_2=\zeta\mb e_1+\beta\mb e_2$ with $\beta=0$ because $\mathrm{tr}\,\mb K=0$. By \eqref{eq:vort}, $\omega_T=K_{21}-K_{12}=-\zeta$, and $\Delta=0$ gives $\ep=|\omega_T|/2$.
\end{proof}

\emph{Meaning.} The normal flow near the line is a simple shear, $\dot\delta_1=\zeta\delta_2$, $\dot\delta_2=0$ (the limit $\Delta\to0$ of \eqref{eq:expK}, where $e^{\mb K\vartheta}=\mb I+\vartheta\mb K$): $\mb e_1$ is the streamwise direction and $\mb e_2$ the direction of the shear gradient. Strain and rotation are exactly balanced. By \eqref{eq:vort}, $m_1=-\omega_2$ and $m_2=\omega_1$, so $m_1=0$ means that the vorticity normal to the line is parallel to $\mb e_1$. If $m_1=0$ the linear velocity field vanishes on the whole plane $\delta_2=0$, which is the shear-layer plane of \cref{sec:nsex}.

\begin{remark}[Potential flow]\label{rem:potdelta0}
In incompressible potential flow $\mb K$ is symmetric and $\mb m=\mb 0$, so $\Delta=\ep^2=0$ forces $\mb K=\mb 0$ and $\mb L=\mb 0$ at that point: only the rank-zero case of \cref{prop:rank} below occurs. The nilpotent cases require vorticity. This is special to incompressibility: in compressible potential flow, $\det\mb K=0$ allows $\mb K=\theta\,\mb e_1\mb e_1$ of rank one at ordinary points of a line, and the pairs of points that result are described in \cref{rem:fold}.
\end{remark}

\subsection{Counting the solvability conditions}

Since $\mb L\mb t=\mb 0$, the range of $\mb L$ is spanned by $\mb L\mb e_1$ and $\mb L\mb e_2$, and $\mb L\mb U=-\mb a$ is solvable if and only if $\mb a\in\mathrm{range}\,\mb L$.

\begin{proposition}[Conditions by rank]\label{prop:rank}
Let $r=\mathrm{rank}\,\mb L\in\{0,1,2\}$. Then $\mb L\mb U=-\mb a$ imposes $3-r$ scalar solvability conditions on $\mb a$, and leaves a $(2-r)$-dimensional family of normal velocities $\mb U_n$ undetermined at first order. By \cref{lem:nil} and \cref{prop:law}, we have \cref{tab:my_table}.
\begin{table*}[t]
\centering
\begin{tabular}{c|l|c|c}
$r$ & where & conditions & free directions in $\mb U_n$\\ \hline
2 & $\Delta\neq0$; or $\Delta=0$, $\mb K\neq\mb0$, $m_1\neq0$ & 1 & 0\\
1 & $\Delta=0$, $\mb K\neq\mb0$, $m_1=0$; or $\mb K=\mb0$, $\mb m\neq\mb0$ & 2 & 1\\
0 & $\mb K=\mb 0$, $\mb m=\mb 0$ & 3 & 2
\end{tabular}
\caption{Number of first-order solvability conditions and free directions of $\mb U_n$, by the rank $r$ of $\mb L$ on the line.}
\label{tab:my_table}
\end{table*}
\end{proposition}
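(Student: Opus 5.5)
The plan is to treat $\mb L\mb U=-\mb a$ as a $3\times3$ linear system and apply the Fredholm alternative. First, solvability holds exactly when $\mb a$ lies in $\mathrm{range}\,\mb L$. This range is $r$-dimensional, so its orthogonal complement has dimension $3-r$, and membership amounts to $3-r$ independent scalar conditions: $\mb a\cdot\mb y=0$ for $\mb y$ ranging over a basis of $\ker\mb L^{\mathsf T}$. Second, when solutions exist they form an affine space of dimension $\dim\ker\mb L=3-r$. Since $\mb L\mb t=\mb 0$ by \eqref{eq:LT}, one of these directions is always $\mb t$, which is the reparametrisation freedom of \cref{prop:law}(iv). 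Projecting onto the normal plane leaves a $(2-r)$-dimensional family of $\mb U_n$. This establishes the two counts in the statement for every $r$, independently of how the rank is realised.

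Next I will fill in the table by computing the rank in each case from \eqref{eq:Lstruct} and \cref{lem:nil}. If $\Delta\neq0$, then $\mb K$ is invertible, so the columns $\mb L\mb e_1,\mb L\mb e_2$ are independent (their normal parts $\mb K\mb e_1,\mb K\mb e_2$ already are), and $r=2$; this is consistent with the single condition $\Sigma=0$ of \cref{prop:law}(iii). If $\Delta=0$ and $\mb K\neq\mb0$, I will use \eqref{eq:BL}: $\mb L\mb e_1=m_1\mb t$ and $\mb L\mb e_2=\zeta\mb e_1+m_2\mb t$ with $\zeta\neq0$. These two vectors are independent if and only if $m_1\neq0$, giving $r=2$ for $m_1\neq0$ and $r=1$ for $m_1=0$. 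If $\mb K=\mb0$, then $\mb L=\mb t\,\mb m^{\mathsf T}$ (in the sense $\mb L\mb e_\beta=m_\beta\mb t$), which has rank $1$ when $\mb m\neq\mb0$ and rank $0$ when $\mb m=\mb0$. These cases are exhaustive, since either $\Delta\neq0$, or $\Delta=0$ with $\mb K$ nonzero (nilpotent), or $\mb K=\mb0$. As a side remark, I can identify the conditions explicitly: for $r=2$ the single condition is $\Sigma=\mb a\cdot(\mb L\mb e_1\times\mb L\mb e_2)=0$, which for $\Delta=0$, $\mb K\neq\mb0$ reduces to $m_1\zeta\,a_2=0$, i.e.\ $a_2=0$. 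For $r=1$ with $m_1=0$ the conditions are $a_2=0$ and $\zeta a_T=m_2a_1$, and for $\mb K=\mb0$, $\mb m\neq\mb0$ they are $\mb a_n=\mb0$.

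The only point needing care is the $r=2$, $\Delta=0$ case. There $\mb K$ is singular, so the method of \cref{prop:law} (invert $\mb K$) is unavailable, yet the normal velocity must still be uniquely determined. I will verify this directly from the solvable system. The $\mb e_1$-component gives $\zeta U_2=-a_1$, which fixes $U_2$. The $\mb t$-component, $m_1U_1+m_2U_2=-a_T$, then fixes $U_1$ because $m_1\neq0$. Uniqueness in this case therefore comes from the axial shear rather than from $\mb K$, and the remaining cases are routine rank computations.
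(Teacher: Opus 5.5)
Your proposal is correct and follows the paper's proof. The paper makes the same Fredholm/rank--nullity count: $3-r$ conditions from the orthogonal complement of the range, and a kernel that contains $\mb t$ and whose normal-plane part has dimension $2-r$. It then reads off the rank case by case from \eqref{eq:BL} and from $\mb L\mb w_n=(\mb m\cdot\mb w_n)\mb t$ when $\mb K=\mb 0$, exactly as you do; your explicit conditions and your uniqueness check for $r=2$, $\Delta=0$ agree with \eqref{eq:BA} and \cref{prop:rank1,prop:K0}.
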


\begin{proof}
$\mb a\in\mathrm{range}\,\mb L$ is $3-r$ conditions, namely $\mb w\cdot\mb a=0$ for a basis of the orthogonal complement of the range. The kernel of $\mb L$ contains $\mb t$ and, restricted to the normal plane, has dimension $2-r$. The table follows from \eqref{eq:BL} (rank one if $m_1=0$, since then $\mb L\mb e_1=\mb0$) and, when $\mb K=\mb 0$, from $\mb L\mb w_n=(\mb m\cdot\mb w_n)\mb t$.
\end{proof}

\subsection{Rank-two lines, including isolated points with $\Delta=0$}

For $r=2$ the single condition is $\Sigma=0$, with $\Sigma=\mb a\cdot(\mb L\mb e_1\times\mb L\mb e_2)$ as in \eqref{eq:Sigma}: it vanishes if and only if $\mb a$ lies in the plane spanned by $\mb L\mb e_1$ and $\mb L\mb e_2$. The second form in \eqref{eq:Sigma}, $\Sigma=\mb m\cdot\mb K\mb a_n-\Delta a_T$, used only $\mathrm{tr}\,\mb K=0$ and so holds for every $\Delta$.

\begin{proposition}[Rank-two lines]\label{prop:rank2}
Let $\mc C$ be a stagnation curve with $\mathrm{rank}\,\mb L=2$ everywhere on it, that is $\mb L\mb e_1\times\mb L\mb e_2\neq\mb0$.
\begin{itemize}
\item[(i)] The equation $\mb a+\mb L\mb U=\mb 0$ is solvable if and only if $\Sigma=0$, and then $\mb U_n$ is unique. For $\Delta\neq0$ this is \cref{prop:law}. For $\Delta=0$, $\mb K\neq\mb 0$, $m_1\neq0$,
\be
\Sigma=\zeta\,m_1\,a_2,\quad U_2=-\frac{a_1}{\zeta},\quad U_1=\frac{m_2a_1/\zeta-a_T}{m_1}.
\label{eq:BA}
\ee
\item[(ii)] If $\Sigma$ has only simple zeros $s_1,\dots,s_N$ ($N=0$ allowed), then for small $0<|t|<t_0$ the zero set of $\mb v(\cdot,t)$ near $\mc C$ consists of exactly $N$ isolated points
\be
\mb X_j(t)=\mb X(s_j)+t\big(\mb U_n+c_j\mb t\big)+O(t^2),
\ee
where $\mb U_n$ is evaluated at $s_j$ (where $\Sigma=0$) and $c_j$ is fixed at the next order.
\end{itemize}
\end{proposition}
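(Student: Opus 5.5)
Part (i) is the Fredholm alternative of \cref{prop:law}, now with $\mb L$ of rank two rather than $\mb K$ invertible. Since $\mb L\mb t=\mb 0$ and $\mathrm{rank}\,\mb L=2$, the range of $\mb L$ is exactly the plane $\mc P$ spanned by $\mb L\mb e_1,\mb L\mb e_2$. The normal to $\mc P$ is the nonzero vector $\mb L\mb e_1\times\mb L\mb e_2$, so $-\mb a\in\mc P$ if and only if $\Sigma=0$. The restriction of $\mb L$ to the normal plane is injective (its kernel is $\mathrm{span}\,\mb t$), hence $\mb U_n$ is unique. For $\Delta\neq0$ nothing new is needed.

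For $\Delta=0$ I would use the basis of \cref{lem:nil}, in which \eqref{eq:BL} gives
\[
\mb L\mb e_1\times\mb L\mb e_2=m_1\mb t\times(\zeta\mb e_1+m_2\mb t)=\zeta m_1\,\mb e_2 ,
\]
so $\Sigma=\zeta m_1a_2$. Writing $\mb L\mb U_n=U_1m_1\mb t+U_2(\zeta\mb e_1+m_2\mb t)=-\mb a$ and taking components gives three equations: the $\mb e_1$-component gives $U_2=-a_1/\zeta$, the $\mb e_2$-component gives $0=-a_2$ (the solvability condition), and the $\mb t$-component gives $m_1U_1+m_2U_2=-a_T$, which yields the stated $U_1$. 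As a consistency check, $\Sigma=\mb m\cdot\mb K\mb a_n-\Delta a_T=m_1\zeta a_2$ as well.

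For (ii) I would redo the argument of \cref{prop:break}, but without inverting $\mb K$, since $\mb K$ may be singular at isolated points. Pick a smooth unit field $\gb\nu(s)=\mb L\mb e_1\times\mb L\mb e_2/|\cdot|$ normal to the range. Split $\mb v(\mb X(s)+\gb\delta,t)$ into its projection onto $\mc P(s)$ and its $\gb\nu$-component. The map $\gb\delta\mapsto \Pi_{\mc P}\mb L\gb\delta$ from the normal plane to $\mc P$ is an isomorphism, so the implicit function theorem gives a unique small $\gb\delta^*(s,t)=t\mb U_n^\sharp(s)+O(t^2)$, where $\mb U_n^\sharp$ solves $\Pi_{\mc P}(\mb a+\mb L\mb U)=0$. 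On this curve $\mc C^*$ the remaining component is $\gb\nu\cdot\mb v=t\,\gb\nu\cdot\mb a+O(t^2)=t\,\Sigma/|\mb L\mb e_1\times\mb L\mb e_2|+O(t^2)$, because $\gb\nu\cdot\mb L\gb\delta=0$. Setting $t H(s,t)$ equal to this component, $H=\Sigma/|\cdot|+O(t)$ is smooth with simple zeros at the $s_j$, and the implicit function theorem gives exactly $N$ roots $s_j(t)=s_j+c_jt+O(t^2)$ and no others near $\mc C$ for small $|t|$. At $s_j$ we have $\Sigma=0$, so $\mb U_n^\sharp=\mb U_n$ from (i). Substituting then gives the stated position.

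The main obstacle is uniformity. I need a neighbourhood of the whole closed curve and a single $t_0$, including near isolated points where $\Delta=0$ and the basis of \cref{lem:nil} jumps. I would avoid that basis entirely in (ii), working instead with the smooth projection onto $\mc P(s)$, which depends only on $\mathrm{rank}\,\mb L=2$. Compactness of $\mc C$ and a uniform lower bound on $|\mb L\mb e_1\times\mb L\mb e_2|$ then give uniform implicit-function-theorem constants.
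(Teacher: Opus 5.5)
Your proposal is correct and follows the paper's proof. Part (i) is the same Fredholm argument, with the same $\mb L\mb e_1\times\mb L\mb e_2=\zeta m_1\mb e_2$ and the same three component equations, and part (ii) is the same reduction as in \cref{prop:break}. The only difference is in (ii): the paper solves two Cartesian rows of $(\mb L\mb e_1,\mb L\mb e_2)$ that have a locally invertible $2\times2$ minor, leaving $t\,\Sigma/(\mb L\mb e_1\times\mb L\mb e_2)_k$, whereas your orthogonal projection onto the range $\mc P(s)$ leaves $t\,\Sigma/|\mb L\mb e_1\times\mb L\mb e_2|$ through a splitting that is smooth along the whole curve, which makes the uniformity near points where $\Delta=0$ slightly cleaner.
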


\begin{proof}
(i) The orthogonal complement of the range is spanned by $\mb L\mb e_1\times\mb L\mb e_2$. In the nilpotent case, \eqref{eq:BL} gives $\mb L\mb e_1\times\mb L\mb e_2=\zeta m_1\mb e_2$. The three components of $\mb L\mb U_n=-\mb a$ then read $\zeta U_2=-a_1$, $0=-a_2$, $m_1U_1+m_2U_2=-a_T$, which give \eqref{eq:BA}.

(ii) Since the $3\times2$ matrix $\mb M=(\mb L\mb e_1,\mb L\mb e_2)$ has rank two, near any point of $\mc C$ two of its rows (say $i,j$) form an invertible $2\times2$ block, and the remaining row is $k$. As in Steps 1--3 of the proof of \cref{prop:break}, the implicit function theorem solves $v_i=v_j=0$ for $\gb\delta^*(s,t)=O(t)$, and the remaining component on this curve is $v_k=t\,(a_k-\mb M_k\cdot\mb M_{ij}^{-1}\mb a_{ij})+O(t^2)=t\,\Sigma/(\mb L\mb e_1\times\mb L\mb e_2)_k+O(t^2)$, where the denominator is $\pm\det\mb M_{ij}\neq0$. Zeros of $v_k$ on this curve are therefore the simple zeros of $\Sigma$, and the argument continues as in \cref{prop:break}.
\end{proof}

\emph{Meaning.} The functional $\Sigma$ is regular where $\Delta=0$, whereas $\mb K^{-1}$, and hence the normal velocity $-\mb K\mb a_n/\Delta$ and the axial velocity $-t\Sigma/\Delta$ on $\mc C^*$, are not. Thus the divergence of \cref{rem:weak} at an isolated point $s_*$ of $\Delta=0$ with $m_1\neq0$ is removable for a persisting line (for which $\Sigma\equiv0$ and $\mb U_n$ is continuous), and for a line that breaks, the points sit at the zeros of $\Sigma$, exactly as in \cref{prop:break} but now through $\Delta=0$. In the nilpotent case, the relevant condition is simply $a_2=0$: the acceleration must have no component along the shear-gradient direction, which $\mb K$ cannot produce.

\begin{remark}[Degenerate points emerging from the line]
For $\Delta=0$, $m_1\neq0$, \eqref{eq:BL} gives $\mb L^2\mb e_2=\zeta m_1\mb t\neq\mb0$ and $\mb L^3=\mb 0$: $\mb L$ is a single $3\times3$ Jordan block. All three eigenvalues of $\nabla\mb v$ therefore vanish at $t=0$, and at the points of \cref{prop:rank2}(ii) the characteristic polynomial is $\lambda^3=O(t)$, so generically the eigenvalues scale as $t^{1/3}$. This contrasts with \eqref{eq:eig}, where two eigenvalues stay $O(1)$.
\end{remark}

\subsection{Rank-one lines: the nilpotent block with $m_1=0$}

Here $\mb L\mb e_1=\mb 0$ and $\mb L\mb e_2=\zeta\mb e_1+m_2\mb t$. By \cref{prop:rank}, two conditions are needed and $U_1$ is free at first order. Define
\be
\Sigma_B:=\zeta\,a_T-m_2\,a_1 ,
\ee
the rank-one analogue of $\Sigma$ (it is the contraction of $\mb a$ with the left null vector $\zeta\mb t-m_2\mb e_1$ of $\mb L$, and has no division by $\zeta$).
At second order the persistence equation is obtained by differentiating $\mb v(\mb X(\sigma,t),t)=\mb0$ twice in $t$ at fixed $\sigma$:
\begin{equation}
\begin{split}
\mb L\dot{\mb U} &= -\mb g, \qquad \mb g = \dot{\mb a}+2(\mb U\cdot\nabla)\mb a+\mb Z[\mb U,\mb U], \\[1ex]
&\quad (\mb Z[\mb U,\mb U])_i = U_jU_k\,\partial_j\partial_kv_i .
\end{split}
\label{eq:Bsecondlaw}
\end{equation}
We take $U_T=0$; the tangential velocity drops out of the solvability conditions, as it must, by $\mb L\mb t=\mb 0$ and its $s$-derivative. With $U_2$ fixed by the first-order equation and $u:=U_1$, the components of $\mb g$ are the quadratics
\be
g_\alpha(u)=Z_{\alpha11}u^2+2\big(a_{\alpha,1}+Z_{\alpha12}U_2\big)u+\dot a_\alpha+2a_{\alpha,2}U_2+Z_{\alpha22}U_2^2 .
\label{eq:Bg}
\ee

\begin{proposition}[Nilpotent block with $m_1=0$]\label{prop:rank1}
Let $\mc C$ be an arc with $\Delta=0$, $\mb K\neq\mb0$ and $m_1\equiv0$.
\begin{itemize}
\item[(i)] \emph{First order.} $\mb L\mb U=-\mb a$ is solvable if and only if $a_2=0$ and $\Sigma_B=0$. Then $U_2=-a_1/\zeta$ and $U_1$ is undetermined.
\item[(ii)] \emph{Loss of the line: pairs of points at distance $\sqrt t$.} Let $Z_{211}\neq0$ and
\be
H_B(s):=\Sigma_B-\frac{a_2}{Z_{211}}\big(\zeta Z_{T11}-m_2Z_{111}\big).
\label{eq:BH}
\ee
Suppose $H_B$ has a simple zero at $s_*$ with $a_2(s_*)\neq0$. Then for $t$ with $t\,a_2/Z_{211}<0$ there are exactly two stagnation points near $\mb X(s_*)$,
\begin{equation}
\begin{split}
\mb X_\pm(t) &= \mb X(s_*)\pm\Big(-\frac{2t\,a_2}{Z_{211}}\Big)^{1/2}\mb e_1 \\[1ex]
&\quad -\frac{t}{\zeta}\Big(a_1-\frac{a_2Z_{111}}{Z_{211}}\Big)\mb e_2+\sigma_\pm t^{1/2}\mb t+O(t),
\end{split}
\label{eq:Bpairs}
\end{equation}
where the constants $\sigma_\pm$ (a tangential shift of the same order $t^{1/2}$) are fixed by third-order data, and there are none for the opposite sign of $t$.
\item[(iii)] \emph{If $a_2\equiv0$ but $\Sigma_B\not\equiv0$.} If $\Sigma_B$ has simple zeros $s_j$, then near each there are points $\mb X(s_j)+t(\xi\mb e_1-(a_1/\zeta)\mb e_2)+O(t^2)$, one for each real simple root $\xi$ of $g_2(\xi)=0$ (two if the discriminant of $g_2$ is positive, none if it is negative).
\item[(iv)] \emph{Persistence.} If the line persists smoothly on a time interval, then $a_2\equiv0$, $\Sigma_B\equiv0$, and its speed $u=U_1$ satisfies
\be
g_2(u)=0,\qquad \zeta\,g_T(u)-m_2\,g_1(u)=0 .
\label{eq:Bpersist}
\ee
The first is a quadratic for $u$ with roots
\begin{equation}
\begin{split}
u_\pm &= \frac{-B\pm\sqrt{B^2-AC}}{A}, \qquad A=Z_{211}, \\[1ex]
B &= a_{2,1}+Z_{212}U_2, \qquad C=\dot a_2+2a_{2,2}U_2+Z_{222}U_2^2,
\end{split}
\label{eq:u_quadratic}
\end{equation}

which are real if and only if $B^2\geq AC$ (when $B^2=AC$ the two branches merge, the line analogue of a fold). The second equation in \eqref{eq:Bpersist} is a compatibility condition on the driving, one more scalar condition per point.
\end{itemize}
\end{proposition}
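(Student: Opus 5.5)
The plan is to work throughout in the basis of \cref{lem:nil}, where $\mb L\mb e_1=\mb 0$, $\mb L\mb e_2=\zeta\mb e_1+m_2\mb t$ and $\mb L\mb t=\mb 0$.

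For (i) I would write $\mb L\mb U_n=\zeta U_2\,\mb e_1+m_2U_2\,\mb t$ and split $\mb L\mb U=-\mb a$ into three components. These are $\zeta U_2=-a_1$, $0=-a_2$ and $m_2U_2=-a_T$. The first gives $U_2=-a_1/\zeta$. Substituting into the third gives $\zeta a_T-m_2a_1=\Sigma_B=0$. $U_1$ never appears, which is \cref{prop:rank} with $r=1$.

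For (ii) I would use the expansion \eqref{eq:Bsecond} at a point $\mb X(s_*+\varsigma)+\delta_1\mb e_1+\delta_2\mb e_2$, with the scalings $\delta_1=t^{1/2}\xi$, $\varsigma=t^{1/2}\sigma$ and $\delta_2=O(t)$. The $\mb e_2$-component is $v_2=t\,a_2+\tfrac12Z_{211}\delta_1^2+O(t^{3/2})$, because $\mb L$ produces no $\mb e_2$-component. Setting $v_2=0$ at leading order $t$ gives $\xi^2=-2a_2/Z_{211}$, so there are two roots when $t\,a_2/Z_{211}<0$ and none otherwise.

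Next, $v_1=t\,a_1+\zeta\delta_2+\tfrac12Z_{111}\delta_1^2+\dots$ determines $\delta_2$ at order $t$. Using $\delta_1^2=-2ta_2/Z_{211}$, this gives the $\mb e_2$-coefficient in \eqref{eq:Bpairs}.

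The tangential equation is $v_T=t\,a_T+m_2\delta_2+\tfrac12Z_{T11}\delta_1^2+\dots$. Multiplying it by $\zeta$ and eliminating $\delta_2$ through $v_1=0$ gives
\[
t\big[\Sigma_B-(a_2/Z_{211})(\zeta Z_{T11}-m_2Z_{111})\big]=t\,H_B(s),
\]
up to higher order. The $\varsigma$-dependence then enters through $H_B'(s_*)\varsigma$. The main obstacle is the bookkeeping here: the leading order of $v_T$ is $t$, while $\varsigma$ shifts it only by $t^{3/2}H_B'\sigma$.

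I expect this mismatch to force the $O(t)$ residual to vanish, i.e.\ $H_B(s_*)=0$. The $t^{3/2}$ terms, which are odd in $\xi$ and collect the mixed derivatives $a_{\alpha,1}$, $\partial_sZ$ and third-order data, then fix $\sigma_\pm$ linearly, using $H_B'(s_*)\neq0$. Existence and uniqueness follow from the implicit function theorem applied to the rescaled system in $(\xi,\sigma,\delta_2/t)$. Its Jacobian is block-triangular, with diagonal entries $Z_{211}\xi_\pm\neq0$ (by $a_2(s_*)\neq0$), $\zeta\neq0$, and a $\sigma$-entry proportional to $H_B'(s_*)\neq0$. Away from $s_*$, the residual $tH_B\neq0$ excludes further zeros.

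For (iii), with $a_2\equiv0$, I would repeat the argument with $\delta_1=t\xi$ instead. Then $v_2$ is $O(t^2)$ and equals $\tfrac12t^2g_2(\xi)$ after inserting $\delta_2=-ta_1/\zeta$, so its simple real roots give the points. The tangential residual again reduces to $t\Sigma_B$, which localizes the points at the simple zeros $s_j$ as in \cref{prop:break}.

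For (iv), persistence gives (i) at every time. The second-order law \eqref{eq:Bsecondlaw}, with $U_T=0$, is solvable if and only if $\mb g$ lies in the range of $\mb L$. By the same component reasoning as in (i), that means $g_2=0$ and $\zeta g_T-m_2g_1=0$. With $U_2$ fixed, $g_2$ is the quadratic \eqref{eq:Bg} in $u$, and the quadratic formula gives \eqref{eq:u_quadratic}.
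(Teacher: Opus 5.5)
Your proposal is correct. Parts (i), (iii) and (iv) follow the paper's proof essentially line by line: the component split of $\mb L\mb U_n=-\mb a$, the scaling $\delta_1=t\xi$ giving $v_2=\tfrac12t^2g_2(\xi)$, and the left null vectors $\mb e_2$ and $\zeta\mb t-m_2\mb e_1$ applied to \eqref{eq:Bsecondlaw}.

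In (ii) you set up the implicit-function step differently from the paper.

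The paper keeps the arclength $s$ as an $O(1)$ unknown. With $\delta_1=t^{1/2}\xi$ and $\delta_2=t\eta$, the leading-order system in $(\xi,\eta,s)$ already contains $H_B(s)$, and its Jacobian has determinant proportional to $Z_{211}\xi\,H_B'(s_*)\neq0$. The implicit function theorem in $t^{1/2}$ then gives $s-s_*=O(t^{1/2})$, and hence $\sigma_\pm$, without looking at any higher-order term.

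You instead pre-scale $\varsigma=t^{1/2}\sigma$, as the paper does in \cref{rem:fold}, so at leading order nothing depends on $\sigma$. Your block-triangular Jacobian, with $\sigma$-entry $H_B'(s_*)$, is then valid only after two steps:
\begin{itemize}
\item replace the tangential equation by the combination $\zeta v_T-m_2v_1-\big[(\zeta Z_{T11}-m_2Z_{111})/Z_{211}\big]v_2$, with coefficients taken at $s$; this vanishes identically at $O(t)$ once $H_B(s_*)=0$;
\item divide that combination by $t^{3/2}$.
\end{itemize}
Equivalently, solve $v_1=v_2=0$ first and divide the remaining $v_T$ by $t^{3/2}$. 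If you apply the theorem to $v_T/t$ itself, the $\sigma$-column vanishes at $t=0$ and the Jacobian is singular, so you should state this step explicitly.

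In exchange, your route gives $\sigma_\pm=-R(\xi_\pm,\eta)/H_B'(s_*)$. Here $R$ is built from $a_{\alpha,1}$, $Z_{\alpha12}$ and third derivatives of $\mb v$, and it is odd in $\xi$ because $\eta$ depends on $\xi$ only through $\xi^2$. Hence $\sigma_-=-\sigma_+$: the pair is symmetric about $\mb X(s_*)$ to order $t^{1/2}$, which the paper does not state.
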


\begin{proof}
(i) With $m_1=0$ the components of $\mb L\mb U_n=-\mb a$ are $\zeta U_2=-a_1$, $0=-a_2$ and $m_2U_2=-a_T$; eliminating $U_2$ gives $\zeta a_T-m_2a_1=\Sigma_B=0$.

(ii) By \eqref{eq:Bsecond}, with $\gb\delta=(\delta_1,\delta_2)$,
\begin{align*}
v_1&=t\,a_1+\zeta\delta_2+\tfrac12Z_{1\beta\gamma}\delta_\beta\delta_\gamma+\dots,\\
v_2&=t\,a_2+\tfrac12Z_{2\beta\gamma}\delta_\beta\delta_\gamma+\dots,\\
v_T&=t\,a_T+m_2\delta_2+\tfrac12Z_{T\beta\gamma}\delta_\beta\delta_\gamma+\dots .
\end{align*}
There is no linear term in $\delta_1$ in any component. Set $\delta_1=t^{1/2}\xi$, $\delta_2=t\eta$. To leading order the three equations in $(\xi,\eta,s)$ are
\begin{align}
a_2+\tfrac12Z_{211}\xi^2=0,\quad a_1+\zeta\eta+\tfrac12Z_{111}\xi^2=0,\\ a_T+m_2\eta+\tfrac12Z_{T11}\xi^2=0 .
\end{align}
The first gives $\xi^2=-2a_2/Z_{211}$, which is real for the stated sign of $t$. The second gives $\eta$. Substituting both into the third and multiplying by $\zeta$ gives $H_B(s)=0$. The Jacobian with respect to $(\xi,\eta,s)$ is non-singular when $\xi\neq0$ and $H_B'(s_*)\neq0$, so the implicit function theorem in the variable $t^{1/2}$ gives two smooth branches, $\xi=\pm|\xi|$, and \eqref{eq:Bpairs}.

(iii) If $a_2\equiv0$, take $\delta_1=t\xi$. Then $v_2=\tfrac12t^2g_2(\xi)+O(t^3)$ by \eqref{eq:Bg}, while $v_T=t\,\Sigma_B/\zeta+O(t^2)$ on the curve where $v_1=0$, so the points are at the zeros of $\Sigma_B$ and the roots of $g_2$.

(iv) The first-order conditions are (i). Equation \eqref{eq:Bsecondlaw} is solvable only if $\mb w\cdot\mb g=0$ for every $\mb w$ orthogonal to the range of $\mb L$. These $\mb w$ are $\mb e_2$ and $\zeta\mb e_T-m_2\mb e_1$, which give \eqref{eq:Bpersist}.
\end{proof}

\emph{Meaning.} The normal velocity $-\mb K\mb a_n/\Delta$ of \cref{prop:law} diverges as $\Delta\to0$ because the restoring flow weakens. Here it is resolved: a small acceleration across the shear direction cannot be balanced by the linear flow at all (since $a_2$ has no linear counterpart), so the quadratic terms must balance it, which costs a displacement of order $t^{1/2}$ along $\mb e_1$, the direction in which the linear flow vanishes. Because $H_B$ is a function of arclength, the result is isolated points, not a moving line. They exist for one sign of $t$ and not for the other, so stagnation points are born or annihilated in pairs at $t=0$. This is the line analogue of the saddle--node event for an isolated stagnation point. A line that really persists needs, in addition to the two first-order conditions, a real root of the quadratic and the compatibility condition \eqref{eq:Bpersist}.

\subsection{Rank-one lines: $\mb K=\mb 0$, $\mb m\neq\mb 0$}

Orient the basis so that $\mb m=|\mb m|\mb e_1$. Then $\mb L\mb e_1=|\mb m|\mb t$, $\mb L\mb e_2=\mb 0$, and $\mb L=|\mb m|\,\mb t\otimes\mb e_1$. The linear flow is purely an axial shear with no normal component, and vanishes on the plane $\delta_1=0$.

\begin{proposition}[$\mb K=\mb 0$, $\mb m\neq\mb 0$]\label{prop:K0}
\begin{itemize}
\item[(i)] $\mb L\mb U=-\mb a$ is solvable if and only if $a_1=a_2=0$. Then $U_1=-a_T/|\mb m|$ and $U_2$ is undetermined.
\item[(ii)] Let $Z_{122}\neq0$ and $H_C(s):=a_2-a_1Z_{222}/Z_{122}$. If $H_C$ has a simple zero at $s_*$ with $a_1(s_*)\neq0$, then for $t\,a_1/Z_{122}<0$ there are exactly two stagnation points near $\mb X(s_*)$,
\begin{align}
\mb X_\pm(t)&=\mb X(s_*)\pm\Big(-\frac{2t\,a_1}{Z_{122}}\Big)^{1/2}\mb e_2 \nonumber \\ &-\frac{t}{|\mb m|}\Big(a_T-\frac{a_1Z_{T22}}{Z_{122}}\Big)\mb e_1+\sigma_\pm t^{1/2}\mb t+O(t),
\end{align}
and none for the opposite sign of $t$.
\item[(iii)] If the line persists smoothly, then with $u=U_2$ and $U_1=-a_T/|\mb m|$ both quadratics
\[
g_\alpha(u)=Z_{\alpha22}u^2+2\big(a_{\alpha,2}+Z_{\alpha12}U_1\big)u+\dot a_\alpha+2a_{\alpha,1}U_1+Z_{\alpha11}U_1^2,
\]
($\alpha=1,2$) 
must vanish: two equations for one unknown, hence one compatibility condition per point.
\end{itemize}
\end{proposition}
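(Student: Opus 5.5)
The plan mirrors the proof of \cref{prop:rank1}, with the roles of $\mb e_1$ and $\mb e_2$ exchanged. With $\mb m=|\mb m|\mb e_1$ and $\mb K=\mb 0$ we have $\mb L=|\mb m|\,\mb t\otimes\mb e_1$. Hence $\mb L\mb U_n=|\mb m|U_1\,\mb t$. Part (i) is then immediate: the normal components of $\mb L\mb U=-\mb a$ read $0=-a_1$ and $0=-a_2$, which are the two solvability conditions of \cref{prop:rank}. The tangential component gives $|\mb m|U_1=-a_T$, and $U_2$ drops out entirely, so it is undetermined.

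For (ii) I will use the extended Taylor expansion \eqref{eq:Bsecond}. The components are
\[
v_1=t\,a_1+\tfrac12Z_{1\beta\gamma}\delta_\beta\delta_\gamma+\dots,\qquad
v_2=t\,a_2+\tfrac12Z_{2\beta\gamma}\delta_\beta\delta_\gamma+\dots,
\]
\[
v_T=t\,a_T+|\mb m|\delta_1+\tfrac12Z_{T\beta\gamma}\delta_\beta\delta_\gamma+\dots .
\]
No component contains a term linear in $\delta_2$. I therefore scale $\delta_2=t^{1/2}\xi$, $\delta_1=t\eta$, and keep $s$ free. At order $t$, terms involving $\delta_1$ in the quadratic parts are $O(t^{3/2})$ or smaller, so the leading equations are
\[
a_1+\tfrac12Z_{122}\xi^2=0,\qquad a_2+\tfrac12Z_{222}\xi^2=0,\qquad a_T+|\mb m|\eta+\tfrac12Z_{T22}\xi^2=0 .
\]
The first equation gives $\xi^2=-2a_1/Z_{122}$, which is real exactly when $t\,a_1/Z_{122}<0$ (after restoring the sign of $t$). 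Substituting this into the second equation gives $H_C(s)=0$, which fixes $s=s_*$ to leading order. The third equation then gives $\eta=-(a_T-a_1Z_{T22}/Z_{122})/|\mb m|$, which is the stated $\mb e_1$ displacement.

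To upgrade this to exactly two points, I will apply the implicit function theorem in the small parameter $t^{1/2}$ to the rescaled map $(\xi,\eta,s)\mapsto(v_1,v_2,v_T)/t$. Its Jacobian at the leading-order solution is block triangular. The entry $\partial_\xi v_1/t=Z_{122}\xi$ is nonzero because $a_1(s_*)\neq0$ forces $\xi\neq0$. In the $v_2$ row, after eliminating $\xi^2$ using the first equation, the $s$-derivative is $H_C'(s_*)\neq0$. Finally, $\partial_\eta v_T/t=|\mb m|\neq0$. This yields two smooth branches $\xi=\pm|\xi|$, and the $O(t^{1/2})$ corrections in $s$ produce the tangential shifts $\sigma_\pm t^{1/2}\mb t$.

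The main obstacle is the bookkeeping at the next order. The correction to $s$ is $O(t^{1/2})$ and is fed by third-order terms (for instance $Z_{\cdot22}$ varying with $s$, and cubic terms in $\xi$). I will argue only that these enter as regular perturbations in $t^{1/2}$, so $\sigma_\pm$ is well defined but fixed by third-order data, and I will not compute it. Uniqueness, and the absence of other zeros nearby, follows because any zero must have $\delta_2=O(t^{1/2})$ and $\delta_1=O(t)$; otherwise $v_1$ or $v_T$ cannot vanish. For the opposite sign of $t$ the first equation has no real root, so there are no points.

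For (iii) I will differentiate $\mb v(\mb X(\sigma,t),t)=\mb 0$ twice, as in \eqref{eq:Bsecondlaw}, taking $U_T=0$, $U_1=-a_T/|\mb m|$ and $u=U_2$. The left null space of $\mb L$ is spanned by $\mb e_1$ and $\mb e_2$. So solvability of $\mb L\dot{\mb U}=-\mb g$ requires $g_1=g_2=0$, where $g_\alpha$ is the expansion of $\dot a_\alpha+2(\mb U\cdot\nabla)a_\alpha+\mb Z[\mb U,\mb U]_\alpha$ in $u$. These are two quadratics in the single unknown $u$, which gives one compatibility condition per point, as claimed.
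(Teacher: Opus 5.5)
Your proposal is correct and follows the paper's proof essentially step for step. Part (i) comes from $\mb L\mb U_n=|\mb m|U_1\mb t$; part (ii) uses the scaling $\delta_2=O(t^{1/2})$, $\delta_1=O(t)$ and elimination of $\delta_2^2$ between the $v_1$ and $v_2$ equations to get $H_C(s)=0$; part (iii) uses the left null space $\{w_T=0\}$ of $\mb L$. Your explicit triangular Jacobian (with determinant proportional to $|\mb m|\,Z_{122}\,\xi\,H_C'(s_*)$) and your a priori bound ruling out other nearby zeros fill in what the paper leaves to ``as in \cref{prop:rank1}(ii)''.
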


\begin{proof}
(i) $\mb L\mb U_n=|\mb m|U_1\mb t=-\mb a$. (ii) Since $\mb K=\mb 0$, $v_1=t\,a_1+\tfrac12Z_{1\beta\gamma}\delta_\beta\delta_\gamma+\dots$, $v_2=t\,a_2+\tfrac12Z_{2\beta\gamma}\delta_\beta\delta_\gamma+\dots$ and $v_T=t\,a_T+|\mb m|\delta_1+\tfrac12Z_{T\beta\gamma}\delta_\beta\delta_\gamma+\dots$. The last gives $\delta_1=O(t)$, so $\delta_2=O(t^{1/2})$ dominates $v_1,v_2$: $a_1+\tfrac12Z_{122}\delta_2^2/t=0$ and $a_2+\tfrac12Z_{222}\delta_2^2/t=0$. Eliminating $\delta_2^2$ gives $H_C(s)=0$, and the rest follows as in \cref{prop:rank1}(ii). (iii) The left null space of $\mb L$ is $\{\mb w:w_T=0\}$, so $g_1=g_2=0$ in \eqref{eq:Bsecondlaw}.
\end{proof}

\subsection{Rank zero and further degeneracy}

If $\mb K=\mb 0$ and $\mb m=\mb 0$ (in incompressible potential flow, any point with $\ep=0$, \cref{rem:potdelta0}), then $\nabla\mb v=\mb 0$ on the line at that point and all three components of $\mb a$ must vanish for a first-order line. Otherwise the displacement is of order $t^{1/2}$ in the normal plane, set by the quadratic system $t\,a_\alpha+\tfrac12Z_{\alpha\beta\gamma}\delta_\beta\delta_\gamma=0$ together with the axial direction; we do not work this out here. Where the leading quadratic coefficients themselves vanish ($Z_{211}=0$ in \cref{prop:rank1}, or $Z_{122}=0$ in \cref{prop:K0}), one has to go to cubic terms, and the exponents change again ($t^{1/3}$, \dots). The general pattern is that each drop in the rank of $\mb L$ adds one solvability condition and one free direction at first order, and the free direction is fixed at the next order in $t^{1/2}$ by a polynomial equation whose real roots (fold structure) decide whether the line persists, breaks into pairs of points, or disappears.

\section{Limitations and outlook}\label{sec:outlook}

The first-order theory fails where $\Delta\to0$ (\cref{prop:law}), and the motion law is singular there; the local structure at $\Delta=0$ is analysed in \cref{sec:delta0}, but its global bifurcations are not. The converse of \cref{cor:potbreak}, namely which drivings maintain a stagnation line of arbitrary shape, is open.  \Cref{sec:comp} is restricted to smooth, irrotational, homentropic flow. Transonic flows with shocks are excluded. A shock produces entropy and vorticity, so the potential model and Bernoulli's equation fail across it, and the steady flow equation is of mixed elliptic--hyperbolic type, so $\mb a$ on the line is no longer fixed by an instantaneous elliptic problem. The law $\mb L\mb U=-\mb a$ itself still holds for an Euler flow, with $\mb a=-\nabla p/\rho$ on the line, but the potential-flow consequences do not. Viscous compressible flow is not treated. The results are kinematic: they say what must happen given $\mb a$, not which flows realise a given $\mb a$. The vortical flows for which this is most interesting are stagnation lines of centre type, such as a vortex line in a weak strain (for which $\ep<|\omega_T|/2$), and vortex filaments near a stagnation line; there the pressure must be computed from the Poisson equation together with the Biot--Savart law, so that $\Sigma=0$ becomes a statement about $\partial_sp$ for a flow with its own dynamics. A numerical check of \cref{prop:break} is given in \cref{sec:nsex}.

As noted in the introduction, the zero set of a unidirectional flow $\mb v=u(y,z,t)\,\mb e_x$ is a surface, not a line. A systematic treatment of stagnation surfaces, covering the motion law, the loss of a surface, the degenerate case $\mb w=\mb 0$ and an exact example with a numerical check, will be addressed in a forthcoming paper.

\section{Conclusions}\label{sec:concl}

The motion of a stagnation curve in an incompressible flow is governed by $\mb L\mb U=-\mb a$: the normal velocity is $-\mb K^{-1}\mb a_n$ and the single scalar solvability condition $\Sigma=0$ is the only constraint on the driving. Where $\Sigma\neq0$ the line is replaced, to first order in time, by isolated stagnation points at the zeros of $\Sigma$, which in potential flow are the critical points of $\varphi_t$ on the line and in Stokes flow need not exist. Persistence by symmetry is sufficient but not necessary. The Navier--Stokes example shows that a saddle line and a centre line can split into four points per period, with the first-order predictions confirmed numerically. For inviscid, irrotational, homentropic compressible flow the same structure holds with $\mathrm{tr}\,\mb K=\varphi_{tt}/c^2$ on the line: lines may be of node type, persistence needs a uniform enthalpy, and the breaking law $\lambda_T=t\,\eta''$ is unchanged. The analogous problem for stagnation surfaces, which occur in unidirectional shear flows and are not covered by the present line theory, will be addressed in future work.

The results are of use wherever the structure of the velocity field near a stagnation set matters. In Lagrangian terms, fluid near a stagnation line follows the normal-plane flow $\dot{\gb\delta}=\mb K\gb\delta$ (\cref{sec:grad}): a saddle-type line stretches material exponentially at the rate $\Delta^{1/2}$ and organises the material surfaces along which fluid is drawn in and expelled, whereas a centre-type line holds the fluid circulating around it. Whether such a line persists, moves or is replaced by isolated points therefore decides whether these structures survive, and the law $\mb L\mb U=-\mb a$ predicts where to look for them at a later time. The points that replace a line carry a weak axial strain, $\lambda_T=O(t)$ (\cref{sec:loss}), so fluid is stretched only slowly along the former line. The same law can serve as a predictor when stagnation or vortex-core lines are tracked in numerical or experimental data, and the breaking condition ($\Sigma\neq0$) indicates when such a tracked line should be expected to fragment. For flows driven by moving sources or point forces, \cref{sec:closure} gives the motion of the stagnation set explicitly in terms of the motion of the singularities. Finally, idealised planar, radial and helical stagnation-point flows owe their stagnation lines to symmetry; \cref{prop:break} states what replaces such a line, in number, position and type, when the symmetry is broken, and so identifies which conclusions drawn from symmetric models are robust and which are not.

\section*{Author Declarations}
\subsection*{Conflict of Interest}
The author has no conflicts to disclose.

%\subsection*{Author Contributions}
%\textbf{Prabakaran Rajamanickam:} Conceptualization (lead); Formal analysis (lead); Investigation (lead); Methodology (lead); Software (lead); Validation (lead); Visualization (lead); Writing -- original draft (lead); Writing -- review \& editing (lead).

\section*{Data Availability Statement}
The data that support the findings of this study are available from the corresponding author upon reasonable request.

\appendix

\section{Helical stagnation-point flow}\label{app:helix}

This appendix gives the steady helically symmetric potential flow whose stagnation set is a helix. It is the third member of the family of orbit-type stagnation lines of \cref{sec:pot} (straight, circular, helical), and it supplies the notation $(\mb e_r,\hchi)$, $\Lambda$ used in \eqref{eq:helix}. Throughout, $\mb v=\nabla\varphi$ is steady and incompressible, $\nabla^2\varphi=0$, and $\mb E=\nabla\nabla\varphi$.

\subsection{Reference flows: straight and circular stagnation lines}

With Cartesian coordinates $(x,y,z)$, the planar stagnation-point flow with a stagnation line along the $z$-axis is
\[
\varphi=\tfrac12\ep\,(x^2-y^2),\qquad \mb v=\ep\,(x\,\mb e_x-y\,\mb e_y),
\]
which is exact at all distances.

In cylindrical polar coordinates $(r,\theta,z)$, take an axisymmetric flow with a stagnation circle $r=R$, $z=0$ whose saddle has axes $\mb e_r$ and $\mb e_z$, with uniform axial strain $v_z=-\ep z$. 

\emph{Step 1.} Incompressibility $r^{-1}\partial_r(rv_r)+\partial_zv_z=0$ gives $r^{-1}\partial_r(rv_r)=\ep$, so $rv_r=\tfrac12\ep r^2+\text{const}$.

\emph{Step 2.} The condition $v_r(R,z)=0$ fixes the constant, $-\tfrac12\ep R^2$:
\be
\mb v=\ep\Big[\tfrac12\Big(r-\frac{R^2}{r}\Big)\mb e_r-z\,\mb e_z\Big],\qquad \varphi=\tfrac14\ep\big(r^2-2R^2\ln r\big)-\tfrac12\ep z^2 .
\label{eq:circle}
\ee
Since $\partial_zv_r=\partial_rv_z=0$ the azimuthal vorticity $\partial_zv_r-\partial_rv_z$ vanishes and the flow is irrotational. The term $-\ep R^2/(2r)$ in $v_r$ is a line sink on the axis of strength $\pi\ep R^2$ per unit axial length (for $\ep>0$), so the axis is singular; and for $r,|z|\gg R$ the flow tends to the axisymmetric stagnation-point flow $\ep(\tfrac12 r\,\mb e_r-z\,\mb e_z)$ of Homann type, with strain eigenvalues $(\ep/2,\ep/2,-\ep)$.

\subsection{General helically symmetric solution}

A right-handed helix of radius $r_0$ and pitch $2\pi/k$ ($k>0$) is the curve $r=r_0$, $\theta=kz+\text{const}$. A flow with this helical symmetry depends on $\theta$ and $z$ only through $\chi=\theta-kz$, so $\varphi=F(r,\chi)$. 

\emph{Step 1 (Laplace's equation).} In cylindrical coordinates $\nabla^2\varphi=F_{rr}+r^{-1}F_r+r^{-2}\partial_\theta^2F+\partial_z^2F$. Since $\partial_\theta F=F_\chi$ and $\partial_zF=-kF_\chi$, we have $\partial_\theta^2F=F_{\chi\chi}$ and $\partial_z^2F=k^2F_{\chi\chi}$, so
\be
F_{rr}+\frac1rF_r+\Lambda_r^2F_{\chi\chi}=0,\qquad \Lambda_r^2=\frac1{r^2}+k^2 .
\label{eq:helixLap}
\ee
\emph{Step 2 (separation).} We look for solutions even in $\chi$ (reflection symmetry about the helix $\chi=0$). The $\chi$-independent solutions are $d_0\ln r$ and a constant. For $F=f(r)\cos m\chi$, $m\ge1$, \eqref{eq:helixLap} gives $f''+f'/r-m^2(r^{-2}+k^2)f=0$. With $\varsigma=mkr$ and $f(r)=g(\varsigma)$ we have $f'=mk\,g'$, $f''=m^2k^2g''$ and $f'/r=m^2k^2g'/\varsigma$, and the equation becomes $g''+g'/\varsigma-(1+m^2/\varsigma^2)g=0$, the modified Bessel equation of order $m$. Hence
\be
\varphi=d_0\ln r+\sum_{m=1}^\infty\big[c_mI_m(mkr)+d_mK_m(mkr)\big]\cos m\chi .
\ee
\emph{Step 3 (growth condition).} The $m$th Fourier coefficient in $\chi$ is bounded by $\max_\chi|\varphi|$. As $kr\to\infty$, $I_m(mkr)\sim e^{mkr}/\sqrt{2\pi mkr}$ grows exponentially while $K_m(mkr)\sim\sqrt{\pi/(2mkr)}\,e^{-mkr}$ decays. Requiring at most polynomial growth therefore forces every $c_m=0$, and
\be
\varphi=d_0\ln r+\sum_{m=1}^\infty d_m\,K_m(mkr)\cos m\chi .
\label{eq:helixsol}
\ee
The flow decays at large distance, $\mb v\to(d_0/r)\mb e_r$, and is singular on the axis $r=0$: $d_0\ln r$ is a line source of volume flux $2\pi d_0$ per unit axial length (the same term as in \eqref{eq:circle}, where $d_0=-\ep R^2/2$), and since $K_m(\varsigma)\sim\tfrac12(m-1)!\,(2/\varsigma)^m$ as $\varsigma\to0$, the $m$th term behaves like $r^{-m}\cos m\chi$, a helical multipole line singularity. Because $\varphi\equiv0$ when all $d_m$ vanish, \emph{a stagnation helix in a flow of polynomial growth requires singularities on the axis}.

\emph{Step 4 (velocity).} From $\nabla\psi=\psi_r\mb e_r+\psi_\chi(\mb e_\theta/r-k\mb e_z)$ (\cref{sec:persym}),
\be
\begin{aligned}
v_r&=\frac{d_0}{r}+k\sum_{m=1}^\infty m\,d_m\,K_m'(mkr)\cos m\chi,\\
v_\theta&=-\frac1r\sum_{m=1}^\infty m\,d_m\,K_m(mkr)\sin m\chi,\\
v_z&=k\sum_{m=1}^\infty m\,d_m\,K_m(mkr)\sin m\chi,
\end{aligned}
\label{eq:helixu}
\ee
with $K_m'(\varsigma)=\mathrm dK_m/\mathrm d\varsigma$. Note that $v_z+krv_\theta=0$ identically, that is, $\mb v\cdot\gb\xi=0$ for the screw field $\gb\xi=\mb e_z+kr\,\mb e_\theta$, as required by \cref{cor:sym}.

\subsection{Condition for a stagnation helix}

Every term of $v_\theta$ and $v_z$ in \eqref{eq:helixu} contains $\sin m\chi$, so both vanish on $\chi=0$ for every $r$. The curve $r=r_0$, $\chi=0$ is therefore a stagnation helix if and only if $v_r(r_0,0)=0$:
\be
\frac{d_0}{r_0}+k\sum_{m=1}^\infty m\,d_m\,K_m'(mkr_0)=0 .
\label{eq:helixcond}
\ee
Since $K_m'<0$ on $(0,\infty)$, a mode $m\ge1$ contributes with the sign opposite to that of $d_m$. Either $d_0\ne0$, and a single mode with $d_m$ of the sign of $d_0$ can balance the source, or $d_0=0$ and at least two modes with coefficients of opposite sign are needed.

\subsection{Local structure}

Set $\xi=r-r_0$ and $F_0=F(r_0,0)$. At the helix $\nabla\varphi=\mb0$: $F_r(r_0,0)=0$ by \eqref{eq:helixcond}, $F_\chi(r_0,0)=0$ because $F_\chi\propto\sin m\chi$, and $F_{r\chi}(r_0,0)=0$ because $F$ is even in $\chi$. The leading terms are quadratic. Using \eqref{eq:helixLap} at $(r_0,0)$ with $F_r=0$,
\begin{equation}
\begin{split}
F_{\chi\chi}(r_0,0) &= -\sum_{m=1}^\infty m^2d_mK_m(mkr_0), \\
F_{rr}(r_0,0) &= -\Lambda^2F_{\chi\chi}(r_0,0),
\end{split}
\label{eq:helix_relations}
\end{equation}
so the principal strain rate is
\begin{equation}
\begin{split}
\ep &= F_{rr}(r_0,0)=\Lambda^2\sum_{m=1}^\infty m^2\,d_m\,K_m(mkr_0), \\[1ex]
\Lambda &= \Big(\frac1{r_0^2}+k^2\Big)^{1/2}=|\nabla\chi| .
\end{split}
\label{eq:helixeps}
\end{equation}

Near the helix the phase $\chi$ is, to first order, $\Lambda$ times the distance $\eta$ measured along
\be
\hchi=\frac{\nabla\chi}{|\nabla\chi|}=\frac{\mb e_\theta/r_0-k\,\mb e_z}{\Lambda},
\ee
since $\nabla\chi=\Lambda\hchi$. Then, to second order in the distance $\ell=(\xi^2+\eta^2)^{1/2}$ from the helix, with $\chi=\Lambda\eta$ and $F_{\chi\chi}\Lambda^2=-F_{rr}=-\ep$,
\begin{equation}
\begin{split}
\varphi &= F_0+\tfrac12\ep\,(\xi^2-\eta^2)+O(\ell^3), \\[1ex]
\mb v &= \ep\,(\xi\,\mb e_r-\eta\,\hchi)+O(\ell^2).
\end{split}
\label{eq:helixlocal}
\end{equation}

This is a planar saddle in the normal plane: fluid is drawn in along $\hchi$ and expelled along $\mb e_r$ if $\ep>0$ (the reverse if $\ep<0$). The axis singularities enter $\ep$ only through the modal coefficients $d_m$ and the stagnation condition \eqref{eq:helixcond}; the source strength $d_0$ does not appear in \eqref{eq:helixeps} explicitly. The tangential eigenvalue is exactly zero (the velocity vanishes along the whole helix, so $\mb E\mb t=\mb 0$ and, by symmetry of $\mb E$, its tangential row vanishes), and
\[
\mb E=\ep\,(\mb e_r\mb e_r-\hchi\hchi),
\]
with eigenvalues $(\ep,-\ep,0)$, as in \eqref{eq:Hat} with $\mb e_1=\mb e_r$, $\mb e_2=\hchi$.

\subsection{Geometry of the stagnation helix}

Put $b=1/k$. The helix has unit tangent $\mb t=(r_0k\,\mb e_\theta+\mb e_z)/\sqrt{1+r_0^2k^2}$, curvature and torsion
\be
\kappa=\frac{r_0}{r_0^2+b^2}=\frac{r_0k^2}{1+r_0^2k^2},\qquad \tau=\frac{b}{r_0^2+b^2}=\frac{k}{1+r_0^2k^2},
\ee
principal normal $\mb n=-\mb e_r$ (towards the axis) and binormal $\mb b=\mb t\times\mb n$. Evaluating the cross product with $\mb e_\theta\times\mb e_r=-\mb e_z$ and $\mb e_z\times\mb e_r=\mb e_\theta$ gives $\mb b=-(\mb e_\theta-r_0k\,\mb e_z)/\sqrt{1+r_0^2k^2}=-\hchi$, because $\hchi=(\mb e_\theta-r_0k\,\mb e_z)/(r_0\Lambda)$ and $r_0\Lambda=\sqrt{1+r_0^2k^2}$. Both $\mb e_r$ and $\hchi$ are orthogonal to $\mb t$, and $\mb e_r\times\hchi=(\mb e_z/r_0+k\mb e_\theta)/\Lambda=\mb t$, so $(\mb e_r,\hchi,\mb t)$ is the right-handed basis of \cref{sec:kin}. The strain axes are fixed relative to the Frenet frame, hence they rotate relative to a parallel-transported frame at the rate $\tau$ per unit arclength (in the notation of \cref{sec:pot}, $\alpha=\pi$ and $\partial_s\alpha+\tau=\tau$). The expansion \eqref{eq:helixlocal} is valid for $\Lambda\ell\ll1$.

\emph{Limits.} For $k\to0$ the helix tends to a straight line parallel to the axis ($\kappa=r_0k^2\to0$, $\tau=k\to0$) and \eqref{eq:helixlocal} is the planar stagnation-point flow. For $k\to\infty$ (pitch $\to0$), $\kappa\to1/r_0$, $\tau\to0$ and $\hchi\to-\mb e_z$, so the strain axes $(\mb e_r,\hchi)$ tend to the axes $(\mb e_r,\mb e_z)$ of the radial flow \eqref{eq:circle} with $R=r_0$. The helix therefore interpolates between the two geometries with $\kappa$ and $\tau$ as independent parameters. The two flows differ globally: neighbouring turns are $2\pi/k$ apart, so \eqref{eq:helixlocal} holds only for $\ell\ll1/k$, and the linear far field of \eqref{eq:circle} is axisymmetric but not invariant under the screw motion $\theta\to\theta+\alpha$, $z\to z+\alpha/k$, so it cannot occur in \eqref{eq:helixsol}.

\begin{remark}
Terms odd in $\chi$ ($\sin m\chi$ modes) and screw-invariant terms such as a swirl $\Gamma/(2\pi r)\,\mb e_\theta$ or a uniform axial stream are excluded because they give $v_\theta\ne0$ or $v_z\ne0$ on $\chi=0$. The even solution \eqref{eq:helixsol} may have further stagnation curves, for example near $\chi=\pi$, where $\cos m\chi=(-1)^m$.
\end{remark}

\begin{remark}[Relation to helical vortex filaments]\label{rem:helixvortex}
The Bessel-function representation \eqref{eq:helixsol} is the one used for the velocity potential outside a helical vortex filament or tube~\cite{hardin1982velocity,kuibin1998self,fukumoto2005velocity}, where the odd terms $\sin m\chi$ and a circulation (swirl) term are present. The stagnation helix of this appendix is a different object. It is irrotational ($\omega_T=0$), of saddle type ($\Delta=\ep^2>0$), steady in the laboratory frame, and persists by \cref{cor:sym} because $\mb v\cdot\gb\xi=0$ everywhere. It is supported by line singularities on the axis, not by circulation. The centreline of a helical vortex filament, by contrast, is stagnant only in a screw-moving frame, carries $\omega_T\neq0$ and is of centre type when $\ep<|\omega_T|/2$. It persists by the steady-frame argument of \cref{rem:filament}, not by \cref{cor:sym}.
\end{remark}

\begin{example}[Line source and one helical mode]\label{ex:helixone}
Take $d_m=0$ for $m\ge2$. Then \eqref{eq:helixcond} reads $d_0/r_0+kd_1K_1'(kr_0)=0$, so
\[
d_1=-\frac{d_0}{r_0kK_1'(kr_0)},\qquad \ep=-\Lambda^2\,\frac{d_0\,K_1(kr_0)}{r_0k\,K_1'(kr_0)} .
\]
Since $K_1'<0$, $d_1$ and $\ep$ have the sign of $d_0$. Using $-K_1(\varsigma)/K_1'(\varsigma)\simeq\varsigma$ for $\varsigma\ll1$ (from $K_1\simeq1/\varsigma$) and $\to1$ for $\varsigma\gg1$ (from $K_1'\simeq-K_1$), and $\Lambda^2\simeq r_0^{-2}$ or $k^2$ in the two limits,
\[
\ep\simeq\frac{d_0}{r_0^2}\quad(kr_0\ll1),\qquad \ep\simeq\frac{k\,d_0}{r_0}\quad(kr_0\gg1).
\]
For $kr_0\ll1$ this is the planar flow of a line source of strength $d_0$ and a line dipole at distance $r_0$, which has a stagnation point at $r=r_0$, $\theta=0$ with strain $d_0/r_0^2$. (Check: with $K_1(kr)\simeq1/(kr)$, $\varphi\simeq d_0\ln r+(d_1/k)\cos\theta/r$, $v_r=0$ at $\theta=0$ gives $d_1/k=d_0r_0$, and $\varphi_{rr}=-d_0/r_0^2+2d_0/r_0^2=d_0/r_0^2$.)
\end{example}

\begin{remark}[Time dependence]
Let $d_m=d_m(t)$. Then \eqref{eq:helixcond} holds at each $t$ with $r_0=r_0(t)$, and differentiating $v_r(r_0(t),0,t)=0$ gives $\dot r_0=-F_{tr}/F_{rr}=-F_{tr}/\ep$, which is the first relation in \eqref{eq:helix}. The helix is an orbit of the screw motion, so by \cref{cor:sym} it persists for all $t$ (it can only change radius, never pitch).
\end{remark}

\bibliographystyle{unsrt}
\bibliography{aipsamp}% Produces the bibliography via BibTeX.

\end{document}